\pgfplotsset{compat=1.12}
\xpatchcmd{\@sect}{\uppercase}{\MakeTextUppercase}{}{}
\xpatchcmd{\@sect}{\uppercase}{\MakeTextUppercase}{}{}
\newcommand{\cO}{\mathcal{O}}
\newcommand{\dset}{\mathbf{D}}
\newcommand{\aset}{\mathbf{A}}
\newcommand{\vset}{\mathbf{X}}
\newcommand{\cset}{\mathbf{C}}
\newcommand{\lset}{\mathbf{L}}
\newcommand{\tuples}{\operatorname{tuples}}
\newcommand{\arity}{\operatorname{arity}}
\newcommand{\streams}{\operatorname{streams}}
\newcommand{\att}{\operatorname{att}}
\newcommand{\pset}{\mathcal{P}}
\newcommand{\upset}{\mathcal{U}}
\newcommand{\bpset}{\mathcal{B}}
\DeclareMathOperator{\supp}{supp}
\newcommand{\ext}[1]{\ensuremath{#1^{\operatorname{SO}}}}
\newcommand{\eext}[1]{\ensuremath{#1^{\exists}}}
\newcommand{\children}{\operatorname{children}}
\newcommand{\tree}{\text{tree}}
\DeclareMathOperator{\socel}{\text{SO-CEL}}
\DeclareMathOperator{\focel}{\text{FO-CEL}}
\DeclareMathOperator{\socelfull}{\socel\!+}
\DeclareMathOperator{\ucea}{\text{UCEA}}
\newcommand{\vdef}{\operatorname{vdef}}
\newcommand{\vdefplus}{\operatorname{vdef}_{+}}
\newcommand{\type}{\operatorname{type}}
\newcommand{\TRUE}{\mathtt{TRUE}}
\newcommand{\FALSE}{\mathtt{FALSE}}
\newcommand{\OP}{\mathtt{OP}}
\newcommand{\as}{~\mathtt{AS}~}
\newcommand{\FILTER}{~\mathtt{FILTER}~}
\newcommand{\cor}{~\mathtt{OR}~}
\newcommand{\sq}{\,;\,}
\newcommand{\ks}{+}
\newcommand{\STRICT}{\mathtt{STRICT}}
\newcommand{\cand}{~\mathtt{AND}~}
\newcommand{\call}{~\mathtt{ALL}~}
\newcommand{\cunless}{~\mathtt{UNLESS}~}
\newcommand{\cin}{~\mathtt{IN}~}
\newcommand{\strictsq}{~\mathtt{:}~}
\newcommand{\strictks}{\oplus}
\newcommand{\START}{~\mathtt{START}}
\newcommand{\sFILTER}{\mathtt{FILTER}}
\newcommand{\scor}{\mathtt{OR}}
\newcommand{\scand}{\mathtt{AND}}
\newcommand{\scall}{\mathtt{ALL}}
\newcommand{\scunless}{\mathtt{UNLESS}}
\newcommand{\WHERE}{~\mathtt{WHERE}~}
\newcommand{\AND}{~\mathtt{AND}~}
\newcommand{\scin}{\mathtt{IN}}
\newcommand{\sstrictsq}{:}
\newcommand{\sSTART}{\mathtt{START}}
\newcommand{\cdotlor}{\mathrel{\ooalign{$\lor$\cr\hidewidth\raise.225ex\hbox{$\cdot\mkern3.3mu$}\cr}}}
\newcommand{\cA}{\mathcal{A}}
\newcommand{\cR}{\mathcal{R}}
\newcommand{\cP}{\mathcal{P}}
\newcommand{\bbN}{\mathbb{N}}
\newcommand{\sem}[1]{{\lsem{}{#1}\rsem}}
\newcommand{\ssem}[1]{\sem{#1}^*}
\newcommand{\trans}[2][]{\raisebox{-1pt}[10pt][0pt]{$\overset{#2}{\underset{^{#1}}{\raisebox{0pt}[3pt][0pt]{$\relbar\mspace{-8mu}\longrightarrow$}}}$}}
\newcommand{\amark}{\bullet}
\newcommand{\umark}{\circ}
\newcommand{\run}{\operatorname{Run}}
\newtheorem{proposition}[theorem]{Proposition}
\begin{document}


\author[1]{Alejandro Grez}
\author[1]{Cristian Riveros}
\author[2]{Martín Ugarte}
\author[2]{Stijn Vansummeren}
\affil[1]{Pontificia Universidad Católica de Chile\\
	\texttt{\{ajgrez, cristian.riveros\}@uc.cl}}
\affil[2]{Université Libre de Bruxelles\\
\texttt{\{martin.ugarte, stijn.vansummeren\}@ulb.ac.be}}
\authorrunning{A.\, Grez, C.\, Riveros, M.\, Ugarte, and S.\, Vansummeren}

%



\title{A Second-Order Approach to Complex Event Recognition}
\maketitle


\begin{abstract}
Complex Event Recognition (CER for short) refers to the activity of detecting patterns in streams of continuously arriving data. This field has been traditionally approached from a practical point of view, resulting in heterogeneous implementations with fundamentally different capabilities. The main reason behind this is that defining formal semantics for a CER language is not trivial: they usually combine first-order variables for joining and filtering events with regular operators like sequencing and Kleene closure. Moreover, their semantics usually focus only on the detection of complex events, leaving the concept of output mostly unattended. Overall, this results in a lack of understanding of the expressive power of CER languages, implying also that the way in which operators are defined is sometimes arbitrary.

In this paper, we propose to unify the semantics and output of complex event recognition languages by using second order objects. Specifically, we introduce a CER language called Second Order Complex Event Logic (SO-CEL for short), that uses second order variables for managing and outputting sequences of events. This makes the definition of the main CER operators simple, allowing us to develop the first steps in understanding its expressive power. We start by comparing SO-CEL with a version that uses first-order variables called FO-CEL, showing that they are equivalent in expressive power when restricted to unary predicates but, surprisingly, incomparable in general. Nevertheless, we show that if we restrict to sets of binary predicates, then SO-CEL is strictly more expressive than FO-CEL. Then, we introduce a natural computational model called Unary Complex Event Automata (UCEA) that provides a better understanding of the expressive power, computational capabilities, and limitations of SO-CEL. We show that, under unary predicates, SO-CEL captures the subclass of UCEA that satisfy the so-called \mbox{*-property}. Finally, we identify the operations that SO-CEL is lacking to capture UCEA and introduce a natural extension of the language that captures the complete class of UCEA under unary predicates.
\end{abstract}


\section{Introduction}\label{sec:intro}

Complex Event Recognition (CER) is a key ingredient of many
contemporary Big Data applications that require the processing of
event streams in order to obtain timely insights and implement
reactive and proactive measures. Examples of such applications include
the recognition of: attacks in computer
networks~\cite{DBLP:conf/sigmod/CranorJSS03,DBLP:conf/sigmod/CranorGJSS02};
human activities in video content~\cite{DBLP:conf/cvpr/0003CNRD16}; traffic incidents in smart cities~\cite{DBLP:journals/tkde/ArtikisSP15}; and
opportunities in the stock
market~\cite{DBLP:conf/debs/KolchinskySS15}, among others.

To support the above-mentioned application scenarios, numerous CER
systems and languages have been proposed in the literature---see e.g.,
the surveys \cite{cugola2012, DBLP:conf/debs/ArtikisMUVW17} and the references therein.
As noted by Cugola and Margara~\cite{cugola2012}, however, the literature focuses mostly on the practical system aspects of CER,
resulting in many heterogeneous implementations with sometimes
fundamentally different capabilities. As a result, little work has
been done on the formal foundations of CER. Consequently, and in
contrast to the situation for relational databases, we currently lack
a common understanding of the trade-offs between expressiveness and
complexity in the design of CER languages, as well as an established
theory for optimizing CER queries. In fact, it is rare to even find a formal definition of a CER language, let alone a formal development of its theory.

Towards a better understanding of the formal foundations of CER, a
subset of the authors has recently proposed and studied a formal logic
that captures the core features found in most CER
languages~\cite{GRUpaper}. This logic, which we will call $\focel$ in
this paper, combines the regular expression operators (sequencing, to
require that some pattern occurs before another somewhere in a stream;
iteration, to recognize a pattern a number of times; and disjunction)
with data filtering features as well as limitedD data outputting
capabilities. $\focel$ follows the approach that seems to be taken by
most of the CER literature (e.g.,
\cite{DBLP:conf/edbt/DemersGHRW06,DBLP:conf/cidr/DemersGPRSW07,SASEautomata,SASEcomplexity,Cugola:2012},
see also \cite{cugola2012}) in that data filtering is supported by
binding variables to individual events in the stream, which can later
be inspected by means of one or more predicates. In this respect,
variables in $\focel$ are \emph{first order} variables, since they
bind to individual events. The first-order nature of variables in CER languages found in the literature on which
$\focel$ is based is problematic for two reasons. (1) It interacts
awkwardly with pattern iteration (i.e., Kleene closure): if a variable
is bound inside Kleene closure, what does the variable refer to when
used outside of a Kleene closure? (2) There is an inherent asymmetry
between the objects manipulated by the CER language (i.e. individual
events bound to first order variables) and the objects that are output
by the language (complex events, where one often wants to include the
set of all matched primitive events). Both of these issues cause
$\focel$, and the practical languages on which it is inspired, to have 
rather awkward variable-scoping rules and a sometimes unexpected
semantics. In a sense, the language becomes quite closely tied to a
specific evaluation model. From a query language viewpoint, this is
undesirable since it restricts the declarative nature of the language,
and hence restricts its optimization and evaluation opportunities.

In this paper, we propose to unify the mechanics of data
filtering and output by using \emph{second order variables} that bind sets of events in a stream.  This allows us to introduce a CER language with simple and intuitive semantics, which we call $\socel$. We study the expressive power of this language and its computational capabilities. Our results are as follows.
\begin{itemize}
\item We first compare $\socel$ against $\focel$ and show that they
  are equivalent in expressive power when equipped with the same unary
  predicates but, surprisingly, incomparable when equipped with
  $n$-ary predicates, $n > 1$. In particular, when equipped with sets
  of binary predicates, $\socel$ is strictly more expressive than
  $\focel$. Conversely, when equipped with sets of ternary predicates,
  the languages are incomparable.
  (Section~\ref{sec:comparison}.) 
\item To get a fundamental understanding of the expressive power,
  computational capabilities, and limitations of the basic operators
  of $\socel$ we then restrict our attention to unary predicates. We
  compare $\socel$ with such predicates against a computational model
  for detecting complex events that we call \emph{Unary Complex Event
  Automata} (UCEA for short). We show that, in this setting, $\socel$
  is strictly weaker than UCEA, but captures the subclass of UCEA that
  satisfy the so-called *-property. Intuitively, this property
  indicates that the UCEA can only make decisions based on events that
  are part of the output.  As a by-product of our development we are
  able to show that certain additional CER operators that have been
  proposed in the literature, such as $\mathtt{AND}$ and
  $\mathtt{ALL}$ are non-primitive in $\socel$ while others, such as
  $\mathtt{UNLESS}$, are primitive. (Section~\ref{sec:unary-core-expr})
\item Finally, we identify the operations that $\socel$ is lacking to
  capture UCEA and introduce a natural extension of the language that
  captures the complete class of UCEA under unary predicates. As a
  result we are also able to give insight into the
  $\mathtt{STRICT}$ selection policy that is supported by some
  CER languages. (Section~\ref{sec:capturing-ma}.)
\end{itemize}
We intuitively motivate $\socel$ in Section~\ref{sec:motivation} and give its formal definition in Section~\ref{sec:prelim}.

\medskip
\noindent \textbf{Related Work.} As already mentioned, the focus in the
majority of the CER literature is on the systems aspects of
CER rather than on the foundational aspects. A notable exception is
the work by Zhang et al on Sase$^+$~\cite{SASEcomplexity}, which
considers the descriptive complexity of a core CER language. It is
unfortunate, however, that this paper lacks a formal definition of the
language under study; and ignores in particular the aforementioned
issues related to the scoping of variables under Kleene closure, as
well as the data output capabilities.

While several automata models for CER have been proposed
before~\cite{SASEautomata,SASEcomplexity,cayuga,DBLP:conf/cidr/DemersGPRSW07,Cugola:2012},
these models are all limited in the sense that automata are required
to adhere to strict topological constraints. Often, an automaton needs
to be a single path from initial to final state, possibly with
self-loops on the nodes in the path. In addition, as shown in
\cite{GRUpaper}, there exist simple complex event patterns for which
the corresponding automata in these models are inefficient. In
contrast, our Unary Complex Event Automata do not have topological
constraints, and are inherently efficient to evaluate
(Proposition~\ref{prop:ucea:efficient}).

\newcommand{\backref}[1]{\ensuremath{\&{#1}}}

Extensions of regular expressions with data filtering capabilities
have been considered before outside of the CER context. \emph{Extended
  regular
  expressions}~\cite{DBLP:books/el/leeuwen90/Aho90,DBLP:journals/ijfcs/CampeanuSY03,DBLP:conf/lata/CarleN09,LibkinTV15}, for example, extend the classical regular expressions operating on
strings with variable binding expressions of the form $x\{e\}$
(meaning that when the input is matched, the substring matched by
regular expression $e$ is bound to variable $x$) and variable
backreference expression of the form $\backref{x}$ (referring to the
last binding of variable $x$). Variables binding expressions can occur
inside a Kleene closure, but when referred to, a variable always
refers to the last binding. Extended regular expressions differ from
$\socel$ and $\focel$ in that they operate on finite strings over a
finite alphabet rather than infinite streams over an infinite alphabet
of possible events; and use variables only to filter the input rather
than also using them to construct the output. Regular expressions with
variable bindings have also been considered in the so-called spanners
approach to information
extraction~\cite{DBLP:journals/jacm/FaginKRV15}. There, however, variables are only used to construct the output
and cannot be used to inspect the input. In addition, variable binding
inside of Kleene closures is prohibited.

Languages with second-order variables, such as monadic second order
logic (MSO), are standard in logic and
databases~\cite{libkin2013elements}.  However, to the best of our
knowledge we are not aware of any language that combines regular
operators with second-order variables as $\socel$, neither has been
proposed in the context of CER.



\section{Second-order Variables to the Rescue}\label{sec:motivation}

We dedicate this section to motivate our proposal for using second-order variables in CER, illustrating how this can enrich and simplify the syntax and semantics of a language.

CER languages usually assume that an event is a relational tuple, composed of a type and a set of attributes, and an event stream is an infinite sequence of events. Events are assumed to appear in generation order in the stream. As a running example, suppose that sensors are positioned throughout a farm to detect freezing plantations. Sensors detect temperature and humidity, generating a stream of events of two types, $T$ and $H$, both of which have a \emph{value} attribute that contains the measured temperature or humidity, respectively. An example stream of events indexed by event generation order is depicted in Figure~\ref{fig:stream}.

\begin{figure}
	\centering{
		{\small
			\begin{tabular}{|c|c|c|c|c|c|c|c|c|c|c|c|}\hline
				type  &$T$&$H$&$H$&$T$&$H$&$T$&$H$&$H$&$T$&$H$ & \ldots \\ \hline
				value & -2 & 30& 20& -1& 27& 2 & 45& 50& -2& 65 & \ldots\\ \hline
				index & 0 & 1 & 2 & 3 & 4 & 5 & 6 & 7 & 8& 9 & \ldots \\ \hline
			\end{tabular}
		}
		\caption{A stream $S$ of events measuring temperature ($T$) in Celsius degrees and humidity ($H$) as a percentage of water in the air.}\label{fig:stream}
		\vspace{-.4cm}
	}
\end{figure}

Now, what is a complex event? Most CER frameworks consider, although in general implicitly, that a complex event is a finite sequence of events that represents a relevant pattern for the user. As an example, suppose that plantations might freeze if \emph{after having a temperature below 0 degrees, there is a period where humidity increases until humidity is over $60\%$}. If we represent complex events as mappings from types to the position of relevant events of that type, in Figure~\ref{fig:stream} we would obtain for example that $[T \rightarrow \{3\},\ H \rightarrow \{4, 6, 7, 9\}]$ is a complex event matching our description of a possibly freezing plantation. Naturally, a user would like to be notified as soon as this pattern is detected (i.e. once event $9$ arrives) and receive the corresponding events to analyze them and possibly take actions. It is not hard to see that several complex events could fire the same pattern at the same time. For example, the above situation is also \emph{matched} by $[T \rightarrow \{0\},\ H \rightarrow \{2,4, 6, 7, 9\}]$, which could also be relevant for the user. This illustrates that the output of a CER pattern should be defined by a set of mappings whose image range over sets (i.e. a set of complex events).

Assigning sets of events to types can be, nonetheless, a shallow representation. Naturally, a user might want to make a distinction between events of the same type. For example the complex event $[T \rightarrow \{3\}, H \rightarrow \{4, 6, 7, 9\}]$ does not have the explicit information of which subset of $H$ corresponds to the increasing sequence of humidity measure, and which correspond to the measure above $60\%$ (although in this case this can be deduced from the pattern). A richer representation for the user could be an assignment of the form $[T \rightarrow \{3\},\ H\!S \rightarrow \{4, 6, 7\},\ H\!H \rightarrow \{9\}]$ where $H\!S$ and $H\!H$ are newly defined labels for encoding the \emph{Humidity Sequence} and \emph{High Humidity} sets of events, respectively.

The previous discussion suggests that second-order assignments should be first citizens in CER languages. Furthermore, the semantics of a CER language can be simplified if second-order variables are used for managing these objects. To motivate this, suppose that we want to declare the plantation freezing pattern above in a CER language. A first attempt could be to use a formula like $\varphi_0: = (T; H+; H)\FILTER \sigma(T, H)$, where $\sigma$ enforces that the values of the events satisfy the corresponding conditions. Intuitively, $\varphi_0$ states that we want to see a temperature ($T$) followed by one or more humidities ($H+$), and ending with a humidity ($H$), such that the $\sigma$ condition is satisfied (as it is standard in CER, here the operators ``;'' and ``+'' allow to skip over intermediate events~\cite{cugola2012}). Note that the pattern $\varphi_0$, however, is not making any explicit distinction between the captured humidities. Then, how could $\sigma$ indicate what are the conditions over the humidity sequence and the final humidity? This could easily be achieved if we had richer complex events, like the one mentioned $H\!S$ and $H\!H$. To this end, we include in our language the operator $\scin$. This allows us to rewrite $\varphi_0$ as $\varphi_1 = T; (H+ \!\!\cin H\!S); (H \cin H\!H) \FILTER \sigma(T,H\!S,H\!H)$. Now $\sigma$ has access to the second-order variables $H\!S$ and $H\!H$, making the filtering more natural.



Now, for the plantation freezing pattern, we need second-order predicates to force that (1) all temperatures in $T$ are below $0$ ($T.value < 0$), (2) the set of humidities $H\!S$ is an increasing sequence ($H\!S.incr$), and (3) the humidity in $H\!H$ is above $60\%$ ($H\!H.value \geq 60$). By combining these predicates, we can write our pattern as:
\[
\varphi \ = \  T; (H+ \!\!\cin H\!S); (H \cin H\!H) \FILTER (T.value < 0 \wedge H\!S.incr \wedge H\!H.value \geq 60)
\]
As an example, the complex event $[T \rightarrow \{3\}, H \rightarrow \{4,6, 7, 9\}, H\!S \rightarrow \{4,6, 7\}, H\!H \rightarrow \{9\}]$ will match $\varphi$ when evaluated over $S$ (Figure~\ref{fig:stream}). It is important to remark that predicates here are evaluated over sets of events. For example the predicate $T.value < 0$ will be satisfied if all temperatures in $T$ have value below $0$ (this is called an universal-extension in Section~\ref{sec:comparison}).

Using second-order predicates might seem loose at first, as predicates could specify arbitrary properties. However, the goal of this approach is to separate what is inherent to a CER framework and what is particular to an application. To illustrate this, consider the framework SASE introduced in \cite{SASE, SASEautomata}. 
The plantation freezing pattern in the SASE language can be written as follows (its meaning can be easily inferred from its syntax):
\[
\operatorname{SEQ}(T \; t, H \; h1[], H \; h2) \WHERE t.value < 0 \AND h1[i-1].value < h1[i].value \AND h2.value \geq 60
\]
What is $i$? Is $h1$ a first or second-order variable? This pattern illustrates that built-in filtering capabilities of CER languages often result in ad-hoc syntax and underspecified semantics. Also, the syntax is never general. What happens if instead of an increasing sequence we want to express that the variance of the humidities is less than five? We do not expect the language to be capable of expressing this, but it could be an application-specific requirement. For this reason, we parametrize our language $\socel$ by an arbitrary set of predicates. We believe that filtering should not be a built-in capability but, in contrast, a practical framework should allow to program application-specific sets of filters as an extension.

Having illustrated that complex events are naturally second-order objects, in the next section we formally present the language $\socel$ and discuss how the introduction second-order variables simplifies the syntax and semantics of CER languages.



\section{Second-order Complex Event Logic}\label{sec:prelim}


In this section, we formally define $\socel$, a core complex event
recognition language based on second-order variables. This language is
heavily based on \cite{GRUpaper}, but it presents a simpler formal definition.  We compare against the language of \cite{GRUpaper}
in Section~\ref{sec:comparison}.

\medskip
\noindent {\bf Schemas, Tuples and Streams.}
	Let $\aset$ be a set of \emph{attribute names} and $\dset$ be a set of values. A database schema $\cR$ is a finite set of relation names, where each relation name $R \in \cR$ is associated to a tuple of attributes denoted by $\att(R)$. If $R$ is a relation name, then an $R$-tuple is a function $t:\att(R) \rightarrow \dset$. We say that the type of an $R$-tuple $t$ is $R$, and denote this by $\type(t)=R$. For any relation name~$R$, $\tuples(R)$ denotes the set of all possible $R$-tuples, i.e., $\tuples(R)=\{t:\att(R) \rightarrow \dset\}$.

	Similarly, for any database schema $\cR$, $\tuples(\cR)=\bigcup_{R \in \cR}\tuples(R)$. Given a schema $\cR$, an $\cR$\textit{-stream} $S$ is an infinite sequence $S = t_0 t_1 \ldots$ where $t_i \in \tuples(\cR)$. When $\cR$ is clear from the context, we refer to $S$ simply as a stream. Given a stream $S = t_0 t_1 \ldots$ and a position $i \in \bbN$, the $i$-th element of $S$ is denoted by $S[i]=t_i$, and the sub-stream $t_{i}t_{i+1} \ldots$ of $S$ is denoted by $S_i$. 
	Note that we consider in this paper that the time of each event is given by its index, and defer a more elaborated time model (like~\cite{WhiteRGD07}) for future work.

\medskip
\noindent {\bf $\socel$ syntax.}
We now give the syntax of $\socel$.
Let $\lset$ be a finite set of monadic second-order (SO) variables containing all relation names (i.e. $\cR \subseteq \lset$). 
An SO predicate of arity $n$ is an $n$-ary relation $P$ over sets of
tuples, $P \subseteq (2^{\tuples(\cR)})^n$. We write
$\arity(P)$ for the arity of $P$. Let $\pset$ be a set of SO
predicates. An atom over $\pset$ is an expression of the form
$P(A_1,\dots, A_n)$ where $P\in\pset$ is a predicate of arity $n$,
and $A_1, \dots, A_n\in\lset$ (we also write $P(\bar{A})$ for $P(A_1, \ldots, A_n)$). 
The set of formulas in
$\socel(\pset)$ is given by the following syntax:
\vspace{-.54cm}

\[\varphi \; := \; R \ \mid \ \varphi \cin A \ \mid \ \varphi[A \rightarrow B] \ \mid \ \varphi \FILTER
\alpha \ \mid \ \varphi \cor \varphi \ \mid \ \varphi \sq \varphi \
\mid \ \varphi \ks \]
\vspace{-.54cm}

\noindent Where $R$ ranges over relation names, $A$ and $B$ over labels in $\lset$ and $\alpha$ over atoms
over $\pset$.

\noindent {\bf $\socel$ semantics.}
In order to define the semantics of core formulas, we first need to
introduce some further notation. A \emph{complex event} $C$ is a
function $C: \lset \rightarrow 2^\bbN$ that assigns a finite set $C(A)$
to every $A \in \lset$.  We say that $C$ is trivial if $C(A) =
\emptyset$ for every $A \in \lset$ in which case we denote $C$ by
$\emptyset$.  Notice that every complex event $C$ can be represented
as a finite set of pairs $(A,j)$ such that $(A,j) \in C$ iff $j \in
C(A)$. We make use of both notations indistinctly. The \emph{support}
$\supp(C)$ of $C$ is the set of positions mentioned in $C$, $\supp(C)
= \bigcup_{A \in \lset} C(A)$. For every non-trivial complex event
$C$, we define the maximum and minimum of $C$ as $\max(C) := \max
\supp(C)$ and $\min(C) := \min \supp(C)$, respectively.  For the
special case $C = \emptyset$, we define $\min(\emptyset) = \infty$ and
$\max(\emptyset) = -\infty$.  Then for every two complex events $C_1$
and $C_2$ with $\max(C_1) < \min(C_2)$, we define their concatenation
as the complex event $C_1 \cdot C_2$ such that $(C_1 \cdot C_2)(A) :=
C_1(A) \cup C_2(A)$.  
For every $A \in \lset$, we define the \emph{extended} complex event $C[A]$ such that $C[A](A) = \supp(C)$ and $C[A](X) = C(X)$ for every $X \neq A$.  
Furthermore, for every $A, B \in \lset$  we define the \emph{renamed} complex event $C[A \rightarrow B]$ such that  $C[A \rightarrow B](A) = \emptyset$, $C[A \rightarrow B](B) = C(A) \cup C(B)$ and $C[A \rightarrow B](X) = C(X)$ for every $X \notin\{A,B\}$.
Finally, given a stream $S$, a complex event $C$ naturally
defines the function $C_S: \lset \rightarrow 2^{\tuples(\cR)}$ where
$C_S(A) := \{S[i] \mid i \in C(A)\}$.

\begin{table}
	\setlength{\jot}{5pt}
	\begin{align*}
	\sem{R}(S, i, j) & \ = \  \{(R,j) \ \mid \ \type(S[j]) = R \}\\
	\sem{\varphi \cin A}(S, i, j) & \ = \ \{C[A] \ \mid \ C \in \sem{\varphi}(S,i,j) \}\\
	\sem{\varphi[A \rightarrow B]}(S, i, j) & \ = \ \{C[A \rightarrow B] \ \mid \ C \in \sem{\varphi}(S,i,j) \} \\
	\sem{\varphi \FILTER P(\bar{A})}(S, i, j) & \ = \ \{C \ \mid \ C
	\in \sem{\varphi}(S,i,j) \text{ and } C_S(\bar{A}) \in P \}\\
	\sem{\varphi_1 \cor \varphi_2}(S, i, j) & \ = \ \sem{\varphi_1}(S,i,j) \, \cup \, \sem{\varphi_1}(S,i,j)\\
	\sem{\varphi_1 \sq \varphi_2}(S, i, j) & \ = \ \{C_1 \cdot C_2 \ \mid \ \exists k. \ C_1 \in \sem{\varphi_1}(S,i,k) \text{ and } C_2 \in \sem{\varphi_2}(S,k+1,j) \}\\
	\sem{\varphi \ks}(S, i, j) & \ = \ \sem{\varphi}(S,i,j) \, \cup \, \sem{\varphi \sq \varphi \ks}(S,i,j)
	\end{align*}
	
	\caption{The semantics of $\socel$.}
	\label{tab-semantics}
	
	\vspace{-15pt}
\end{table}

Now we are ready to define the semantics of $\socel$ formulas.
Given a formula $\varphi$, a stream $S$, and positions
$i \leq j$, in
Table~\ref{tab-semantics} we define recursively the set $\sem{\varphi}(S, i, j)$ of all complex events
of $S$ that satisfy $\varphi$, starting the evaluation at position $i$ and ending at $j$.
Observe that, by definition, if $C \in \sem{\rho}(S,i,j)$ then
$\supp(C)$ is a subset of $\{i,\dots, j\}$ and $j \in \supp(C)$ always.
We say that $C$ belongs to the evaluation of $\varphi$ over $S$ at
position $n \in \bbN$, denoted by $C \in \sem{\varphi}_n(S)$, if $C
\in \sem{\varphi}(S, 0, n)$, namely, we evaluate $\varphi$ over $S$
starting from position~$0$. Intuitively, $C \in \sem{\varphi}_n(S)$
signifies that complex event $C$ was recognized in the stream $S =
t_0t_1\dots$ when
having inspected only the prefix $t_0t_1\dots t_n$.

\begin{example}
  Consider the pattern $\varphi$ introduced in Section~\ref{sec:motivation} to detect possible freezing plantations. We illustrate the evaluation of $\varphi$ over the stream $S$ depicted in Figure~\ref{fig:stream}. First of all, note that, although the conjunction of predicates is not directly supported in $\socel$, this can be easily simulated by a nesting of filter operators. 
  Then, for the sake of simplification, we can analyze $\varphi$ by considering each filter separately.
  For the subformula $\varphi_T = T \FILTER T.value < 0$ we can see that (i) $[T\rightarrow \{3\}] \in \sem{\varphi_T}(S, 0, 3)$. 
  On the other hand, the last event (i.e. $9$) is the only event that satisfies $\varphi_H = (H \cin H\!H) \FILTER H\!H.value \geq 60$ and then (ii) $[H \rightarrow \{9\}, H\!H \rightarrow \{9\}] \in \sem{\varphi_H}(S, 8, 9)$.
  Now, the intermediate formula $\varphi_+ = (H\!+ \! \cin H\!S) \FILTER H\!S.incr$ captures a sequence of one or more $H$-events representing an increasing sequence of humidities.
  Because Kleene closure allows for arbitrary events to occur between iterations, these sequences can be selected from the power set of all $H$-events that produced an increasing sequence like, for example, $[H \rightarrow \{4,6,7\}]$ or $[H \rightarrow \{2,4\}]$.
  In particular, we have that (iii) $[H \rightarrow \{4,6,7\}, H\!S \rightarrow \{4,6,7\}] \in \sem{\varphi_+}(S, 4, 7)$.
  Putting together (i), (ii) and (iii) and noticing that $\varphi = \varphi_T ; \varphi_+ ; \varphi_H$, we have that:
   $$
    T \rightarrow \{3\}, H \rightarrow \{4,6,7,9\}, H\!S \rightarrow \{4,6,7\}, H\!H \rightarrow \{9\} \ \in \ \sem{\varphi}_9(S)
  $$
  is a possible output of evaluating $\varphi$ over $S$.
\end{example}
\begin{example}
	As we saw in the previous example, the $\scin$-operator allows to introduce new names to the output keeping the previous names and positions. 
	However, a user could like to remove or, in other words, rename previous label because there are not relevant for the output. For this, $\socel$ includes the renaming operator $\varphi[A \rightarrow B]$. For example, instead of $\varphi$ we can use the formula $\varphi' = T; (H\!+)[H \rightarrow H\!S]; H[H \rightarrow H\!H] \FILTER (T.value < 0 \wedge H\!S.incr \wedge H\!H.value \geq 60)$ and all complex events that satisfy $\varphi'$ will not include $H$ in the output, i.e. they are replaced by $H\!S$ or $H\!H$.
	In particular: 
	$$
	T \rightarrow \{3\}, H\!S \rightarrow \{4,6,7\}, H\!H \rightarrow \{9\} \ \in \ \sem{\varphi'}_9(S)
	$$
	Notice that renaming operators have been used before in databases and relational algebra~\cite{abiteboul1995foundations}, so it is a natural operator for managing labels in $\socel$. 
\end{example}



\section{The Expressiveness of SO variables versus FO variables}\label{sec:comparison}

$\socel$ is a natural extension of the logic proposed
in~\cite{GRUpaper}, which we will refer to as $\focel$ in this paper.
$\socel$ uses second-order variables whereas $\focel$ uses first order
variables instead. Since, in traditional logics, second-order
languages can encode everything a first-order language can, this could
suggest to the reader that $\socel$ is more expressive than $\focel$.
In this section, we show that this is only partially true. We begin
our discussion with a definition of the syntax and semantics of
$\focel$. For a more detailed explanation of $\focel$, as well as
extensive examples, we address the interested reader
to~\cite{GRUpaper}.

\medskip
\noindent {\bf $\focel$ syntax.}
Let $\vset$ be a set of first order variables. 
Given a schema $\cR$, an FO predicate of arity $n$ is an $n$-ary relation $P$ over
$\tuples(\cR)$, $P \subseteq \tuples(\cR)^n$. If  $\pset$ is a set of
FO predicates then  an atom over $\pset$ is an expression $P(x_1, \ldots, x_n)$ with $P \in \pset$ of arity $n$ and $x_1, \ldots, x_n$ FO variables in $\vset$. 
The set of formulas of $\focel(\pset)$ (called CEPL in~\cite{GRUpaper}) over schema $\cR$ is given by the following grammar:
\[\varphi \; := \; R \as x \ \mid \  \varphi \FILTER \alpha \ \mid \ \varphi \cor \varphi \ \mid \ \varphi\sq\varphi \ \mid \ \varphi\ks. \]
Here, $R$ ranges over relation names in $\cR$, $x$ over variables in $\vset$ and $\alpha$ over atoms in $\pset$.

\medskip
\noindent {\bf $\focel$ semantics.}
For the semantics of $\focel$ we first need to introduce the notion of \emph{match}. A match $M$ is defined as a non-empty and finite set of natural numbers. Note that a match plays the same roll as a complex event in $\socel$  and can be considered as a restricted version where only the support of the output is considered. 
We denote by $\min(M)$ and $\max(M)$ the minimum and maximum element of $M$, respectively. 
Given two matches $M_1$ and $M_2$, we write $M_1 \cdot M_2$ for the \emph{concatenation} of two matches, that is, $M_1 \cdot M_2 := M_1 \cup M_2$ whenever $\max(M_1) < \min(M_2)$ and empty otherwise.
Given an $\focel$ formula $\varphi$, we denote by $\vdef(\varphi)$ all variables defined in $\varphi$ by a clause of the form $R \as x$ and by $\vdefplus(\varphi)$ all variables in $\vdef(\varphi)$ that are defined outside the scope of all $\ks$-operators. For example, in the formula $\varphi = (T \as x \sq (H \as y)\ks) \FILTER z.id = 1$ we have that $\varphi$ uses variables $x, y, z$, $\vdef(\varphi) = \{x,y\}$, and $\vdefplus(\varphi) = \{x\}$. 
A valuation is a function $\nu:\vset \rightarrow \bbN$. Given a finite subset $U \subseteq \vset$ and two valuations $\nu_1$ and $\nu_2$, we define the valuation $\nu_1[\nu_2 / U]$ by $\nu_1[\nu_2 / U](x) = \nu_2(x)$ whenever $x \in U$ and $\nu_1[\nu_2 / U](x) = \nu_1(x)$ otherwise. 

Now we are ready to define the semantics of $\focel$. Given a $\focel$-formula $\varphi$, we say that a match $M$ belongs to the evaluation of $\varphi$ over a stream $S$ starting at position $i$, ending at position $j$, and under the valuation $\nu$ (denoted by $M \in \sem{\varphi}(S, i, j, \nu) $) if one of the following conditions holds:
\begin{itemize}
	\item $\varphi=R \as x$, $M = \{\nu(x)\}$, $\type(S[\nu(x)]) = R$ and $i \leq \nu(x) = j$.
	\item $\varphi = \rho \FILTER P(x_1, \ldots, x_n)$, $M \in \sem{\rho}(S, i, j, \nu)$ and $(S[\nu(x_1)], \ldots, S[\nu(x_n)]) \in P$.
	\item $\varphi=\rho_1 \cor \rho_2$ and $M \in \sem{\rho_1}(S, i, j, \nu)$ or $M \in \sem{\rho_2}(S, i, j, \nu)$).
	\item $\varphi = \rho_1 \sq \rho_2$ and there exists $k \in \bbN$ and matches $M_1$ and $M_2$ such that $M = M_1 \cdot M_2$, $M_1 \in \sem{\rho_1}(S,i,k, \nu)$ and $M_2 \in \sem{\rho_2}(S,k+1,j, \nu)$.
	\item $\varphi=\rho\ks$ and there exists a valuation $\nu'$ such that either $M \in \sem{\rho}(S, i, j, \nu[\nu' / U])$ or  $M \in \sem{\rho\sq\rho\ks}(S, i, j, \nu[\nu' / U])$, where $U = \vdefplus(\rho)$.
\end{itemize}
We say that $M$ belongs to the evaluation of $\varphi$ over
$S$ at position $n \in \bbN$, denoted by $M \in \sem{\varphi}_n(S)$, if $M \in
\sem{\varphi}(S, 0, n, \nu)$ for some valuation $\nu$.

\begin{example}
  \label{ex:focel1}
  Consider that we want to use $\focel$ to see how temperature changes
  at some location whenever there is an increase of humidity
  from below $30$ to above $60$. Assume, for this example, that the location of an event
  is recorded in its $\texttt{id}$ attribute and its humidity in its
  $\texttt{hum}$ attribute. Then, using a self-explanatory syntax for
  FO predicates, we would write:
  \begin{equation*}
    [H\as x;(T \as y\FILTER y.\texttt{id}=x.\texttt{id})+;H \as z]
    \FILTER(x.\texttt{hum} < 30\ \land z.\texttt{hum} > 60 \land x.\texttt{id}=z.\texttt{id})
  \end{equation*}
  Inside the Kleene closure, $y$ is always bound to the current event
  being inspected. The filter $y.\texttt{id}=x.\texttt{id}$ ensures
  that the inspected temperature events of type $T$ are of the same
  location as the first humidity event $x$. Note that, in this case,
  the output is a match (set of positions), and includes in particular
  the positions of the inspected $T$ events. 
\end{example}

In order to make a fair comparison between $\focel$ and $\socel$ we
first need to agree how we relate the FO predicates that can
be used in $\focel$ to the SO predicates that can be used in
$\socel$. Indeed, the expressive power of both languages inherently
depends on what predicates they can use, and we need to put them on
equal footing in this respect. In particular, without any restrictions
on the predicates of $\socel$ we can easily write formulas that are
beyond the scope of $\focel$.  For this reason, we will
restrict ourselves to SO predicates coming from the \emph{universal extension}
of FO predicates.  Here, given a FO predicate $P(x_1, \ldots, x_n)$,
we define its \emph{SO-extension} $\ext{P}$ to be the SO predicate of
the same arity as $P$ such that $(S_1,\dots, S_n) \in \ext{P}$ iff
$\forall x_1 \in S_1, \dots, x_n \in S_n. \; (x_1,\dots,x_n) \in
P$. We extend this definition to sets of predicates: if $\pset$ is a
set of FO predicates, $\ext{\pset}$ is the set $\{\ext{P}\mid
P\in\pset\}$.  In what follows we will compare $\focel(\pset)$ to
$\socel(\ext{\pset})$.

\begin{example}\label{ex:so-extention}
  Using the SO-extensions of the unary FO predicates
  (e.g. $X.\texttt{hum} < 30 := \forall x\in X. \ x.\texttt{hum} < 30$) and the binary
  id-comparison predicate (e.g. $X.\texttt{id} = Y.\texttt{id} := \forall x \in X. \forall y \in Y. y.\texttt{id}=x.\texttt{id}$), the $\focel$ expression of
  Example~\ref{ex:focel1} can be written in $\socel$ as:
  $$
    [H\cin X;(T\ks \! \cin Y); H\cin Z]
    \FILTER[X.\texttt{hum} < 30 \land Z.\texttt{hum} >
      60 \land X.\texttt{id} = Y.\texttt{id} \land X.\texttt{id} = Z.\texttt{id}].
  $$
\end{example}

At this point, the reader may wonder why we focus on \emph{universal}
extensions of FO predicates. After all, one could also consider
\emph{existential} extensions of the form $\eext{P}$ where
$(S_1,\dots, S_n) \in \eext{P}$ iff $\exists x_1 \in S_1, \dots, x_n
\in S_n. \; (x_1,\dots,x_n) \in P$. Under this notion, however,
$\socel$ cannot meaningfully filter events captured by a Kleene
closure.
Indeed, if $X.\texttt{id} = Y.\texttt{id}$ is used with an existential semantics in Example~\ref{ex:so-extention}, then it would
include in $Y$ the $T$ events
occurring between the first $H$ event $X$ and the second $H$ event
$Z$, as long as there is one such $T$ event with the same id as the
single event in $X$. 
Therefore, although existential extensions could be useful in some particular CER use-cases, we compare $\focel$ with $\socel$ by considering only universal extensions.


Another difference to be considered is that $\socel$ outputs complex
events (i.e. second-order assignments over positions) and $\focel$
outputs matches (i.e. sets of positions). To meaningfully compare
both, we consider a formula $\varphi \in \socel$ to be equivalent to a
formula $\psi \in \focel$ (denoted by $\varphi \equiv \psi$) iff for
every stream $S$ and every position $n \in \bbN$ it holds that
$\supp(\sem{\varphi}_n(S)) = \sem{\psi}_n(S)$, where
$\supp(\sem{\varphi}_n(S))$ is defined to be
$\supp(\sem{\varphi}_n(S)) = \{\supp(C) \mid C \in
\sem{\varphi}_n(S)\}$. That is, we consider $\varphi$ equivalent to
$\psi$ if we can obtain $\sem{\psi}_n(S)$ by ``forgetting'' the
variables in the complex events of $\sem{\varphi}_n(S)$.

We now compare both languages. We start by showing that if $\upset$ is a set of unary FO predicates, $\focel(\upset)$ and $\socel(\ext{\upset})$ have the same expressive power.

\begin{theorem} \label{theo:unary-fo-so}
	Let $\upset$ be any set of FO unary predicates. For every formula $\psi \in \focel(\upset)$ there exists a formula $\varphi \in \socel(\ext{\upset})$ such that $\psi \equiv \varphi$, and vice versa. 
\end{theorem}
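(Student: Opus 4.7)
The proof proceeds by constructing effective, support-preserving translations in both directions, each verified by structural induction.

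For the forward direction from $\focel(\upset)$ to $\socel(\ext{\upset})$, I would associate to each FO variable $x$ of $\psi$ a fresh SO variable $X_x$, and define a recursive translation $T$ that sends $R \as x$ to $R \cin X_x$ and each filter atom $P(x)$ to $\ext{P}(X_x)$, while commuting with $\cor$, $\sq$, and $\ks$. I would then prove by induction on $\psi$ that $\supp(\sem{T(\psi)}_n(S)) = \sem{\psi}_n(S)$ for every stream $S$ and position $n$. The essential observation is that for unary $P$, the predicate $\ext{P}(X_x)$ holds on a complex event $C$ exactly when every event position in $C(X_x)$ individually satisfies $P$; under $\socel$'s accumulation semantics inside Kleene closures, this coincides with $\focel$'s per-iteration rebinding of $x$ being individually filtered at each iteration.

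For the reverse direction from $\socel(\ext{\upset})$ to $\focel(\upset)$, the plan is first to normalize $\varphi$ so that every filter $\FILTER \ext{P}(X)$ is pushed down to the atomic $R$-subformulas that contribute events to $X$. Since universally-extended unary predicates factor as a conjunction over the individual members of $X$, the filter distributes cleanly over the compositional operators, for example
\[
(\rho_1 \cor \rho_2) \FILTER \ext{P}(X) \ \equiv \ (\rho_1 \FILTER \ext{P}(X)) \cor (\rho_2 \FILTER \ext{P}(X)),
\]
and analogously through $\sq$ and $\ks$. Pushing through the label-manipulation operators requires care: $(\rho \cin A) \FILTER \ext{P}(A)$ forces $P$ on every event in $\supp(\rho)$, so the filter must be replicated over all labels currently used inside $\rho$, and a similar rule handles $[A \rightarrow B]$. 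After normalization, every occurrence of an atom $R$ carries a set of collected filters $\{P_1, \dots, P_k\}$; I would then translate it to $R \as x \FILTER P_1(x) \FILTER \cdots \FILTER P_k(x)$ for a fresh FO variable $x$ and simply erase the remaining $\cin A$ and $[A \rightarrow B]$ operators, since they only affect the labels of a complex event and not its support. A second induction then establishes that this composite translation is support-preserving.

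The main obstacle will be the push-down step through $\cin A$ and $[A \rightarrow B]$, where careful bookkeeping is required to ensure that every event which may eventually land in $X$ is caught by a pushed-down filter, and where the duplication of filters across all labels currently in scope must be shown sound. A subtler issue is the interaction between filters placed outside a Kleene closure that reference variables rebound inside it: $\focel$'s valuation-based semantics allows the outer $\nu(x)$ to be partially decoupled from the iteration bindings in ways that the straightforward translation does not automatically reproduce. I would address this by strengthening the induction hypothesis to explicitly track the variables free at each subformula, and by verifying that, restricted to the variables actually contributing to the support, the two semantics coincide under the universal SO-extension.
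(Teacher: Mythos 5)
Your reverse direction (from $\socel(\ext{\upset})$ to $\focel(\upset)$) is essentially the paper's argument: exploit the downward closure of universally-extended unary predicates to push every filter onto the atoms, replicating it over all labels in scope when passing through $\cin A$ and renaming, then swap $R \FILTER \ext{P}(R)$ for $R \as x \FILTER P(x)$ and erase the label operators (harmless, since equivalence is only up to support). That half is sound and matches the paper.

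The forward direction, however, has a genuine gap. The direct translation $R \as x \mapsto R \cin X_x$, $P(x)\mapsto \ext{P}(X_x)$, commuting with the regular operators, is \emph{not} support-preserving on all of $\focel(\upset)$. Take $\psi = ((R \as x)\ks) \FILTER P(x)$. In $\focel$ the Kleene closure shadows $x$ (each unfolding existentially requantifies the valuation on $\vdefplus(R\as x)=\{x\}$), so the outer filter tests $S[\nu(x)]$ for a top-level valuation $\nu$ that is completely decoupled from the matched positions; the filter holds as soon as \emph{some} position of the stream satisfies $P$. Your translation $((R\cin X_x)\ks)\FILTER \ext{P}(X_x)$ instead demands that \emph{every} matched $R$-event satisfy $P$: on a stream where $S[0]$ satisfies $P$ but the $R$-events at positions $1,2,3$ do not, the match $\{1,2,3\}$ lies in $\sem{\psi}_3(S)$ but not in the support of the translation. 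A dual failure occurs for $(R\as x)\sq((S\as y \FILTER P(x))\ks)$, where the translated filter $\ext{P}(X_x)$ is evaluated on a sub-complex-event in which $X_x$ is empty and is therefore vacuously true, losing the constraint on the $R$-event. No strengthening of the induction hypothesis can repair this, because the translated formula itself is inequivalent; the translation has to change. The paper avoids the problem by first rewriting $\psi$ into the locally-parametrized normal form of the earlier $\focel$ work, in which every unary filter sits directly on its own assignment as $R\as x \FILTER P_1(x)\land\dots\land P_k(x)$; only then is the atomic swap to $R \FILTER \ext{P_1}(R)\land\dots\land\ext{P_k}(R)$ performed. You need this normalization step (or an equivalent relocation of filters on the $\focel$ side) before translating.
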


The previous theorem is of particular relevance since it shows that
both languages coincide in a well-behaved core. $\focel$ with unary
predicates was extensively studied in~\cite{GRUpaper} showing
efficient evaluation algorithms and it is part of 
almost all CER languages~\cite{cugola2012}.

Now we show that if we go beyond unary predicates there are $\socel$
formulas that cannot be equivalently defined in $\focel$. Let
$\pset_{=}$ be the smallest set of FO predicates that contains the
equality predicate $x = y$ and is closed under boolean operations.

\begin{theorem} \label{theo:so-notin-fo}
	There is a formula in $\socel(\ext{\pset}_=)$ that cannot be expressed in $\focel(\pset_=)$.
\end{theorem}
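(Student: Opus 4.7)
The plan is to exhibit a witness formula in $\socel(\ext{\pset_=})$ that enforces an unbounded number of pairwise value comparisons, and then to show via a swap argument that no $\focel(\pset_=)$ formula can agree with it on all streams, because every such formula has only a bounded supply of variables.

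Take as witness
\[
\varphi \;:=\; \bigl((A\ks \cin X)\sq B\sq (A\ks \cin Y)\bigr)\FILTER \ext{\ne}(X,Y),
\]
where $A,B$ are distinct relation names; note that $\ne\in\pset_=$, so $\ext{\ne}\in\ext{\pset_=}$. The mandatory $B$-event in the middle fixes the unique split, so $\supp(\sem{\varphi}_n(S))$ collects the matches $M$ consisting of a non-empty run of $A$-events, one $B$-event, and a second non-empty run of $A$-events, subject to the constraint that every value in the first $A$-run differs from every value in the second. Suppose, for contradiction, that some $\psi\in\focel(\pset_=)$ satisfies $\psi\equiv\varphi$, and let $k:=|\var(\psi)|$. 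Pick $m>2k$ and build $S_m$ whose positions $0,\dots,m-1$ are $A$-events with pairwise distinct values $v_1,\dots,v_m$, position $m$ is the unique $B$-event, positions $m+1,\dots,2m$ are $A$-events with pairwise distinct fresh values $w_1,\dots,w_m$, and later positions carry further fresh values. Then $M=\{0,\dots,2m\}$ lies in $\supp(\sem{\varphi}_{2m}(S_m))$, so by hypothesis $M\in\sem{\psi}_{2m}(S_m)$, witnessed by some valuation $\nu$ and derivation.

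The technical core is a locality lemma for $\focel$ derivations: writing $F:=\{\nu(v):v\in\var(\psi)\}$, so $|F|\le k$, each position $p\in M\setminus F$ is bound to an iterating variable in exactly one Kleene iteration $\iota(p)$, and every atom evaluated during $\iota(p)$ only inspects values at a set $F^\star(p)\supseteq\{p\}$ of at most $k$ positions (the positions assigned to the variables in scope at that iteration, mixing the iterating re-bindings with the outer valuation $\nu$). This is proved by induction on $\psi$ from the fact that the Kleene semantics re-binds only the variables in $\vdefplus$ and that iterating bindings across different iterations of the same Kleene partition the corresponding sub-match. Granted the lemma, choose $p\in[m+1,2m]\setminus F$ (exists since $|F|<m$) and a first-run value $v_{i^\star}$ whose unique occurrence $i^\star-1$ lies outside $F\cup F^\star(p)$ (exists since $|F\cup F^\star(p)|\le 2k<m$), and let $S_m^\star$ be obtained from $S_m$ by overwriting the value at position $p$ with $v_{i^\star}$ while keeping its type $A$.

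In $S_m^\star$ both $X$ and $Y$ of the forced split contain the value $v_{i^\star}$, so $\ext{\ne}(X,Y)$ fails and $M\notin\supp(\sem{\varphi}_{2m}(S_m^\star))$. However, replaying the same derivation of $\psi$ still succeeds on $S_m^\star$: because $S_m$ and $S_m^\star$ agree outside position $p$, only the atoms of iteration $\iota(p)$ can be affected; each of those atoms is a Boolean combination of equalities among values at positions in $F^\star(p)$; on $S_m$ these values are pairwise distinct (by construction), so every equality atom is false; and on $S_m^\star$ the only changed value (that of $p$) is $v_{i^\star}$, which by choice of $i^\star$ does not appear at any other position of $F^\star(p)$, so all equality atoms remain false and each filter evaluates identically. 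Hence $\psi$ accepts $M$ on $S_m^\star$, contradicting $\psi\equiv\varphi$. The main obstacle is the locality lemma; once one has verified that each Kleene iteration inspects only a $k$-bounded set of positions and that the iterating bindings of one Kleene partition the corresponding sub-match, the subsequent swap/indistinguishability step is routine.
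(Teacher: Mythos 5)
Your overall strategy --- exhibit a formula whose SO filter forces quadratically many cross-run comparisons, argue that any equivalent $\focel(\pset_=)$ derivation performs too few comparisons, and then alter the stream at an uncompared pair --- is the same as the paper's, and your witness formula and single-position swap would work once the right pair is found. The problem is the ``locality lemma'' that you yourself flag as the technical core: it is false. You claim that every $p\in M\setminus F$ is compared only against the at most $k$ positions assigned to the variables in scope at its iteration, implicitly treating the valuation during $\iota(p)$ as one fixed assignment of the $k$ variables. This breaks down for nested Kleene closures. In a formula such as $(A \as x \sq (A \as y \FILTER y \neq x)\ks)\ks$, the variable $x$ lies in $\vdefplus$ of the outer body and is re-bound at each outer iteration (so the position $p$ it binds need not lie in $F$), while the filter $y\neq x$ sits inside an inner closure that re-binds $y$ arbitrarily many times within that single outer iteration; hence $p$ is compared against every $y$-position of its iteration and $F^\star(p)$ grows with the match rather than staying below $k$. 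This ``injection'' of a variable into a deeper Kleene closure is exactly the $\focel$ capability the paper exploits in Theorem~\ref{theo:fo-notin-so}, so it cannot be excluded for the hypothetical $\psi$. With the lemma gone, your chosen $p$ may have been compared against almost the entire first run, and no admissible $i^\star$ need exist.

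The repair is to replace the pointwise bound by an aggregate one, which is what the paper does: it instruments the $\focel$ semantics with a \emph{comparing set} $O$ recording the satisfied atom instances and proves by induction that $|O|\le c|M|+d$ for constants depending only on $\psi$ (the Kleene case uses that the iterations' sub-matches are non-empty and partition $M$, yielding $c=c'+d'$, $d=0$). Since there are quadratically many cross pairs but only linearly many recorded comparisons, some cross pair is never compared, and the value swap then goes through; the paper assigns a common fresh value to both positions of that pair, whereas you copy an existing value onto one position --- either variant is fine, but locating the uncompared pair requires the counting argument, not a uniform per-position bound.
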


An example of a formula that can be defined in $\socel(\ext{\pset}_=)$
but it cannot be defined in $\focel(\pset_=)$ is $\varphi := (R\ks \sq
T\ks) \FILTER R \neq T$, where $X \neq Y$ is defined as
$\forall x\in X.\forall y\in Y. \,(x \neq y)$. Intuitively, an equivalent
formula in $\focel(\pset_=)$ for $\varphi$ would need to compare every
element in $R$ with every element in $T$, which requires a quadratic number of comparisons. One can show that the number of comparison in the evaluation of an $\focel$ formula is at most linear in the size of the output and, thus, $\varphi$ cannot be defined by any formula in $\focel(\pset_=)$. It is important to note that this result shows the limitations of a CEP-language based on FO variables and what can be gained if SO variables are used. 

A natural question at this point is whether $\socel$ can define every
$\focel$ formula. For binary predicates (e.g. $x.\texttt{id} = y.\texttt{id}$)
the answer is positive, as the following result shows.

\begin{theorem} \label{theo:binary-fo-in-so}
	Let $\bpset$ be any set of FO binary predicates closed under complement.
	Then for every formula $\psi \in \focel(\bpset)$ there exists a formula $\varphi \in \socel(\ext{\bpset})$ such that $\psi \equiv \varphi$.
\end{theorem}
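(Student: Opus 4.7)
The plan is to prove Theorem~\ref{theo:binary-fo-in-so} by exhibiting an explicit syntactic translation from $\focel(\bpset)$ to $\socel(\ext{\bpset})$ that preserves the support of each match. I would proceed by structural induction on the $\focel$ formula $\psi$, reserving for every FO variable $x\in\vset$ a distinct SO variable $X\in\lset$. The base case $R\as x$ translates to $R\cin X$: when $\type(S[j])=R$, the match $\{j\}$ produced by $\psi$ coincides with the support of the singleton complex event $\{(R,j),(X,j)\}$ produced by $\varphi$. Disjunction, concatenation and Kleene plus translate homomorphically, as none of them touch variables.

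The subtle case is the filter $\rho\FILTER P(x,y)$, because the SO-extension $\ext{P}$ is a \emph{universal} constraint on $X\times Y$, whereas the $\focel$ filter checks $P$ on the single pair determined by the current valuation. My plan is to place the translated filter $\ext{P}(X,Y)$ not in the same syntactic position as in $\psi$, but at the scope corresponding to the least common ancestor (LCA) of the bindings of $x$ and $y$ in the syntax tree. At that scope, each $Z\in\{X,Y\}$ appears in the complex event as the union of all positions to which the corresponding FO variable was bound across the Kleene iterations sitting between its defining clause and the LCA. Two cases are then immediate: if both $x$ and $y$ are bound at the LCA with no intervening Kleene, then $X$ and $Y$ are singletons and $\ext{P}(X,Y)$ reduces to $P(x,y)$; if only $x$ sits inside an inner Kleene while $y$ is bound at the LCA, then $Y$ is still a singleton and the identity $\ext{P}(X,\{y\}) \iff \forall x_i\in X.\,P(x_i,y)$ captures exactly the $\focel$ semantics ``for every iteration, $P(x_i,y)$ holds''.

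For the inductive step, I would carry the side invariant that every filter of $\psi$ is either already discharged or pending to be discharged at its LCA, and that every SO variable mentioned by a pending filter is populated in the complex event at the moment of discharge. The Kleene step then rests on the algebraic fact that $\ext{P}$ distributes over union in each coordinate, i.e.\ $\ext{P}(X_1\cup X_2, Y) \iff \ext{P}(X_1,Y)\land\ext{P}(X_2,Y)$, which mirrors the conjunctive behaviour of $\focel$ filters across iterations and allows the translated filter at the LCA to be equivalent to checking $P$ in every individual iteration.

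The main obstacle I foresee is the pathological case in which $x$ and $y$ are bound inside two incomparable Kleene closures, for instance $\psi = ((R\as x)\ks \sq (T\as y)\ks)\FILTER P(x,y)$: at the LCA both $X$ and $Y$ are already accumulated sets, and a direct $\ext{P}(X,Y)$ imposes the far stronger ``$P$ on every pair of iteration values''. Here the closure of $\bpset$ under complement is crucial: by combining filters built from $\neg P$ with the renaming operator $\rho[A\to B]$ and the disjunction $\cor$, I would decompose the required existential condition into a Boolean combination of scope-local universal checks performed on singleton slices of $X$ and $Y$ isolated via the renaming. Verifying that this decomposition matches the $\focel$ semantics on every configuration of nested and sibling Kleenes, without introducing spurious complex events, is where I expect the bulk of the technical work to reside.
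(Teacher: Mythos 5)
Your overall strategy---translate $R\as x$ to $R\cin X$, translate the regular operators homomorphically, and relocate each binary filter to the innermost scope at which one of its two variables is still bound to a singleton---is essentially the strategy of the paper's proof. Your ``LCA'' is the paper's $\varphi^P_{x_j}$, the formula closest to the filter in which the outer variable is defined, and your observation that $\ext{P}(X,\{y\})$ coincides with checking $P(x_i,y)$ at every Kleene iteration is exactly the paper's justification for why the universal extension suffices once one side is a singleton. Two preprocessing steps that the paper needs and you omit: (a) a rewriting $\varphi_x^\top \FILTER P(x,y)\cor\varphi_x^\bot$ that lifts a filter out of scopes in which \emph{neither} variable is in $\vdefplus$, and (b) conversion to a safe, disjunctive normal form so that each filter reaches exactly one assignment per variable---without this, a formula such as $(R\as x\cor T\as x)\FILTER P(x,y)$ makes your choice of which SO label $X$ to constrain ambiguous.

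The genuine gap is the case you yourself flag, namely $((R\as x)\ks\sq(T\as y)\ks)\FILTER P(x,y)$, and the fix you sketch does not work as stated. First, it rests on a mischaracterization of the $\focel$ semantics: when both $x$ and $y$ are rebound only inside Kleene closures strictly below the filter, the filter is evaluated under the \emph{outer} valuation $\nu$, whose values $\nu(x),\nu(y)$ are existentially quantified over all of $\bbN$ and are not tied to the match at all. The required condition is therefore neither ``$P$ on every pair from $X\times Y$'' nor ``$P$ on some pair from $X\times Y$'', so a complement-and-renaming decomposition of an existential over $X\times Y$ targets the wrong predicate. Second, even granting your reading, the decomposition is exactly the step you defer, and it is the only place your argument invokes the closure of $\bpset$ under complement, so the proof as proposed is incomplete precisely where it is novel. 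The paper avoids the case altogether: it restricts attention to safe, well-formed formulas (inherited from the $\focel$ development in~\cite{GRUpaper}), for which every filter variable is in $\vdefplus$ of some enclosing subformula containing the filter; after the normalization in (a) and (b) this guarantees that at the placement scope $\varphi^P_{x_j}$ the label $A_j$ is always a singleton, so the ``both sides accumulated'' configuration never has to be translated. You should either adopt the same restriction explicitly, or work out the degenerate semantics of the outer valuation and show it is expressible---as written, neither is done.
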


It is important to notice that closeness under complement is a mild restriction over $\bpset$. In particular, if the set $\bpset$ is closed under boolean operations (as usually every CEP query language support), the condition trivially holds.

Interestingly, it is not true that $\socel$ is always more expressive than $\focel$. In particular, there exists an $\focel$ formula with ternary predicates that cannot be defined by any $\socel$ formulas. For the next result, consider the smallest set of FO predicates $\pset_{+}$ containing the sum predicate $x = y + z$ that is closed under boolean operations.  

\begin{theorem} \label{theo:fo-notin-so}
	There is a formula in $\focel(\pset_+)$ that cannot be expressed in $\socel(\ext{\pset}_+)$.
\end{theorem}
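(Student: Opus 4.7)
\medskip
\noindent\textbf{Proof proposal.}
The plan is to exhibit a separating formula that exploits the genuinely ternary nature of the sum predicate inside a Kleene closure. Intuitively, in $\focel$ one may tie a single variable bound \emph{outside} an iteration to two variables bound \emph{inside} a single pass of that iteration; in $\socel(\ext{\pset_+})$ the only available options are (i) to push the filter inside the Kleene (where the outside label contributes the empty set, so the universally-extended predicate is vacuously satisfied), or (ii) to pull it to the top level (where the inside labels have accumulated across all iterations, so universality forces a strictly stronger cross-iteration constraint).

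As the candidate separating formula, writing $P(x,y,z)$ for $x.\mathtt{val} = y.\mathtt{val} + z.\mathtt{val}$, take
$$
\psi \;:=\; R \as x \sq \bigl((R \as y \sq R \as z) \FILTER P(x,y,z)\bigr)\ks.
$$
Then $\psi \in \focel(\pset_+)$ matches at position $n$ with full support exactly on streams of the form $x,\ y_1, z_1, y_2, z_2, \ldots, y_k, z_k$ (all of type $R$) such that $S[x].\mathtt{val} = S[y_i].\mathtt{val} + S[z_i].\mathtt{val}$ holds for every pair $i$ individually.

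To prove that no $\varphi \in \socel(\ext{\pset_+})$ is equivalent to $\psi$, I would argue by permutation-invariance. Fix a value $v_0$ and $k$ pairs of distinct values $(a_i, b_i)$ with $a_i + b_i = v_0$ but $a_i + b_j \neq v_0$ whenever $i \neq j$ (for instance $(1,9), (2,8), (3,7), (4,6)$ when $v_0 = 10$), and for each permutation $\pi$ of $\{1, \ldots, k\}$ define the stream
$$
S_\pi \;:=\; [\,v_0,\, a_1,\, b_{\pi(1)},\, a_2,\, b_{\pi(2)},\, \ldots,\, a_k,\, b_{\pi(k)}\,],
$$
with all events of type $R$. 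By construction, $\psi$ outputs the full support $\{0, \ldots, 2k\}$ at position $2k$ on $S_\pi$ iff $\pi$ is the identity. On the other hand, the parses of any $\socel(\ext{\pset_+})$ formula $\varphi$ on $S_\pi$ depend only on event types and positions, not on attribute values; so the set of possible position-to-label assignments is the same for every $\pi$. Moreover, each label's value set $C_{S_\pi}(A) = \{S_\pi[i] \mid i \in C(A)\}$ is $\pi$-invariant, because $\pi$ merely shuffles the $b_i$'s among co-labelled positions. Because every atom of $\varphi$ is a universally-extended predicate whose truth depends only on these value sets, every filter is $\pi$-invariant, and hence $\varphi$ outputs the full support on $S_\pi$ iff it does so on $S_{\text{id}}$ --- contradicting the behaviour of $\psi$.

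The main obstacle will be making the permutation-invariance statement airtight for arbitrary $\varphi$ with nested Kleene closures, disjunctions, renamings, and $\cin$ operators. This calls for a structural induction showing that, in any parse of $\varphi$, two ``second-of-pair'' positions from different iterations that end up in the same label must always be co-labelled across all $S_\pi$. A pigeonhole argument (taking $k$ larger than the number of labels in $\varphi$) then guarantees that some label is forced to collect values from several iterations, so that the filter can never single out individual aligned pairs without also admitting a cross-pair, closing the contradiction.
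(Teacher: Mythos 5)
Your high-level intuition (the outside-bound value cannot be injected into individual Kleene iterations, because pushing the filter inside makes the universal extension vacuous and pulling it outside makes it too strong) is exactly the right one, and it matches the paper's. But the route you take --- a permutation-invariance/symmetry argument --- has a genuine gap at its central step, and your concrete instantiation actively breaks it. The claim that ``each label's value set $C_{S_\pi}(A)$ is $\pi$-invariant'' is false in general: a label may capture a \emph{single} $b$-position (e.g.\ via a $\cin$ applied inside one iteration, or via a bounded non-iterated prefix of the formula), in which case $C_{S_\pi}(A)=\{b_{\pi(i)}\}$ visibly depends on $\pi$, and a filter over such singletons is genuinely $\pi$-sensitive. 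Worse, with your values $(1,9),(2,8),(3,7),(4,6)$ and $v_0=10$ there are many spurious instances of the sum predicate among the $a$'s and $b$'s themselves ($9=1+8$, $3=1+2$, $7=3+4$, \dots), so an adversarial $\socel(\ext{\pset}_+)$ formula can distinguish $S_{\mathrm{id}}$ from $S_\pi$ using filters that never touch position $0$ at all; this is a concrete counterexample to ``every filter is $\pi$-invariant.'' Using a single type $R$ for all three roles compounds this, since types can no longer be used to control which positions may occupy which argument of the predicate. Your closing pigeonhole sketch (some label must collect several iterations) does not repair this, because the problem is not the labels that collect many positions but the ones that collect few.

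The paper closes exactly this gap with different machinery: it defines bounded-depth \emph{evaluation trees} for $\socel$ formulas, observes that any non-vacuous application of the sum predicate constraining the distinguished $R$-position must occur at one of the boundedly many nodes on the path from that position's leaf to the root, and that on a suitably rigid stream (three distinct types $R,S,T$ and values $2N;\,1,2N-1;\,3,2N-3;\dots$ engineered so that the \emph{only} satisfiable instances of $x=y+z$ are $2N=(2j{-}1)+(2N{-}2j{+}1)$ with all three sets singletons) each such application pins down at most one $(S,T)$-pair. Hence for large $N$ some consecutive pair is never compared against the $R$-value, and perturbing just those two values preserves acceptance by the candidate $\socel$ formula while killing acceptance by the $\focel$ formula. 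If you want to keep your symmetry framing, you would still need this rigid stream and the bounded-comparison lemma to justify that swapping two \emph{uncompared} $b$-values is invisible to the formula --- at which point you have essentially reconstructed the paper's perturbation argument, which needs only one free pair rather than global invariance.
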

In the appendix, we show that the formula $R \as x \sq (S \as y \sq T \as z \FILTER (x = y+z))\ks$ in $\focel(\pset_+)$ cannot be defined in $\socel(\ext{\pset}_+)$. This formula \emph{injects} the $x$-variable  inside the Kleene closure in order to check that each pair $(y, z)$ sums $x$. This capability of injecting variables inside Kleene closure cannot be simulated in $\socel$ given that $\socel$ is a composable language.
It is important to recall that this does not occur if binary predicates are used (Theorem~\ref{theo:binary-fo-in-so}), which are of common use in CER.  

%


\section{On the Expressiveness of Unary Formulas}\label{sec:unary-core-expr}

What is the expressiveness of $\socel(\pset)$? Obviously, as already
illustrated in Section~\ref{sec:comparison}, the answer to this
question depends on the predicates that we allow in $\pset$. To get a
first, fundamental understanding of the expressive power of the basic
operators of $\socel$, we will study this question in the
setting where $\pset$ is limited to contain only the simplest kinds of
predicates possible, namely second-order extensions of unary FO
predicates. When $\pset$ is hence limited, we compare
$\socel(\pset)$ against a computational model for detecting complex
events that we call Unary Complex Event Automata ($\ucea$ for short),
defined next.

\subsection{Unary Complex Event Automata}\label{subsec:ma}



Let $\cR$ be a schema and $\upset$ be a set of unary FO predicates over $\cR$.
We denote by $\upset^+$ the closure of $\upset \cup \{\tuples(R) \mid R \in \cR\}$ under conjunction (i.e. intersection). 
A \emph{unary complex event automaton} ($\ucea$) over $\cR$ and $\upset$ is a tuple $\cA = (Q, \Delta, I, F)$ where $Q$ is a finite set of states, $\Delta \subseteq Q \times \upset^+ \times 2^\lset \times Q$ is a finite transition relation, and $I, F \subseteq Q$ are the set of initial and final states, respectively.
Given an $\cR$-stream $S = t_0 t_1 \ldots$, a run $\rho$ of length $n$
of $\cA$ over $S$ is a sequence of transitions
$\rho: q_0 \ \trans{P_0 / L_0} \ q_1 \  \trans{P_1 / L_1} \ \cdots \ \trans{P_n / L_n} \ q_{n+1}$
such that $q_0 \in I$, $t_i \in P_i$ and $(q_i, P_{i}, L_{i}, q_{i+1}) \in \Delta$ for every $i \leq n$.
We say that $\rho$ is \emph{accepting} if $q_{n+1} \in F$, and denote by $\run_n(\cA, S)$ the set of accepting runs of $\cA$ over $S$ of length $n$.
Further, we define the complex event $C_\rho:\lset \rightarrow 2^\bbN$  induced by $\rho$ as $C_\rho(A) = \{i \in [0,n] \mid A \in L_i\}$ for all $A \in \lset$.
Given a stream $S$ and $n \in \bbN$, we define the set of complex events of $\cA$ over $S$ at position $n$ as $\sem{\cA}_n(S) = \{C_\rho \mid \rho \in \run_n(\cA, S) \}$.

$\ucea$ are a generalization of the
\emph{match automata} (MA) introduced in~\cite{GRUpaper}. The main difference is that match automata output \emph{matches}, which
are sets of positions, while $\ucea$ output complex events
(as defined in Section~\ref{sec:prelim}). In particular, $\ucea$ mark events using SO variables in $\lset$, while match automata mark events by using the symbols $\amark$ (add the event to the match) or $\umark$
(do not add the event to the match), respectively. The empty set
$\emptyset$ in $\ucea$ is the analogous of $\umark$-symbol of match
automata meaning that no SO variable is assigned to the position.

$\ucea$ further generalize MA by lifting \emph{structural}
restrictions on the latter. MA required, for example, that every
transition to a final state \emph{mark} the event with $\amark$. No
such restriction exists for $\ucea$. This relaxation increases the
expressibility of the computational model, at the cost of loosing some
closure properties.

\smallskip

\noindent \textbf{Evaluation of of $\ucea$.} Since the goal of CEP in practice is to process events in
high-throughput environments, one would expect computational models
for CEP to be \emph{efficient}. The following proposition shows that
UCEA are inherently efficient under data complexity. A similar
proposition was established in \cite{GRUpaper} for match automata.

\begin{proposition}
  \label{prop:ucea:efficient}
  For every UCEA $\cA$ there exists a RAM algorithm $A$ that maintains
  a data structure $D$ such that: (1) for every $n$, if $A$ has
  processed prefix $t_0,t_1,\dots,t_n$ of stream $S$ then
  $\sem{\cA}_n(S)$ can be enumerated from data structure $D$ with
  constant delay, and (2) it takes $O(1)$ time to update $D$ upon the
  arrival of a new event $t_{n+1}$.
\end{proposition}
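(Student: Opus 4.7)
The proposition is fundamentally a \emph{data complexity} claim: since $\cA = (Q, \Delta, I, F)$ is fixed, the number of states, the number of transitions, and the cost of evaluating any predicate $P \in \upset^+$ against a single tuple are all constants. My plan is to maintain a layered DAG $D$ whose layer $i$ (for $0 \leq i \leq n+1$) contains exactly those states $q \in Q$ reachable at time $i$ via some partial run on $t_0, \dots, t_{i-1}$. Each node at layer $i \geq 1$ stores a list of \emph{backpointers} to nodes at layer $i-1$: for every transition $(q', P, L, q) \in \Delta$ with $t_{i-1} \in P$ and $q'$ in layer $i-1$, there is a backpointer from the layer-$i$ node for $q$ to the layer-$(i-1)$ node for $q'$, annotated with its label set $L \in 2^\lset$. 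By construction every stored node is backwards-reachable to some initial node in layer $0$.

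Update is immediate: upon arrival of $t_{n+1}$, scan the (constant-sized) transition relation and, for each $(q', P, L, q) \in \Delta$ with $q'$ in the current top layer and $t_{n+1} \in P$, ensure a node for $q$ at the new top layer and record the corresponding backpointer. Since $|\Delta|$ is a constant and each predicate check costs $O(1)$, the overall cost per event is $O(1)$.

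For enumeration of $\sem{\cA}_n(S)$ from $D$, I start from each final state in layer $n+1$ and perform a depth-first traversal along backpointers back to layer~$0$. Each completed backward path is in bijection with some $\rho \in \run_n(\cA, S)$, and reading off its backpointer labels yields $C_\rho$ atom by atom, via the defining equation $C_\rho(A) = \{i \mid A \in L_i\}$. The critical structural property---that every stored node is backwards-reachable to an initial node---guarantees that DFS never hits a dead end, so each primitive step (emitting an atom $(A, i)$, descending to the next backpointer, or backtracking) costs $O(1)$.

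The main obstacle is turning this sketch into a formally constant-delay procedure rather than merely an output-linear one, and in particular arranging backtracking so that the delay between successive complete complex events is bounded by a constant (amortized) and not by the length of a run. Exactly this issue has already been handled for the match-automata model of \cite{GRUpaper} via standard trimming plus flashlight-style DFS on layered DAGs. Since a $\ucea$ differs from a match automaton only in that each transition carries a subset $L \subseteq \lset$ of (constant-bounded) labels rather than a single mark in $\{\amark, \umark\}$, the match-automata enumeration algorithm lifts essentially verbatim, propagating $L$ whenever an atom is emitted; I expect the bulk of the technical work to lie in pushing through this adaptation and verifying that the richer label sets do not disturb the amortization argument.
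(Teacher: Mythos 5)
Your high-level plan (exploit data complexity, keep per-state reachability information with labelled backpointers, enumerate by backward DFS from final states, and lift the match-automata algorithm of \cite{GRUpaper}) is the same strategy the paper follows, and you correctly locate the difficulty in the enumeration phase. However, the concrete data structure you commit to does not deliver the stated guarantee, and the step you defer to \cite{GRUpaper} does not repair it. In your layered DAG there is one node per reachable state per time step and a backpointer for \emph{every} applicable transition, including those with $L=\emptyset$. Consequently every root-to-layer-$0$ path has length $n+1$, so the time to emit a single complex event $C_\rho$ is $\Theta(n)$ rather than $O(|C_\rho|)$ as the definition of constant-delay enumeration requires; a run may mark only a handful of positions while traversing thousands of $\emptyset$-labelled transitions. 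Trimming and flashlight-style DFS address dead ends and the delay \emph{between} outputs, not the length of each individual path, so they do not close this gap. The essential idea in the paper's construction (inherited from the match-automata algorithm) is precisely that $\emptyset$-labelled transitions create \emph{no} node at all: each state keeps a linked list with $\mathit{first}_q/\mathit{last}_q$ pointers, a non-marking transition from $p$ to $q$ splices $p$'s entire old list onto $q$'s new list in $O(1)$ by pointer surgery, and only marking transitions allocate a node (storing the time, the label set $L$, and a $[\mathit{top},\mathit{bot}]$ window into the predecessor list). Backward traversal then touches only nodes corresponding to marked positions, which is what makes the per-output cost $O(|C|)$.

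A second, smaller omission: enumeration must be \emph{without repetition}, but two distinct accepting runs in $\run_n(\cA,S)$ that differ only on states visited via $\emptyset$-labelled transitions induce the same complex event, so a DFS over all backward paths can emit duplicates. The paper sidesteps this by assuming the automaton is I/O deterministic before running the algorithm; your proposal needs either that normalization or an explicit deduplication argument.
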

Here, constant-delay enumeration of $\sem{\cA}_n(S)$ from $D$ means
that there exists a RAM routine $\texttt{enum}$ that takes $D$ as
input and that enumerates all complex events in $\sem{\cA}_n(S)$
without repetition, such that (1) the time to initialize enumeration,
finalize enumeration, as well as the time spent between finishing the
output of one complex event and starting the output of the next
complex event is constant, and (2) for each output complex event $C$,
the time spent outputting complex event $C$ is linear in the size of
$C$.


\subsection{Expressiveness of Unary  Formulas}\label{subsec:expressiveness}

The following proposition shows that every formula in $\socel$ with
unary extension of FO predicates can be computed by a complex event
automaton with the same set of predicates.

\begin{proposition}
  \label{prop:unarycel-cea}
  Let $\upset$ be a set of unary FO predicates. For every formula
  $\varphi$ in $\socel(\ext{\upset})$ there exists a $\ucea$ $\cA$ over
  $\upset$ such that $\sem{\varphi}_n(S) = \sem{\cA}_n(S)$, for all streams
  $S$ and $n \in \bbN$.
\end{proposition}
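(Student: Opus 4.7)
The plan is to proceed by structural induction on $\varphi$, compiling each $\socel$ constructor into a UCEA transformation, much as in the Thompson construction for regular expressions. To make the induction modular, I would first extend the semantics of UCEA to arbitrary sub-ranges: let $\sem{\cA}(S, i, j)$ denote the set of complex events $C_\rho$ for sequences $\rho : q_i \trans{P_i/L_i} q_{i+1} \trans{P_{i+1}/L_{i+1}} \cdots \trans{P_j/L_j} q_{j+1}$ with $q_i \in I$, $q_{j+1} \in F$, and $t_\ell \in P_\ell$, $(q_\ell, P_\ell, L_\ell, q_{\ell+1}) \in \Delta$ for $i \leq \ell \leq j$. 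Then $\sem{\cA}_n(S) = \sem{\cA}(S, 0, n)$, and I strengthen the statement to be proved to $\sem{\cA_\varphi}(S, i, j) = \sem{\varphi}(S, i, j)$ for every $i \leq j$ and every stream $S$; the proposition follows by taking $(i,j) = (0,n)$.

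For the base case $\varphi = R$, I build $\cA_R$ with states $\{q_0, q_1\}$, $I = \{q_0\}$, $F = \{q_1\}$, a commit transition $(q_0, \tuples(R), \{R\}, q_1)$, and skip self-loops $(q_0, \tuples(R'), \emptyset, q_0)$ for each $R' \in \cR$ (all predicates lie in $\upset^+$). Since $q_1$ has no outgoing transitions, the only accepting sub-run on $[i, j]$ must use skip loops at positions $i,\dots,j-1$ and the commit transition at position $j$, producing $[R \mapsto \{j\}]$ exactly when $\type(S[j]) = R$, matching $\sem{R}(S, i, j)$.

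For the inductive step I would transform the UCEA(s) of the subformula(s) as follows. For $\varphi \cin A$, replace $L$ by $L \cup \{A\}$ in every transition of $\cA_\varphi$ with $L \neq \emptyset$; this implements $C[A](A) = \supp(C)$ since $\supp(C_\rho)$ is exactly the set of positions at which a non-empty label is carried. For $\varphi[A \rightarrow B]$, replace $A$ by $B$ in every label set. For $\varphi \FILTER P(A)$ with $P = \ext{P_0}$ and $P_0 \in \upset$, intersect $P'$ with $P_0$ in every transition $(q, P', L, q')$ of $\cA_\varphi$ having $A \in L$; the new predicate $P' \cap P_0$ remains in $\upset^+$ by closure under conjunction, and a position is marked with $A$ iff such a refined transition fires, so $C_S(A) \subseteq P_0$ as required by the universal-extension semantics. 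For $\varphi_1 \cor \varphi_2$, take the disjoint union of automata. For $\varphi_1 \sq \varphi_2$, take the disjoint union of states, declare $I_1$ initial and $F_2$ final, and add bridge transitions $(q, P, L, q')$ for every $q \in F_1$ and every $(q_0, P, L, q') \in \Delta_2$ with $q_0 \in I_2$. For $\varphi \ks$, augment $\cA_\varphi$ with analogous loop-back transitions $(q, P, L, q')$ for every $q \in F$ and every $(q_0, P, L, q') \in \Delta$ with $q_0 \in I$, keeping $I$ and $F$ unchanged.

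The main obstacle is the correspondence proof for $\sq$ and $\ks$: one must exhibit a bijection between accepting sub-runs of the combined UCEA on $[i, j]$ and decompositions of the complex events into $C_1 \cdot C_2$ (or, for $\ks$, into an iterated concatenation). The key invariant to maintain throughout the induction is that the last transition of every accepting sub-run of $\cA_\varphi$ carries a non-empty label---equivalently, $j \in \supp(C)$ for every $C \in \sem{\cA_\varphi}(S, i, j)$. This invariant holds in the base case (the commit transition has label $\{R\}$) and is preserved by all inductive constructions; thanks to it, the position of each bridge or loop-back transition in a run pinpoints a unique split point, making the correspondence bijective. Checking that all newly introduced transitions are valid UCEA transitions is immediate, since they merely copy predicate and label fields from existing transitions.
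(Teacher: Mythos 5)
Your proposal is correct and follows essentially the same route as the paper: the paper derives this proposition from the inductive, Thompson-style construction given for Theorem~\ref{thm:equiv:socel:ucea}, with the same base automaton (skip self-loops plus a labelled commit transition) and the same treatments of $\scin$, renaming, $\sFILTER$ and $\scor$. The only cosmetic difference is in $\sq$ and $\ks$: the paper duplicates the transitions \emph{entering} final states of the first automaton so that they also enter initial states of the second, whereas you duplicate the transitions \emph{leaving} initial states of the second so that they also leave final states of the first --- both are standard, and your explicit invariant that the last transition of every accepting run carries a non-empty label (i.e. $j \in \supp(C)$) is precisely what makes either version work.
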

The proof is by a straightforward induction on $\varphi$. It is
natural to ask whether the converse of
Proposition~\ref{prop:unarycel-cea} also holds, namely, if every
$\ucea$ $\cA$ over $\upset$ has an equivalent formula in
$\socel(\ext{\upset})$. Here, however, the answer is no, because
$\ucea$ can make decisions based on tuples that are not part of the
output complex event, while formulas cannot.
\begin{figure}
	\begin{center}
		\begin{tikzpicture}[->,>=stealth, semithick, auto, initial text= {}, initial distance= {3mm}, accepting distance= {4mm}, node distance=4cm, semithick]
		\tikzstyle{every state}=[draw=black,text=black,inner sep=0pt, minimum size=8mm]
		\node[initial,state]	(1) 				{$q_1$};
		\node[state]			(2) [right of=1]	{$q_2$};
		\node[accepting,state]	(3) [right of=2]	{$q_3$};
		\path
		(1)
		edge 				node {$\tuples(H) \mid \{H\}$} (2)
		edge [loop above] 	node {$\TRUE \mid \emptyset$} (1)
		(2)
		edge [loop above] node {$\TRUE \mid \emptyset$} (2)
		edge node {$\tuples(T) \mid \emptyset$} (3);
		\end{tikzpicture}
		\vspace{-.1cm}
		\caption{A Unary Complex Event Automaton that has no equivalent formula in $\socel$.}\label{fig:complement_CEA}
		\vspace{-.74cm}
	\end{center}
\end{figure}
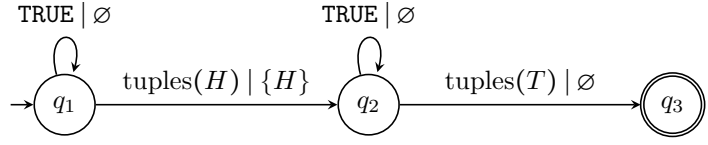
Consider for example the $\ucea$ of Figure~\ref{fig:complement_CEA}. This automaton will output complex events of the form $C = \{ (H,i) \}$, provided that $S[i]$ is of type $H$ and there is a position $j>i$ such that $S[j]$ is of type $T$. It is straightforward to prove that this cannot be achieved by $\socel$ formulas because either such a formula would not check whether $T$ events occurs later, or it would also include the position $j$ in $\supp(C)$ ---which the automaton does not.

In order to capture the exact expressiveness of $\socel$ formulas with
unary predicates, we restrict $\ucea$ to a new semantics called the
\emph{$*$-semantics}.
Formally, let $\cA = (Q, \Delta, I, F)$ be a complex event automaton
and $S=t_1,t_2,\ldots$ be a stream.  A $*$-run $\rho^*$ of $\cA$ over
$S$ ending at position $n$ is a sequence of transitions: $ \rho^*: (q_0,0) \ \trans{P_1/L_1}
\ (q_1,i_1) \ \trans{P_2/L_2} \ \cdots \ \trans{P_m/L_m} (q_m,i_m)$
such that $q_0 \in I$, $0 < i_1 < \ldots < i_m = n$ and, for every $j \geq
1$, $(q_{j-1}, P_j, L_j, q_j) \in \Delta$ with $S[i_j] \in P_j$ and
$L_j \neq \emptyset$.  We say that $\rho^*$ is an accepting $*$-run if
$q_m \in F$. Furthermore, we denote by $C_\rho:\lset \rightarrow
2^\bbN$ the complex event induced by $\rho^*$ as $C_{\rho^*}(A) =
\{i_j \mid A \in L_j\}$ for all $A \in \lset$.  The set of all complex events generated by
$\cA$ over $S$ under the $*$-semantics is defined as: $ \ssem{\cA}_n(S)
= \{C_{\rho^*} \mid \text{$\rho^*$ is an accepting $*$-run of $\cA$
  over $S$  ending at $n$}\} $.  Notice that under this semantics, the automaton no longer has the ability
to verify a tuple without marking it.


We can now effectively capture the expressiveness of unary formulas as follows.
\begin{theorem} \label{theo:CEP_MAs} For every set $\upset$ of unary FO
  predicates, $\socel(\ext{\upset})$ is equally expressive as
  $\ucea(\upset)$ under the $*$-semantics, namely, for every formula
  $\varphi$ in $\socel(\ext{\upset})$, there exists a $\ucea$ $\cA$ over
  $\upset$ such that $\sem{\varphi}_n(S)=\ssem{\cA}_n(S)$ for every 
  $S$ and $n$, and vice versa.
\end{theorem}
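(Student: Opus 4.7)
The plan is to prove both directions separately. Direction one ($\socel$ to $\ucea$ under $*$-semantics) is by structural induction on $\varphi$, maintaining the invariant that the constructed $\ucea$ $\cA_\varphi$ has no transition with empty label set, so that every transition is usable in a $*$-run. Direction two ($\ucea$ under $*$-semantics to $\socel$) is a Kleene-theorem-style state elimination that first translates each transition into a basic $\socel$ formula and then combines them with $\cor$, $\sq$, and $\ks$.

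For direction one, the base case $\varphi = R$ uses the two-state automaton with the single transition $q_0 \trans{\tuples(R)/\{R\}} q_1$: a $*$-run ending at $n$ exists iff $\type(S[n]) = R$, producing $\{(R,n)\}$, matching $\sem{R}_n(S)$. The inductive cases are: $\varphi \cin A$ adds $A$ to every transition's label set (preserving non-emptiness); $\varphi[A \rightarrow B]$ rewrites every label set according to the renaming rule of the semantics; $\varphi \FILTER P(A)$ intersects the transition predicate with $P$ on every transition whose label set contains $A$ (the filter holds vacuously on transitions that do not mark $A$); $\varphi_1 \cor \varphi_2$ is the disjoint union; $\varphi_1 \sq \varphi_2$ glues the two automata by adding, for each transition $(q, P, L, q') \in \Delta_2$ with $q \in I_2$, the copy $(f, P, L, q')$ for every $f \in F_1$, and taking $F_2$ as the new set of final states; and $\varphi \ks$ keeps the same initial and final states of $\cA_\varphi$ and adds, for each transition $(q, P, L, q') \in \Delta$ with $q \in I$, a copy $(f, P, L, q')$ for every $f \in F$. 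Soundness and completeness follow by induction, because $\socel$'s sequencing permits arbitrary gaps between submatches, which corresponds exactly to the skipping of non-transitioned positions in a $*$-run.

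For direction two, first translate each transition $\tau = (q, P, L, q') \in \Delta$ with $L = \{A_1, \dots, A_m\}$ and $P = \tuples(R) \cap P_1 \cap \dots \cap P_k$ into the single-event $\socel$ formula
\[\phi_\tau \ = \ R \cin A_1 \cin \dots \cin A_m \FILTER \ext{P_1}(A_1) \FILTER \dots \FILTER \ext{P_k}(A_1);\]
if $P$ contains no type restriction, replace $R$ by a disjunction over the (finitely many) relation names of $\cR$; if $P$ contains incompatible type restrictions, discard $\tau$. Then apply state elimination: enumerate the states as $q_1, \dots, q_K$ and define $R^{(k)}_{ij}$ as the $\socel$ formula for all $*$-runs from $q_i$ to $q_j$ whose intermediate states have index at most $k$, with base $R^{(0)}_{ij} = \bigcup_{\tau \in \Delta(q_i, q_j)} \phi_\tau$ and recursion
\[R^{(k)}_{ij} \ = \ R^{(k-1)}_{ij} \ \cor \ R^{(k-1)}_{ik} \sq R^{(k-1)}_{kj} \ \cor \ R^{(k-1)}_{ik} \sq (R^{(k-1)}_{kk})\ks \sq R^{(k-1)}_{kj},\]
where the middle branch encodes the zero-iteration case, which is needed because $\socel$'s $\ks$ means ``one or more'' rather than ``zero or more''; any branch containing an undefined subformula is dropped from the disjunction. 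The final formula is $\bigcup_{q_i \in I,\ q_j \in F} R^{(K)}_{ij}$, and a standard correctness proof for state elimination shows that this matches $\ssem{\cA}_n(S)$ for every $n$.

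The main obstacle is direction two, specifically the mismatch between $\socel$'s operators and classical regular expressions: $\socel$ has no $\epsilon$ constant and $\ks$ denotes strict iteration. The fix is to carefully track which $R^{(k)}_{ij}$ are non-empty, explicitly encode the zero-iteration case in the recursion using $\cor$, and avoid ever having to represent the empty language. A secondary subtlety is that a predicate $P \in \upset^+$ need not fix a type, forcing a disjunction over all relation names of $\cR$, which only mildly inflates the size of the output formula.
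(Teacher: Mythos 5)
Your overall strategy coincides with the paper's: direction one is the same structural induction (with the same key adjustment of dropping the $\TRUE\mid\emptyset$ self-loop in the base case, since the $*$-semantics itself skips unmarked positions), and direction two is the same McNaughton--Yamada-style state elimination with exactly the same three-branch recursion compensating for the fact that $\ks$ means ``one or more''. The dual gluing you use for $\sq$ and $\ks$ (copying outgoing transitions of initial states of the second automaton, rather than incoming transitions of final states of the first) is an inessential variation of the paper's construction.

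There is, however, one step in direction two that fails as written: the translation of a single transition $\tau=(q,P,L,q')$ into $\phi_\tau = R \cin A_1 \cin \dots \cin A_m \FILTER \dots$. The semantics of the atom $R$ puts the relation name itself into the output complex event, so $\phi_\tau$ yields $[R\to\{n\},\,A_1\to\{n\},\dots,A_m\to\{n\}]$, whereas the $*$-run of $\cA$ through $\tau$ induces only $[A_1\to\{n\},\dots,A_m\to\{n\}]$, since the induced complex event records exactly the labels in $L$. Because the theorem asserts equality of sets of complex events --- not merely of their supports --- your formula is not equivalent to $\cA$ whenever some transition has $R\notin L$. The paper repairs this with the renaming operator, taking $(R \FILTER P)[R\to A_1]$ as the core single-event formula and then applying $\cin A_2,\dots,\cin A_m$; renaming is available in core $\socel$, so the fix is cheap, but it is a necessary ingredient your proof omits. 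Relatedly, you should state explicitly that transitions with $L=\emptyset$ are discarded in the base case of the elimination (they are unusable in a $*$-run, and without an $A_1$ there is nothing to rename $R$ into). The remaining ingredients of direction two --- the disjunction over relation names when $P$ fixes no type, dropping empty branches, and the correctness of the recursion --- match the paper.
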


For every stream $S$ and complex event $C$, let $S[C]$ refer to the subsequence of $S$ induced by $\supp(C)$.
An interesting property of the $*$-semantics is that, for every $\ucea$ $\cA$, stream $S$, and complex event $C \in \ssem{\cA}(S)$, we can arbitrarily modify, add and remove tuples in $S$ that are not mentioned in $S[C]$, and the original tuples in $S[C]$ would still conform a complex event of $\cA$ over the new stream.
To formalize this, we need some additional definitions.
A \emph{processing-function} $f$ is a function $f:\streams(\cR) \rightarrow 2^\cset$, where $\streams(\cR)$ is the set of all $\cR$-streams and $\cset$ is the set of all complex events (i.e. the set of all finite functions $C: \lset \rightarrow 2^\bbN$).
Although $f$ can be any function that returns set of complex events on input streams, we are interested in the processing-functions $f$ that can be described either by a $\socel$ formula $\varphi$ (i.e. $f = \sem{\varphi}$) or by a $\ucea$ $\cA$ (i.e. $f = \sem{\cA}$).
Let $S_1$, $S_2$ be two streams and $C_1$, $C_2$ be two complex events.
We say that $S_1$ and $C_1$ are \emph{$*$-related} with $S_2$ and
$C_2$, written as $(S_1,C_1)=_*(S_2,C_2)$, if $S_1[C_1] = S_2[C_2]$.

Consider now a match-function $f$. We say that $f$ has the \emph{$*$-property} if, for every stream $S$ and complex event $C \in f(S)$, it holds that $C' \in f(S')$ for every $S'$ and~$C'$ such that $(S,C)=_*(S',C')$.
A way to understand the $*$-property is to see $S'$ as the result of fixing the tuples in $S$ that are part of $S[C]$ and adding or removing tuples arbitrarily, and defining $C'$ to be the complex event that has the same original tuples of~$C$. The following theorem states the relation that exists between the $*$-property and the $*$-semantics over~$\ucea$.

\begin{proposition} \label{prop:MA_MAs}
	If the processing-function defined by a $\ucea$~$\cA$ has the $*$-property, then there exists a $\ucea$ $\cA'$ such that $\sem{\cA}_n(S) = \ssem{\cA'}_n(S)$ for every $S$ and $n$.
\end{proposition}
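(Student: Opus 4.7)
The plan is to construct $\cA'$ from $\cA$ by \emph{contracting} consecutive $\emptyset$-labeled transitions into single non-$\emptyset$-labeled ones. Formally, set $\cA' = (Q, \Delta', I, F)$, where $(q, P, L, q') \in \Delta'$ iff $L \neq \emptyset$ and there is in $\Delta$ a path
\[
q = q_0 \trans{P_1/\emptyset} q_1 \trans{P_2/\emptyset} \cdots \trans{P_{k-1}/\emptyset} q_{k-1} \trans{P/L} q_k = q'
\]
for some $k \geq 1$ in which every intermediate predicate $P_1,\dots,P_{k-1}$ is satisfiable (i.e., contains at least one tuple). Since $Q$ and the set of predicates appearing in $\Delta$ are both finite, $\Delta'$ is finite; the degenerate case $k=1$ in particular copies every non-$\emptyset$-labeled transition of $\cA$ into $\cA'$.

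For the inclusion $\sem{\cA}_n(S) \subseteq \ssem{\cA'}_n(S)$ the $*$-property is not needed. Given $C \in \sem{\cA}_n(S)$ witnessed by an accepting run $\rho$ of $\cA$ on $S$, I partition the transitions of $\rho$ into maximal blocks, each block being a (possibly empty) prefix of $\emptyset$-labeled transitions followed by exactly one non-$\emptyset$-labeled transition. By the definition of $\Delta'$, each block collapses to a single transition of $\cA'$, and the concatenation of these transitions is an accepting $*$-run of $\cA'$ on $S$ inducing exactly~$C$.

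The converse inclusion $\ssem{\cA'}_n(S) \subseteq \sem{\cA}_n(S)$ is where the $*$-property does the essential work. Given an accepting $*$-run $\rho^*$ of $\cA'$ on $S$ producing $C$, every transition of $\rho^*$ was put into $\Delta'$ on account of some contracted path in $\cA$; for every intermediate $\emptyset$-labeled step of such a path, pick a witness tuple in its (satisfiable) predicate. Splicing these witnesses into the positions of $S$ that $\rho^*$ skipped produces an auxiliary stream $\tilde S$ that agrees with $S$ on $\supp(C)$ but may differ elsewhere. By construction the expansion is now a bona fide standard run of $\cA$ on $\tilde S$ inducing a complex event $\tilde C$ with $\supp(\tilde C)=\supp(C)$ and the same label assignment, so $\tilde C = C \in \sem{\cA}(\tilde S)$. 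Because $\tilde S[\tilde C] = S[C]$, the $*$-property transfers the membership to $C \in \sem{\cA}(S)$, the ending position remaining $n$ since $n\in\supp(C)$ on both sides.

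The hard part is the backward direction and, specifically, the controlled fabrication of $\tilde S$: without the $*$-property the intermediate tuples of $S$ could obstruct every standard run of $\cA$ producing $C$, and the satisfiability clause in the definition of $\Delta'$ is there exactly to ensure that the required witnesses exist. The two minor bookkeeping points are (i) the leading segment before the first marking, which is absorbed into the first contracted path starting at an initial state, and (ii) the $\socel$-style convention followed implicitly in the statement, namely that $n\in\supp(C)$ for every $C\in\sem{\cA}_n(S)$, which renders the trailing segment vacuous and makes ``ends at $n$'' mean the same thing for the standard and the $*$-semantics.
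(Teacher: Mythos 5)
Your proof is correct in substance but follows a genuinely different route from the paper's. The paper takes $\cA'$ to be simply $\cA$ with every $\emptyset$-labelled transition deleted, and then uses the $*$-property in \emph{both} inclusions: for $\sem{\cA}_n(S)\subseteq\ssem{\cA'}_n(S)$ it passes to the contracted stream $S[C]$, where the $*$-property yields an accepting run of $\cA$ that marks every position (hence uses no $\emptyset$-transitions and survives in $\cA'$), and this run is then re-read as a $*$-run over $S$; for the converse it contracts the given $*$-run to $S[C]$, where it is already literally a run of $\cA$ because $\Delta'\subseteq\Delta$, and applies the $*$-property to return to $S$. Your path-contraction automaton instead makes the forward inclusion a direct, hypothesis-free block-by-block simulation, which is a nice dividend: it isolates the $*$-property as being needed only for $\ssem{\cA'}_n(S)\subseteq\sem{\cA}_n(S)$. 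The price is the satisfiability side-condition on intermediate predicates and the witness-splicing in the backward direction, neither of which the paper needs since over $S[C]$ no intermediate tuples have to be invented. Two small caveats: (i) after splicing, the number of inserted witnesses generally differs from the number of positions the $*$-run skipped, so $\supp(\tilde C)\neq\supp(C)$ and you should not claim $\tilde C=C$ as functions on positions; this is harmless because the $*$-property only requires $(\tilde S,\tilde C)=_*(S,C)$, that is $\tilde S[\tilde C]=S[C]$, which your construction does guarantee. (ii) Your appeal to the convention that $n\in\supp(C)$ for $C\in\sem{\cA}_n(S)$, and the claim that the $*$-property returns membership at ending position exactly $n$, are exactly as unjustified as the corresponding steps in the paper's own proof (the $*$-property, as defined on processing-functions, forgets the ending position), so this is a shared rather than a new imprecision.
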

By combining Theorem~\ref{theo:CEP_MAs} and Proposition~\ref{prop:MA_MAs} we get the following result that shows when a processing function can be defined by unary formula.
\begin{corollary}\label{cor:CEPLs}
  Let $f$ be a processing function. Then $f$ can be defined by a
  $\ucea$ over $\upset$ and has the $*$-property iff there
  exists a formula $\varphi$ in $\socel(\ext{\upset})$ such that $f =
  \sem{\varphi}$.
\end{corollary}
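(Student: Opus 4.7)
The plan is to derive Corollary~\ref{cor:CEPLs} directly from Theorem~\ref{theo:CEP_MAs} and Proposition~\ref{prop:MA_MAs} by handling the two directions of the biconditional separately.

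For the $(\Leftarrow)$ direction, I assume $f = \sem{\varphi}$ for some $\varphi \in \socel(\ext{\upset})$. First I would invoke Theorem~\ref{theo:CEP_MAs} to obtain a $\ucea$ $\cA_0$ over $\upset$ satisfying $\sem{\varphi}_n(S) = \ssem{\cA_0}_n(S)$ for every $S$ and $n$, so that $f = \ssem{\cA_0}$. To exhibit a $\ucea$ under ordinary semantics computing $f$, I would build $\cA_1$ from $\cA_0$ by retaining only the non-$\emptyset$-labeled transitions and adding a ``skip'' self-loop $(q,\TRUE,\emptyset,q)$ at every state $q$. Ordinary runs of $\cA_1$ then correspond to $*$-runs of $\cA_0$: self-loops handle precisely the positions outside $\supp(C)$, contributing $\emptyset$ and therefore not altering the induced complex event, while the non-self-loop transitions reproduce an accepting $*$-run. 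This gives $\sem{\cA_1} = \ssem{\cA_0} = f$. Finally, I would verify the $*$-property of $f$ directly from the $*$-semantics: since every transition in a $*$-run must have a non-empty label, the positions it inspects coincide with $\supp(C_{\rho^*})$. Hence if $(S,C)=_*(S',C')$, meaning $S[C]=S'[C']$, the same sequence of transitions used on $S$ to produce $C$ is valid on $S'$ (the predicates only inspect tuples at positions in $\supp(C)$ and $\supp(C')$, which coincide as subsequences), yielding $C' \in f(S')$.

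For the $(\Rightarrow)$ direction, assume $f$ is defined by a $\ucea$ $\cA$ over $\upset$ and has the $*$-property. By Proposition~\ref{prop:MA_MAs} there is a $\ucea$ $\cA'$ with $\sem{\cA}_n(S) = \ssem{\cA'}_n(S)$ for every $S$ and $n$, so $f = \ssem{\cA'}$. Applying the ``vice versa'' part of Theorem~\ref{theo:CEP_MAs} to $\cA'$ yields a formula $\varphi \in \socel(\ext{\upset})$ with $\sem{\varphi}_n(S) = \ssem{\cA'}_n(S)$ for every $S$ and $n$. Chaining the identities gives $f = \sem{\varphi}$, as required.

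Most of the work is bookkeeping on top of the two earlier results. The main technical care lies in the $(\Leftarrow)$ direction: I would have to make the self-loop construction precise (ensuring $\TRUE$ is representable in $\upset^+$, and reconciling the ordinary-run notion ``of length $n$'' with the $*$-run requirement that $i_m = n$, which is achieved because both sides are indexed by the endpoint of the last marked transition). The $*$-property argument also deserves some care, since $\supp(C)$ and $\supp(C')$ need not share actual position indices, only the ordered sequence of tuples; the transfer of a $*$-run from $S$ to $S'$ must be phrased in terms of this index-free alignment. Once these details are in place, the corollary follows immediately.
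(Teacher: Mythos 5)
Your proof is correct and follows the paper's intended derivation: the paper gives no separate proof of this corollary, presenting it as the direct combination of Theorem~\ref{theo:CEP_MAs} and Proposition~\ref{prop:MA_MAs}, which is exactly your $(\Rightarrow)$ direction, while your $(\Leftarrow)$ direction supplies the two facts the paper leaves implicit (UCEA-definability, which you could also get by citing Proposition~\ref{prop:unarycel-cea} directly instead of the skip-self-loop construction, and the $*$-property of any function of the form $\ssem{\cA}$, which the paper asserts informally just before defining the $*$-property). The details you flag---availability of $\TRUE$ in $\upset^+$, endpoint bookkeeping, and the index-free alignment of $\supp(C)$ with $\supp(C')$---are the right ones to worry about and all go through.
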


Many query languages for CEP have been proposed in the literature and
most of them include all of the operators of $\socel$ as defined in
Section~\ref{sec:prelim}.  However, some languages include additional
useful operators like $\scand$, $\scall$ and $\scunless$ with the
following semantics. Given a complex event $C$, a stream $S$ and $i, j \in \bbN$:
\begin{itemize}
	\item $C \in \sem{\rho_1 \cand \rho_2}(S, i, j)$ iff $C \in \sem{\rho_1}(S,i,j) \cap \sem{\rho_2}(S,i,j)$.
	\item $C \in \sem{\rho_1 \call \rho_2}(S, i, j)$ iff there exist $i_1,i_2,j_1,j_2 \in \bbN$ and matches $C_1$, $C_2$ such that $C_k \in \sem{\rho_k}(S,i_k,j_k)$, $C = C_1 \cup C_2$, $i = \min\{i_1,i_2\}$ and $j = \max\{j_1,j_2\}$.
	\item $C \in \sem{\rho_1 \cunless \rho_2}(S, i, j)$ iff $C \in
          \sem{\rho_1}(S,i,j)$ and, for every complex event $C'$ and $i',j' \in \bbN$ such that $i \leq i' \leq j' \leq j$, it holds that $C' \notin \sem{\rho_2}(S,i',j')$.
\end{itemize}

Interestingly, from a language design point of view, the operators
$\scand$ and $\scall$ are redundant in the sense that $\scand$ and
$\scall$ do not add expressive power in the unary case. Indeed,
$\scand$ and $\scall$ can be defined by $\ucea$ and both satisfy the
$*$-property. 

\begin{corollary} \label{cor:and_all} Let $\upset$ be a set of unary FO
  predicates. For every expression $\varphi$ of the form
  $\varphi_1~\OP~\varphi_2$, with $\OP \in \{\scand, \scall\}$ and
  $\varphi_i$ in $\socel(\ext{\upset})$, there is a $\socel(\ext{\upset})$
  formula $\varphi'$ such that $\sem{\varphi}_n(S) = \sem{\varphi'}_n(S)$
  for every $S$ and $n$.
\end{corollary}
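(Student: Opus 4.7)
The plan is to verify both hypotheses of Corollary~\ref{cor:CEPLs} for the processing function $f := \sem{\varphi_1 \OP \varphi_2}$: (i) $f$ is definable by some $\ucea$ over $\upset$, and (ii) $f$ has the $*$-property. Granting these, Corollary~\ref{cor:CEPLs} directly yields a formula $\varphi' \in \socel(\ext{\upset})$ with $\sem{\varphi'} = f$. To obtain (i), Proposition~\ref{prop:unarycel-cea} gives $\ucea$ $\cA_1, \cA_2$ over $\upset$ with $\sem{\cA_i} = \sem{\varphi_i}$, and I build a combined $\ucea$ by a product-style construction.

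For $\OP = \scand$, I would use the standard synchronous product: states $Q_1 \times Q_2$, initial and final sets $I_1 \times I_2$ and $F_1 \times F_2$, and a transition $((q_1,q_2), P_1\cap P_2, L, (q_1',q_2'))$ whenever $(q_i,P_i,L,q_i') \in \Delta_i$ for both $i$ with the same label $L$. Closure of $\upset^+$ under intersection keeps $P_1 \cap P_2 \in \upset^+$, and the shared-$L$ requirement forces the two runs to induce the same complex event, matching the intersection semantics of $\scand$. For $\OP = \scall$, I first augment each $\cA_i$ by a fresh initial state $q_i^-$ and fresh final state $q_i^+$ equipped with idle self-loops $(\tuples(R), \emptyset)$ for each $R \in \cR$, plus copies of the transitions entering $I_i$ (rerouted from $q_i^-$) and leaving $F_i$ (rerouted to $q_i^+$). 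The product of the augmented automata then carries transitions with predicate $P_1 \cap P_2$ and label $L_1 \cup L_2$. Any accepting run of the product decomposes into sub-runs $\rho_1, \rho_2$ of $\cA_1, \cA_2$ each preceded and followed by idle phases, and induces the complex event $C_{\rho_1} \cup C_{\rho_2}$, realizing the $\scall$ semantics with $i_k, j_k$ as the boundaries of the active portion of $\rho_k$.

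For (ii), Corollary~\ref{cor:CEPLs} applied to $\varphi_1, \varphi_2$ themselves ensures that each $\sem{\varphi_i}$ already satisfies the $*$-property. The $\scand$ case is immediate: if $C \in \sem{\varphi_1}(S) \cap \sem{\varphi_2}(S)$ and $(S,C) =_* (S',C')$, applying the $*$-property of each $\sem{\varphi_i}$ gives $C' \in \sem{\varphi_1}(S') \cap \sem{\varphi_2}(S')$. For $\scall$, take $C = C_1 \cup C_2 \in f(S)$ and $(S,C) =_* (S',C')$; the order-preserving bijection $\pi: \supp(C) \to \supp(C')$ induced by $S[C] = S'[C']$ lets me define $C_k'(A) := \pi(C_k(A))$ componentwise, so that $(S, C_k) =_* (S', C_k')$ holds by restriction. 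The $*$-property of $\sem{\varphi_k}$ then yields $C_k' \in \sem{\varphi_k}(S')$, and $C_1' \cup C_2' = C'$, so $C' \in f(S')$.

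I expect the main obstacle to be twofold. First, in the $\scall$ construction one must verify carefully in both directions that accepting runs of the product correspond exactly to tuples $(C_1, C_2, i_1, i_2, j_1, j_2)$ satisfying $\min\{i_1,i_2\} = 0$ and $\max\{j_1,j_2\} = n$ for the target evaluation position $n$. Second, in the $*$-property argument for $\scall$, before applying the $*$-property of $\sem{\varphi_k}$ one must know $C_k \in \sem{\varphi_k}(S)$, whereas the semantics only directly gives $C_k \in \sem{\varphi_k}(S, i_k, j_k)$; this calls for a small monotonicity lemma $\sem{\varphi_k}(S, i_k, j_k) \subseteq \sem{\varphi_k}(S, 0, j_k)$, proved by induction on the syntax using that the lower index only constrains the starting position of the left operand of $\sq$. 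Modulo these technicalities, the argument is otherwise routine.
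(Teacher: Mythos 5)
Your proposal is correct and follows exactly the route the paper intends: the paper justifies this corollary only by the one-line remark that $\scand$ and $\scall$ ``can be defined by $\ucea$ and both satisfy the $*$-property,'' and then invokes Corollary~\ref{cor:CEPLs}, which is precisely your plan. The additional details you supply (the product constructions, the monotonicity lemma $\sem{\varphi_k}(S,i,j)\subseteq\sem{\varphi_k}(S,0,j)$ needed before applying the $*$-property componentwise, and the care required so that the $\scall$ product enforces $\max\{j_1,j_2\}=n$) are all sound and correctly identified as the only real technicalities.
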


In contrast, the $\scunless$ operator can be defined by $\ucea$ but one can show that there are formulas mentioning $\scunless$ that do not satisfy the $*$-property. Then, by Corollary~\ref{cor:CEPLs}, $\scunless$ is not expressible in $\socel(\ext{\upset})$ with $\upset$ unary FO predicates. This shows that $\scunless$ adds expressibility to unary $\socel$ formulas while remaining executable by $\ucea$. It makes sense then to include $\scunless$ as a primitive in the design of a CER language. Alternatively, one could envision to extend $\socel$ with operators that make it exactly as expressive as UCEA. We take the latter approach in the next section.


\section{Capturing the Expressive power of UCEA}\label{sec:capturing-ma}

As discussed in Section~\ref{sec:unary-core-expr}, given a set $\upset$
of unary FO predicates, $\socel(\ext{\upset})$ captures the class of
$\ucea$ over $\upset$ that have the $*$-property
(Corollary~\ref{cor:CEPLs}). However,
Proposition~\ref{prop:ucea:efficient} established that all $\ucea$ can
be evaluated efficiently, and not only those satisfying the
$*$-property. It makes sense then to study the origin of this lack of
expressive power and extend the language to precisely capture the
expressiveness of the automata model.

\subsection{Expressibility of UCEA and unary SO-CEL}\label{subsec:expressibility:ucea:socel}
	By taking a close look to the characterization of $\socel$ in terms of the $*$-property, one can easily distinguish three shortcomings of $\socel$ that are not  presented in $\ucea$.
	First, every event that is relevant for capturing a complex event  must be part of the output. Although this might be a desired property in some cases, it certainly prevents a user from describing a formula in which the output is simply a subset of the relevant events. 
	This limitation is explained by the $*$-property, and suggests that to capture $\ucea$ we need an operator that allows to remove or, in other words, project events that must appear in the stream but are irrelevant for the output. 
	Although projection is one of the main operators in relational data management systems, it has not be proposed in the context of CER until now, possibly by the difficulty of defining a consistent semantics that combines projection with operators like Kleene closure. 
	Interestingly, we show below that, by using second-order variables, it is straightforward to introduce a projection operator in $\socel$. 
	
	The second shortcoming of $\socel$ is that it cannot express contiguous sequences of events. Indeed, the sequencing operators ($;$ and $\ks$) allow for arbitrary \emph{irrelevant} events to occur in between. While this is a typical requirement in CER, there  might be cases in which a user wants to capture contiguous events. Indeed, strict sequencing has been included in some CER language before~\cite{SASE} as a so-called \emph{selection operator} that only keeps contiguous sequences of events in the output (see Section~\ref{subsec:strict} for further discussion). Given that this can be naturally achieved by $\ucea$ and has been previously proposed in the literature, it is reasonable to include some operators that allow to declare contiguous sequence of events.

	A final feature that is clearly supported by $\ucea$ but not by $\socel$ is specifying that a complex event starts at the beginning of the stream. This feature is not particularly interesting in CER, but we include it as a new operator with the simple objective of capturing the computational model. Actually, this operator is intensively used in the context of regular expression programing where an expression of the form ``$^\wedge R$'' marks that $R$ must be evaluated starting from the beginning of the document. 
	Therefore, it is not at all unusual to include an operator that recognizes events from the beginning of the stream.

	Given the discussion above, we propose to extend $\socel$ with the following operators:
	\[
	\varphi \;\; := \;\;  \varphi \strictsq \varphi\ \mid \ \varphi \strictks \ \mid \  \pi_L(\varphi)\mid \START(\varphi)
	\]
	where $L \subseteq \lset$. Given a formula $\varphi$ of one of the forms above, a complex event $C$, a stream $S$ and positions $i,j$, we say that $C \in \sem{\varphi}(S, i, j)$ if one of the following conditions holds:
	\begin{itemize}
		\item $\varphi = \rho_1 \strictsq \rho_2$ and $C = C_1 \cdot C_2$ for two nonempty complex events $C_1$ and $C_2$ such that $C_1 \in \sem{\rho_1}(S,i,\max(C_1))$, $C_2 \in \sem{\rho_2}(S, \min(C_2),j)$ and $\max(C_1) = \min(C_2) - 1$.
		\item $\varphi = \rho \strictks$ and either $C \in \sem{\rho}(S,i,j)$ or $C \in \sem{\rho \strictsq \rho \strictks}(S,i,j)$.
		\item $\varphi = \pi_L(\rho)$ and there is $C' \in \sem{\rho}(S,i,j)$ such that $C(A) = C'(A)$ if $A \in L$ and $C(A) = \emptyset$ otherwise.
		\item $\varphi = \START(\rho)$, $C \in \sem{\varphi'}(S,i,j)$, and $\min(C) = i$.
	\end{itemize}
	
	The idea behind $\sstrictsq$ and $\strictks$ is to simulate $\sq$ and $\ks$, respectively, but imposing that \emph{irrelevant} events cannot occur in between. This allows us to recognize, for example, the occurrence of an event of type $R$ immediately after an event of type $T$ ($\varphi = R \strictsq T$), or an unbounded series of consecutive events of type $R$ ($\varphi = R\strictks$). Note, however, that the operator $\strictks$ does not impose that intermediate events are contiguous. For example the formula $(R;S)\strictks$ imposes that the last event $S$ of one iteration occurs right before the first event $R$ of the next iteration, but in one iteration the $R$ event and the $S$ event do not need to occur contiguously.\par
	
	\begin{example}
		Following the schema of our running example, suppose that we want to detect a period of temperatures below $0^\circ$ and humidities below 40\%, followed by a sudden increase of humidity (above 45\%). Naturally, we do not expect to skip \emph{irrelevant} temperatures or humidities, as this would defy the purpose of the pattern. Assuming that we are only interested in retrieving the humidity measurements, this pattern would be written as follows:
		$$\pi_H[((H \cin X) \cor T)\strictks : (H \cin Y) \FILTER (X.\texttt{value} < 40 \land T.\texttt{value} < 0 \land Y.\texttt{value} > 45)].$$
	\end{example}

	To denote the extension of $\socel$ with a set of operators $\cO$ we write $\socel\cup\cO$. For readability, we use the special notation $\socelfull$ to denote $\socel\cup\{\!\strictsq\!,\strictks,\pi, \!\!\START\}$.
	
	Having defined the previous operators, we proceed to show that for every set $\upset$ of unary predicates, $\socelfull(\ext{\upset})$ captures the full expressive power of $\ucea$ over $\upset$. To this end, we say that a formula $\varphi$ in $\socelfull(\ext{\upset})$ is equivalent to a $\ucea$ $\cA$ over $\upset$ (denoted by $\varphi\equiv\cA$) if for every stream $S$ and $n \in \bbN$ it is the case that $\sem{\cA}_n(S)=\sem{\varphi}_n(S)$. 

	\begin{theorem}\label{thm:equiv:socel:ucea}
		Let $\upset$ be a set of unary FO predicates. For every $\ucea$ $\cA$ over $\upset$, there exists a formula $\varphi\in\socelfull(\ext{\upset})$ such that $\varphi\equiv\cA$. Conversely, for every formula $\varphi\in\socelfull(\ext{\upset})$ there exists a $\ucea$ $\cA$ over $\upset$ such that $\varphi\equiv\cA$.
	\end{theorem}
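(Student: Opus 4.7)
We prove both directions of the equivalence.

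\textbf{From formulas to UCEA $(\Leftarrow)$.} Every $\socelfull(\ext{\upset})$ formula is equivalent to some $\ucea$ over $\upset$. This extends the inductive construction underlying Proposition~\ref{prop:unarycel-cea} with cases for the new operators. For $\varphi_1 \strictsq \varphi_2$ and $\varphi \strictks$ we apply the standard Thompson-style constructions for concatenation and Kleene closure, but omit the ``skip-any-event'' self-loops used by the non-strict counterparts. For $\pi_L(\varphi)$ we intersect each transition's output label set in the automaton for $\varphi$ with $L$. For $\START(\varphi)$ we restrict the outgoing transitions from initial states to those with non-empty label set, forcing position $0$ to belong to $\supp(C)$. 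Each case is a routine automaton manipulation and preserves semantics.

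\textbf{From UCEA to formulas $(\Rightarrow)$.} Let $\cA=(Q,\Delta,I,F)$ be a $\ucea$ over $\upset$. The plan is state elimination: produce a regular expression over atomic formulas, each simulating a single transition of $\cA$, using $\cor$, $\strictsq$, and $\strictks$ in place of union, concatenation, and Kleene. Because $\ucea$s start at position $0$ and consume every event, we prepend $\START$ and rely on the strict operators to enforce lock-step evaluation. To accommodate transitions with $L=\emptyset$---which ordinarily would leave a consumed position out of $\supp(C)$ and violate the base-$\socel$ invariant $j\in\supp(C)$---we introduce a fresh auxiliary label $\star\in\lset$, disjoint from the labels used in $\cA$, place it on every consumed position during evaluation, and strip it at the end via a projection $\pi_{\lset\setminus\{\star\}}$.

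Concretely, a transition $\delta=(q,P,L,q')$ with $P=Q_1\wedge\cdots\wedge Q_m\in\upset^+$ (each $Q_i$ is either a member of $\upset$ or a type predicate $\tuples(R)$) and $L=\{A_1,\ldots,A_k\}$ is translated into the atomic formula
\[\alpha_\delta \;:=\; \Bigl(\bigvee_{R\,:\,P\cap\tuples(R)\neq\emptyset} R[R\rightarrow\star]\Bigr) \cin A_1 \cin \cdots \cin A_k \FILTER \ext{Q_1} \FILTER \cdots \FILTER \ext{Q_m},\]
where $\bigvee$ abbreviates iterated $\cor$, the $\cin$-block is omitted when $L=\emptyset$, and the filter conjunction is realized by nested $\FILTER$s (sound because $\ext{Q}(\{t\})$ holds iff $t\in Q$ for unary $Q$; the type-conjuncts $\tuples(R)$ are already enforced by the disjunction). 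Standard state elimination on $\cA$---replacing union by $\cor$, concatenation by $\strictsq$, and Kleene by $\strictks$---then yields a regex $\mathrm{re}(\cA)$ over these atoms, and the final formula is $\varphi_\cA \;:=\; \pi_{\lset\setminus\{\star\}}(\START(\mathrm{re}(\cA)))$.

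\textbf{Main obstacle.} The subtle point is reconciling the $\socel$ invariant ``$j\in\supp(C)$ whenever $C\in\sem{\rho}(S,i,j)$'' with $\cA$'s freedom to leave any consumed tuple (including the last one) unmarked. The $\star$-marking strategy resolves this: every event consumed inside $\mathrm{re}(\cA)$ is placed in $\supp(C)$ via $\star$, so $\min(C)$ and $\max(C)$ are well defined and strict sequencing composes correctly at every step. Only the outermost projection $\pi_{\lset\setminus\{\star\}}$ discards $\star$, and the resulting complex event agrees exactly with the one produced by the corresponding accepting run of $\cA$. Correctness then follows by a joint induction on the structure of $\mathrm{re}(\cA)$ and on the length of accepting runs of $\cA$, matching sub-regexes to partial runs step by step, while the forward direction is verified by a direct structural induction on $\varphi$.
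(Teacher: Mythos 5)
Your proposal matches the paper's proof in both directions: an inductive automaton construction for the formula-to-automaton direction (with dummy states and restricted transitions enforcing the strict operators and $\START$), and, for the converse, state elimination over per-transition atomic formulas using $\cor$, $\strictsq$ and strict iteration, wrapped in $\START$ and an outermost projection to discard the auxiliary marks on $L=\emptyset$ transitions. The only cosmetic difference is that the paper keeps the relation names themselves as the auxiliary markers (assuming the labels of $\cA$ are disjoint from $\cR$ and projecting onto those labels at the end) instead of renaming every consumed event to a single fresh label $\star$.
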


	The relevance of this result lies on the fact that, as shown in Proposition~\ref{prop:ucea:efficient}, every $\ucea$ can be executed efficiently over every stream, and therefore it makes sense to have a language for CER that provides all the capabilities of this computational model.\par

	
	\subsection{Strict Sequencing versus Strict Selection}\label{subsec:strict}
	
	For recognizing events that occur contiguously we introduced the strict-sequencing operators (i.e. $:$ and $\strictks$) that locally check this condition. 
	These operators are the natural extension of $;$ and $\ks$, and they resemble the standard operators of concatenation and Kleene star from regular expressions. 
	However, to the best of our knowledge strict-sequencing has not been proposed before in the context of CER, possibly because adding this feature to a language might complicate the semantics, specially when combined with other non-strict operators.
	To avoid this interaction, the strict-contiguity selection (strict-selection for short) has been previously introduced in~\cite{SASE} by means of a unary predicate that basically forces a complex event $C$ to capture a contiguous set of events.
	We can formalize this operator in our framework as follows. 
	For any formula $\varphi$ in $\socel$ let $\STRICT(\varphi)$ be the syntax for the strict-selection operator previously mentioned. 
	We say that a complex event $C$ induces an interval if $\supp(C)$ is an interval.  Then, given a stream $S$ and two position $i, j \in \bbN$, we define that $C\in\sem{\STRICT(\varphi)}(S,i,j)$ if $C\in\sem{\varphi}(S,i,j)$ and $C$ induces an interval.

	A reasonable question at this point is whether the same expressiveness results could be obtained with the strict-selection operator $\STRICT$. 
	We answer this question giving evidence that our decision of including strict-sequencing operators instead of strict-selection was correct. We show that strict-sequencing and strict-selection coincide if we restrict our comparison to unary predicates. 
	Surprisingly, if we move to binary predicates, strict-selection is strictly less expressive than strict-sequencing.
	
	
	To study the difference in expressiveness, we say that two formulas $\varphi$ and $\psi$ are equivalent, denoted by $\varphi\equiv\psi$, if $\sem{\varphi}_n(S)=\sem{\psi}_n(S)$ for every stream $S$ and position $n$. 
	At a first sight, the strict-sequencing operators and the strict-selection predicates seems equally expressive since both allows to force contiguity between pair of events. 
	At least, this intuition holds whenever we restrict to unary predicates.
	
	\begin{proposition}\label{prop:strict-unary-equivalence}
		Let $\upset$ be a set of unary SO predicates. For every $\varphi$ in $\socel \cup \{\strictsq, \strictks\}(\upset)$, there exists a formula $\psi$ in $\socel \cup \{\STRICT\}(\upset)$ such that $\varphi\equiv\psi$, and vice-versa.
	\end{proposition}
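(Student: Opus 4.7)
The plan is to prove both directions by structural induction on the formula; the two directions have very different levels of difficulty.

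For the direction $\socel \cup \{\STRICT\}(\upset) \to \socel \cup \{\strictsq, \strictks\}(\upset)$, the strategy is to push $\STRICT$ down to the atoms via a set of semantic-equivalence rewrites. Since the operators $\cor$, $\FILTER$, $\cin$, and renaming do not affect $\supp(C)$, the operator $\STRICT$ commutes trivially with each of them; for atoms, $\STRICT(R) \equiv R$ because $\supp(C)$ is always a singleton. The core rewrites are $\STRICT(\varphi_1 \sq \varphi_2) \equiv \STRICT(\varphi_1) \strictsq \STRICT(\varphi_2)$ and $\STRICT(\varphi \ks) \equiv \STRICT(\varphi) \strictks$. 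The sequence rewrite follows from a direct inspection of the semantics: a concatenation $C = C_1 \cdot C_2$ has interval support exactly when $\supp(C_1)$ is an interval ending at $\max(C_1)$, $\supp(C_2)$ is an interval starting at $\max(C_1)+1$, and $\min(C_2) = \max(C_1)+1$---precisely the conditions imposed by $\strictsq$ on two $\STRICT$-ed arguments. The Kleene rewrite then follows by unfolding $\ks$ and applying the sequence case inductively.

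The converse direction is more delicate. The nontrivial cases are $\varphi_1 \strictsq \varphi_2$ and $\varphi \strictks$, under the IH that all proper sub-formulas have already been translated to $\socel \cup \{\STRICT\}$. The main technical device will be a finite syntactic decomposition by boundary event types: since the schema $\cR$ is finite, for every translated formula $\psi$ and relation name $R$, I would inductively construct a formula in $\socel \cup \{\STRICT\}$ capturing the matches of $\psi$ whose last event has type $R$, together with a \emph{peeling-off} rewrite that exposes this boundary event as a trailing $\sq R$ disjunct (plus the single-event edge case $R$ when applicable); a symmetric construction handles first events. With this toolkit, $\varphi_1 \strictsq \varphi_2$ can be written as a finite disjunction $\bigvee_{R, R'} \psi_1 \sq \STRICT(R \sq R') \sq \psi_2$ ranging over the possible last types $R$ of $\varphi_1$ and first types $R'$ of $\varphi_2$, where $\STRICT(R \sq R')$ encodes exactly the required two-event boundary.

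The main obstacle will be the Kleene case $\varphi \strictks$, which must reproduce an unbounded chain of strict boundaries using only the global $\STRICT$ operator together with the non-strict $\ks$. My plan is to express $\varphi \strictks$ as a finite disjunction over the first- and last-iteration type pairs, with the unbounded middle captured by $\ks$ applied to a ``glued block'' pattern of the form $\STRICT(R^l \sq R^f) \sq \text{body}$, where $\text{body}$ is an appropriately decomposed version of $\varphi$ obtained from the same peeling toolkit. Conceptually this simulates, at the syntactic level, a finite automaton whose states are boundary types and whose transitions carry $\STRICT$-enforced adjacency constraints; Kleene's theorem then motivates the explicit regular-expression-style rewrite. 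Careful case analysis is still required for iterations that consist of a single event and for the extreme (first and last) iterations of the chain, and this bookkeeping constitutes the bulk of the technical work.
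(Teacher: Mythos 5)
Your first direction (translating $\STRICT$ into $\{\strictsq,\strictks\}$ by pushing $\STRICT$ down through the operators, with $\STRICT(\varphi_1\sq\varphi_2)\equiv\STRICT(\varphi_1)\strictsq\STRICT(\varphi_2)$ and $\STRICT(\varphi\ks)\equiv\STRICT(\varphi)\strictks$) is correct and is exactly the paper's argument for Proposition~\ref{prop:strict-contained-in-contiguous}. The problems are all in the converse direction.

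First, and most importantly, your plan never uses the hypothesis that the predicates are unary, yet the statement is \emph{false} without it: Theorem~\ref{theo:strict-proper-containment} exhibits a single binary predicate and a $\strictks$-formula that no $\STRICT$-formula can express. Any correct argument must therefore invoke unarity somewhere. In the paper it enters through the commutation $(\rho\FILTER P(A))\strictks\equiv(\rho\strictks)\FILTER P(A)$, which holds because a universal extension of a \emph{unary} predicate distributes over the union of the $A$-labelled positions of the individual iterations; for a binary $\ext{P}(A,B)$ the per-iteration filters are strictly weaker than the global one, and the equivalence breaks. Your ``decomposition by boundary event types'' is blind to filters altogether, so either it silently drops them (unsound) or, once they are reinstated, you are forced to confront exactly this commutation step --- which is the one place the unary hypothesis does real work.

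Second, your peeling device operates at the wrong granularity. Under your induction hypothesis the arguments of $\strictsq$ and $\strictks$ have already been translated into $\socel\cup\{\STRICT\}$, so they may themselves contain $\STRICT$; but rewriting such a formula so that its last event appears as a trailing ``$\sq R$'' is not an equivalence --- there is no $\psi'$ with $\STRICT(A\sq B)\equiv\psi'\sq B$, because the reintroduced $\sq$ permits gaps that $\STRICT$ forbade. The same defect infects the glued block $\STRICT(R^l\sq R^f)\sq\text{body}$ for the Kleene case: detaching single boundary events from an iterand that imposes internal adjacency (or per-event filters) loses those constraints. The paper avoids this by gluing whole top-level constituents rather than single events --- $(\varphi_1\sq\varphi_2)\strictks$ becomes $\varphi_1\sq(\varphi_2\cor((\varphi_2\strictsq\varphi_1)\ks\sq\varphi_2))$, while $\STRICT(\rho)\strictks$ collapses to $\STRICT(\rho\ks)$ --- and only afterwards eliminates the residual $\strictsq$ by the induction of Theorem~\ref{theo:strict-proper-containment}. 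Finally, the genuinely hard case, $(\varphi_1\cor\varphi_2)\strictks$ with different disjuncts used in different iterations, is precisely where your ``automaton over boundary types'' would have to be made precise (its states must track more than types), and it is where the paper spends most of its effort on an explicit five-part disjunction. As written, your converse direction would not go through.
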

	
	The connection between both operators change if we move to predicates of higher arity. 
	Note, however, that $\STRICT$ can always be simulated by means of the strict sequencing operators $\!\strictsq\!$ and $\strictks$, no matter which set of predicates are used.
	
	\begin{proposition}\label{prop:strict-contained-in-contiguous}
		For any set $\pset$ of SO predicates and for any formula $\varphi \in \socel\cup\{\STRICT\}(\pset)$ there exists a formula $\psi \in \socel\cup\{\!\strictsq\!,\strictks\}(\pset)$ such that $\varphi\equiv\psi$.
	\end{proposition}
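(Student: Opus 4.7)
I would proceed by structural induction on $\varphi$, introducing simultaneously a translation $\tau(\cdot)$ from $\socel\cup\{\STRICT\}(\pset)$ into $\socel\cup\{\strictsq,\strictks\}(\pset)$ together with a \emph{strictification} function $\sigma(\cdot)$ which, given a formula $\psi$ already in $\socel\cup\{\strictsq,\strictks\}(\pset)$, returns an equivalent formula such that $\sigma(\psi)\equiv\STRICT(\psi)$. The key observation is that the support of a complex event is determined compositionally: for sequencing, $\supp(C_1\cdot C_2)=\supp(C_1)\cup\supp(C_2)$ with $\max(C_1)<\min(C_2)$, while $\cin A$, $[A\to B]$ and $\FILTER$ never modify the support. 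Consequently a disjoint ascending chain of position sets $A_1<\cdots<A_m$ has interval union iff every $A_\ell$ is itself an interval and consecutive pieces are adjacent ($\max A_\ell+1=\min A_{\ell+1}$); this is exactly the condition that $\strictsq$ and $\strictks$ enforce on top of the base operators. Hence $\STRICT$ can be pushed inward through every syntactic construct.

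\textbf{Definition of the translations.} Let $\tau$ commute with every operator of $\socel$ (including $\sq$, $\cor$, $\ks$, $\cin A$, $[A\to B]$, $\FILTER$) and set $\tau(\STRICT(\varphi)):=\sigma(\tau(\varphi))$. For $\sigma$, the atomic case is $\sigma(R):=R$, which is correct because $\sem{R}(S,i,j)$ always yields a complex event with singleton support $\{j\}$, trivially an interval. On support-preserving operators $\sigma$ commutes: e.g.\ $\sigma(\varphi\cor\psi):=\sigma(\varphi)\cor\sigma(\psi)$, $\sigma(\varphi\cin A):=\sigma(\varphi)\cin A$, $\sigma(\varphi[A\to B]):=\sigma(\varphi)[A\to B]$, $\sigma(\varphi\FILTER P):=\sigma(\varphi)\FILTER P$. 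On the critical cases, define
\[
\sigma(\varphi\sq\psi):=\sigma(\varphi)\strictsq\sigma(\psi),\qquad \sigma(\varphi\ks):=\sigma(\varphi)\strictks,
\]
and likewise for the already-strict variants $\sigma(\varphi\strictsq\psi):=\sigma(\varphi)\strictsq\sigma(\psi)$ and $\sigma(\varphi\strictks):=\sigma(\varphi)\strictks$, where only the subformulas need strictification. The two invariants to establish, by simultaneous induction on the structure of their arguments, are $\tau(\varphi)\equiv\varphi$ and $\sigma(\psi)\equiv\STRICT(\psi)$.

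\textbf{Main obstacle and verification.} The only non-routine verifications are the $\sq$ and $\ks$ cases of $\sigma$: one must show that $\STRICT(\varphi\sq\psi)\equiv\sigma(\varphi)\strictsq\sigma(\psi)$ and $\STRICT(\varphi\ks)\equiv\sigma(\varphi)\strictks$. Both reduce to the combinatorial fact above combined with the induction hypothesis $\sigma(\rho)\equiv\STRICT(\rho)$ on each subformula: a match $C_1\cdot C_2$ of $\varphi\sq\psi$ has interval support iff $\supp(C_1)$ and $\supp(C_2)$ are both intervals and $\max(C_1)+1=\min(C_2)$, which is precisely what $\sigma(\varphi)\strictsq\sigma(\psi)$ selects; the argument for $\ks$ unfolds iteratively into the same condition for every pair of consecutive iterations. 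All remaining cases of both inductions are immediate because the operators involved leave the support unchanged, and the atomic case is handled by the singleton observation. This yields the claimed $\psi\in\socel\cup\{\strictsq,\strictks\}(\pset)$ with $\psi:=\tau(\varphi)\equiv\varphi$.
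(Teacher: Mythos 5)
Your proof is correct and follows essentially the same route as the paper: push $\STRICT$ inward by structural induction, commuting it with the support-preserving operators and converting $\sq$ to $\strictsq$ and $\ks$ to $\strictks$ (with the strictified body), exactly as in the paper's case analysis. The only difference is presentational—your mutually recursive $\tau$/$\sigma$ formulation, with $\sigma$ defined on the target language, handles nested $\STRICT$ occurrences a bit more cleanly than the paper's bottom-up replacement, but the underlying equivalences are identical.
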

	
	To explain our decision of including the operators $:$ and $\strictks$ instead of $\STRICT$, we study the opposite direction. First, it is not hard to see that the operator $:$ can indeed be simulated by means of the operator $\STRICT$.
	Actually, for any formula $\varphi_1 : \varphi_2$ we can isolate the rightmost and leftmost event definition of $\varphi_1$ and $\varphi_2$ respectively, change $:$ by $;$ and surround it by a $\STRICT$ operator.
	Now, if we include the operator $\strictks$, the situation becomes more complex. In particular, for binary predicates, $\STRICT$ is not capable of simulating the $\strictks$-operator.
	
	\begin{theorem}\label{theo:strict-proper-containment}
		For any set $\pset$ of SO predicates and for any formula $\varphi \in \socel\cup\{:\}(\pset)$ there exists a formula $\psi \in \socel\cup\{\STRICT\}(\pset)$ such that $\varphi\equiv\psi$. 
		In contrast, there exists a set $\pset$ containing a single binary SO predicate and a formula $\varphi\in\socel\cup\{\strictks\}(\pset)$ that is not equivalent to any formula in $\socel\cup\{\STRICT\}(\pset)$.
	\end{theorem}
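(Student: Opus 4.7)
My plan is to prove the two parts of the theorem separately: Part 1 by a syntactic rewriting argument, and Part 2 by exhibiting an explicit formula with a binary predicate and arguing its non-expressibility.

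For Part 1, I proceed by structural induction on $\varphi\in\socel\cup\{:\}$. All cases other than the colon reduce immediately by the inductive hypothesis, so the only nontrivial case is $\varphi = \varphi_1 : \varphi_2$, and by induction I may assume that $\varphi_1,\varphi_2\in\socel\cup\{\STRICT\}$. A recursive helper then eliminates the outer colon by pushing it inwards using the following equivalences: the associativities $(\alpha_1\sq\alpha_2):\beta\equiv\alpha_1\sq(\alpha_2:\beta)$ and $\alpha:(\beta_1\sq\beta_2)\equiv(\alpha:\beta_1)\sq\beta_2$; distributivity of $:$ over $\cor$ on either side; the Kleene unfolding $\alpha\ks:\beta\equiv(\alpha:\beta)\cor(\alpha\ks\sq(\alpha:\beta))$ derived from $\alpha\ks\equiv\alpha\cor(\alpha\ks\sq\alpha)$, and symmetrically on the right; and the base case $R:S\equiv\STRICT(R\sq S)$. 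Subformulas of the form $\STRICT(\gamma)$ sitting on either side of $:$ are handled via the identity $\STRICT(\gamma_1\sq\gamma_2)\equiv\STRICT(\gamma_1):\STRICT(\gamma_2)$ and the distributivity of $\STRICT$ over $\cor$, which allow one to recurse on the structure of $\gamma$ itself. Termination follows from a lexicographic measure (colon-depth, then formula size). The main care point is the treatment of labels coming from $\cin$, renamings and $\FILTER$: when the same label occurs on both sides of the colon, one first $\alpha$-renames before pushing the colon through in order to avoid conflating the sets denoted by those labels.

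For Part 2, the plan is to exploit the structural difference between the two operators: $\strictks$ enforces contiguity only between consecutive iterations and allows each iteration to contain interior gaps of unmatched positions, whereas $\STRICT$ forces the whole support of its argument to form a single interval. A binary predicate $P$ is precisely what makes this asymmetry observable, because it can compare the two accumulated sets populated in distinct labels across all iterations; for unary predicates, by contrast, Proposition~\ref{prop:strict-unary-equivalence} already shows that the two operators coincide. I would work with a schema including event types $R$, $S$ and auxiliary filler events, choose a binary predicate $P(X,Y)$ with sufficient discriminatory power (for instance a value-matching predicate comparing the tuples of $X$ and $Y$), and take as candidate separator a formula of the form
\[
\varphi\ :=\ (\alpha(X,Y))\strictks \FILTER P(X,Y),
\]
where $\alpha(X,Y)$ is a subformula that distributes events into $X$ and $Y$ so that the interior gap of each iteration carries information that $P$ can probe non-trivially across iteration boundaries.

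The argument for non-expressibility is the hard part. I would assume for contradiction an equivalent $\psi\in\socel\cup\{\STRICT\}(\{P\})$, bring it into a normal form in which every $\STRICT$-wrapped subformula is forced to witness a contiguous interval, and then produce a family of streams parameterised by the sizes $n_1,\ldots,n_k$ of the filler regions inside the iterations of a fixed $\varphi$-match. A pumping/locality analysis on the $n_i$ then shows that any $\STRICT$-guarded block of $\psi$ must either swallow some of the fillers (contradicting the fact that $\varphi$ keeps them out of its support) or fail to enforce the required adjacency between the end of one iteration and the start of the next, yielding the desired contradiction. The main obstacle is to formalise this pumping step and the normal form for $\socel\cup\{\STRICT\}$; binarity of $P$ is essential, since it prevents the local unary simulation underlying Proposition~\ref{prop:strict-unary-equivalence}.
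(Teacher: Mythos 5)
Your Part 1 is correct and essentially the paper's argument: push labels down to the assignments, then eliminate each colon by structural induction using associativity with $\sq$, distributivity over $\cor$, the unfolding $\rho\ks:\beta\equiv(\rho:\beta)\cor(\rho\ks\sq(\rho:\beta))$, renaming to avoid label clashes through $\FILTER$ and $[A\rightarrow B]$, and the base case $R:T\equiv\STRICT(R\sq T)$. No issue there.

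Part 2, however, has a genuine gap: everything that carries the weight of the proof is deferred. First, your candidate separator $(\alpha(X,Y))\strictks\FILTER P(X,Y)$ places the filter \emph{outside} the iteration, where the universal extension compares all of $X$ against all of $Y$ globally; the separation actually needs the filter \emph{inside} each iteration, as in the paper's witness $((A\sq E)\FILTER \ext{P_=}(A,E))\strictks$ evaluated on a stream of blocks $A\,L\,E$ with per-block values $1,1,1,2,2,2,\ldots$, so that the binary predicate links the two matched positions \emph{across the unmatched gap} within one iteration while $\strictks$ glues consecutive iterations together. Second, and more importantly, you name the pumping lemma for $\socel\cup\{\STRICT\}$ as ``the main obstacle'' and leave it unproved. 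That lemma is the technical heart of the result, and its proof hinges on a non-obvious design choice: the pumping operation must be defined so that if the pumped fragment's support is already a contiguous interval it is repeated with \emph{no} filler inserted (so the $\STRICT$ case of the induction survives), whereas if it has interior gaps one may insert \emph{arbitrary} filler between copies (which is what later kills adjacency for $\strictks$). Without this contiguity-sensitive definition the induction does not go through the $\STRICT$ and $\ks$ cases, and without the concrete stream family and the three-way case analysis on the types of the pumped endpoints (same type; $A$ then $E$; $E$ then $A$) there is no contradiction to derive. As written, Part 2 is a plausible plan rather than a proof.
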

	
	This last Theorem concludes our discussion on the operators for contiguity, and allows us to argue that including the operators $:$ and $\strictks$ is better than including the unary operator $\STRICT$. It is worth noting that the proof of Theorem~\ref{theo:strict-proper-containment} is a non-trivial result that requires a version of the pumping lemma for $\ucea$; the proof can be found in the appendix.

\section{Conclusions}\label{sec:conclusions}

In this paper we have described a novel approach to Complex Event Recognition, where complex events are represented by means of second-order variables. We introduced the language $\socel$, showing how the use of second-order variables simplifies the definition of CER languages. By having a simple definition, we were able provide fundamental results regarding the expressive power and computational capabilities of this language. We discussed the expressibility of different operators and compared $\socel$ to $\focel$ (a language based on first order variables) and to an automata model called $\ucea$ (which, as we show, can be evaluated efficiently over streams). We proved that the expressive power of $\socel$ and $\focel$ are incomparable in general, and that an extended version of $\socel$ with unary filters captures the full expressive power of $\ucea$.\par
This work has settled some fundamental questions regarding CER languages by focusing on restricted fragments, namely those where filters are universal extensions of unary predicates. Nevertheless, there are common CER features that reside outside these fragments, being a prominent example the correlation of distant events. We intend to continue this work in the future and extend our current understanding of $\socel$ to provide a declarative CER framework that can be applied in practice.



\bibliographystyle{abbrv}

\bibliography{references}

%
\onecolumn
\appendix

\section{Proofs of Section~\ref{sec:comparison}}

\subsection{Proof of Theorem~\ref{theo:unary-fo-so}}


To prove the theorem we use the fact that, when dealing with FO unary predicates, one can always rewrite the formulas so that all predicates are applied at the lower level, directly on the assignments.
In $\focel(\upset)$ formulas, this notion is defined on \cite{GRUpaper} as \emph{locally-parametrized normal form}, or LP normal form.
The syntax of formulas in LP normal form is restricted to the following grammar:
\[\varphi \; := \; R \as x \ \mid \  R \as x \FILTER P_1(x) \land \ldots \land P_k(x) \ \mid \ \varphi \cor \varphi \ \mid \ \varphi\sq\varphi \ \mid \ \varphi\ks \]
Where $R$ is a relation, $x$ is a variable and $P_1,\ldots,P_k$ are predicates of $\upset$.
To simplify the presentation of the proof, when writing a conjunction of predicates on the filters, it is short for a series of nested filters, were each one is one of the predicates.
In \cite{GRUpaper}, they give a construction that, for every $\focel(\upset)$ formula, defines an equivalent formula in LP normal form.

For $\socel(\ext{\upset})$ formulas, we show now that one can rewrite them to get a similar structure by pushing down every predicate.
This is a rather predictable property, since every predicate $P \in \ext{\upset}$ is a universally quantified extension of one in $\upset$, thus if a set $A$ satisfies $P$, then every $A' \subseteq A$ also satisfies $P$.
Following this idea we show that, for every $\varphi \FILTER P(A) \in \socel(\ext{\upset})$, if $\varphi$ is not an atomic formula (i.e. $\varphi \neq R$ and $\varphi \neq R \FILTER P_1(R) \land \ldots \land P_k(R)$), then $P(A)$ can be pushed one level deeper in $\varphi$.
We consider the possible cases of $\varphi$:
\begin{itemize}
	\item If $\varphi = \varphi_1 \ \OP \ \varphi_2$, with $\OP \in \{\cor, \sq\}$, then 
	$$\varphi \FILTER P(A) \; \equiv \; \varphi_1 \FILTER P(A) \ \OP \ \varphi_2 \FILTER P(A).$$
	\item If $\varphi = \varphi_1 \ks$, then $\varphi \FILTER P(A) \equiv (\varphi_1 \FILTER P(A)) \ks$.
	\item If $\varphi = \varphi_1 \cin B$:
	\begin{itemize}
		\item if $B \neq A$, then $\varphi \FILTER P(A) \equiv (\varphi_1 \FILTER P(A)) \cin B$,
		\item if $B = A$, then $\varphi \FILTER P(A) \equiv (\varphi_1 \FILTER P(A_1) \land P(A_2) \land \cdots \land P(A_n)) \cin A$, where $A_1, \ldots, A_n$ are the assigned labels in $\varphi_1$.
	\end{itemize}
	\item If $\varphi = \varphi_1 [B_1 \rightarrow B_2]$:
	\begin{itemize}
		\item if $B_1 \neq A$ and $B_2 \neq A$, then $\varphi \FILTER P(A) \equiv (\varphi_1 \FILTER P(A)) [B_1 \rightarrow B_2]$,
		\item if $B_1 \neq A$ and $B_2 = A$, then $\varphi \FILTER P(A) \equiv (\varphi_1 \FILTER P(B_1) \land P(A)) [B_1 \rightarrow B_2]$,
		\item if $B_1 = A$ and $B_2 \neq A$, then $\varphi \FILTER P(A) \equiv \varphi_1 [B_1 \rightarrow B_2]$,
		\item if $B_1 = B_2 = A$, then the renaming can simply be removed.
	\end{itemize}
\end{itemize}
The correctness of this equivalences follows straightforward from the definition of the semantics.
Then, by using this equivalences one can push all the predicates down, and the syntax of the resulting formula is of the form:
\[\varphi \; := \; R \ \mid \  R \FILTER P_1(R) \land \ldots \land P_k(R) \ \mid \ \varphi \cor \varphi \ \mid \ \varphi\sq\varphi \ \mid \ \varphi\ks \]
where $R$ is a relation and $P_1,\ldots,P_k \in \ext{\upset}$.
Notice that we dropped the $\scin$ operator.
This is because all filters are applied on the assignments, therefore the labels do not change anything to the support of the complex events.
We say a $\socel$ formula with unary predicates is in LP normal form if it has this syntax.

Now the equivalence between $\focel(\cP_1)$ and $\socel(\cP_1^e)$ is straightforward.
First, if we have a formula $\varphi \in \focel(\cP_1)$, we rewrite it as a formula $\varphi'$ in LP normal form and then replace every $R \as x \FILTER P(x)$ with $R \FILTER \ext{P}(R)$, where $\ext{P}$ is the SO extension of $P$.
Clearly, $R \as x \FILTER P(x) \equiv R \FILTER P^e(R)$, and by doing induction over the structure of $\varphi'$, one can show that the resulting formula, call it $\psi$, is equivalent to $\varphi'$, thus $\varphi \equiv \psi$.

Similarly for the other direction, if we have a formula $\psi \in \socel(\cP_1^e)$ one can rewrite it as a $\socel$ formula $\psi'$ in LP normal form that has the same support as $\psi$.
Then, we replace every $R \FILTER \ext{P}(R)$ with $R \as x \FILTER P(x)$, where $x$ is a new variable and where $\ext{P}$ is the SO extension of $P$.
By the same argument above, one can show by induction that the resulting $\focel$ formula, call it $\varphi$, is equivalent to $\psi'$, thus $\psi \equiv \varphi$.

\subsection{Proof of Theorem~\ref{theo:so-notin-fo}}


Here we prove that the formula $\varphi = (R\ks \sq
T\ks) \FILTER R \neq T)$ in $\socel(\ext{\pset_=})$ does not have an equivalent formula in $\focel(\pset_=)$.
Intuitively, an equivalent $\focel(\pset_=)$ formula for $\varphi$ will need to compare every element in $R$ with every element in $T$ (i.e. a quadratic number of comparisons).
In the sequel we show that the number of comparisons in the evaluation of an $\focel(\pset_=)$ formula is at most linear in the size of the output, and therefore, $\varphi$ cannot be defined by $\focel(\pset_=)$.

To formalize the notion of the comparisons associated to an output, we extend the semantics of $\focel$ in the following way.
First, we define a comparing set $O$ as a set of tuples, where the first element is a predicate and the followings are positions.
For example, a valid element of $O$ is $(=,1,3)$, which represents that the events at positions $1$ and $3$ were compared with equality, i.e. $S[1] = S[3]$.
Strictly speaking, we should also add the information about the attributes that were being compared, but we leave that out to keep notation simple.
Now, given a formula $\psi$ in $\focel(\pset)$ a match $M$, a comparing set $O$, a stream $S$ and positions $i,j$, we say that $(M,O) \in \sem{\psi}(S,i,j,\nu)$ if:
\begin{itemize}
	\item $\psi=R \as x$, $M = \{\nu(x)\}$, $\type(S[\nu(x)]) = R$, $i \leq \nu(x) = j$ and $O = \emptyset$.
	\item $\psi = \rho \FILTER P(x_1, \ldots, x_n)$, there exists some comparing set $O'$ such that $(M,O') \in \sem{\rho}(S, i,j, \nu)$, $(S[\nu(x_1)], \ldots, S[\nu(x_n)]) \in P$ and $O = O' \cup \{(P,\nu(x_1), \ldots, \nu(x_n))\}$.
	\item $\psi=\rho_1 \cor \rho_2$ and $(M,O) \in \sem{\rho_1}(S, i,j, \nu)$ or $(M,O) \in \sem{\rho_2}(S, i,j, \nu)$).
	\item $\psi = \rho_1 \sq \rho_2$ and there exist matches $M_1$ and $M_2$ and comparing sets $O_1$ and $O_2$ such that $M = M_1 \cdot M_2$, $O = O_1 \cup O_2$, $(M_1,O_1) \in \sem{\rho_1}(S,i,k,\nu)$ and $(M_2,O_2) \in \sem{\rho_2}(S,k+1,\nu)$, with $k = \max(M_1)$.
	\item $\psi=\rho\ks$ and there exists a valuation $\nu'$ such that either $(M,O) \in \sem{\rho}(S, i, j,\nu[\nu' / U])$ or  $(M,O) \in \sem{\rho\sq\rho\ks}(S, i,j, \nu[\nu' / U])$, where $U = \vdefplus(\rho)$.
\end{itemize}
Notice that we only extended the previous semantics of $\focel$ adding this new notion of comparing set.
Therefore, it is not hard to see that $(M,O) \in \sem{\psi}(S,i,j,\nu)$ implies $M \in \sem{\psi}(S,i,j,\nu)$, and that $M \in \sem{\psi}(S,i,j,\nu)$ implies that there is some $O$ such that $(M,O) \in \sem{\psi}(S,i,j,\nu)$.

Now, we show inductively that for every $\psi$, there exist constants $c$ and $d$ such that if $(M,O) \in \sem{\psi}(S,i,j,\nu)$, then $|O| \leq c|M|+d$, i.e. the size of $O$ is linear in the size of $M$.
\begin{itemize}
	\item If $\psi = R \as x$, then $c = d = 0$.
	\item $\psi = \rho \FILTER P(x_1, \ldots, x_n)$, and $c',d'$ are the constants for $\rho$, then $c = c'$ and $d = d'+1$.
	\item $\psi=\rho_1 \cor \rho_2$, $c_1,d_1$ are the constants for $\rho_1$ and $c_2,d_2$ are the constants for $\rho_2$, then $c = \max(c_1,c_2)$ and $d = \max(d_1,d_2)$.
	\item $\psi = \rho_1 \sq \rho_2$, $c_1,d_1$ are the constants for $\rho_1$ and $c_2,d_2$ are the constants for $\rho_2$, then $c = \max(c_1,c_2)$ and $d = d_1+d_2$.
	\item $\psi=\rho\ks$ and $c',d'$ are the constants for $\rho$, then $c = c'+d'$ and $d = 0$.
\end{itemize}
In particular, this means that for every formula $\psi$ in $\focel(\pset_=)$, there is at most a linear number of comparisons between events, which is what we will exploit next.

By contradiction, assume that there is a formula $\psi$ in $\focel(\pset_=)$ that is equivalent to $\varphi$, and let $c,d$ be the constants that bound the size of the comparing set.
Then, for an arbitrary $n$ consider the stream $S_n = R(1) R(2) \ldots R(n) T(n+1) T(n+2) \ldots T(2n)$, and consider the match $M_n = \{1,2,\ldots,2n\}$.
It is clear that $M_n \in \sem{\psi}(S_n)$.
Therefore, there exist a comparing set $O_n$ such that $(M_n,O_n) \in \sem{\psi}(S_n)$.
Now, define $O_n^= \{(i,j) \mid (P_=,i,j) \in O_n\}$ and $O_n^\neq \{(i,j) \mid (P_\neq,i,j) \in O_n\}$, i.e. the sets of pairs compared with equality and inequality, respectively.
Because $|O_n|$ is linear in $|M_n|$, we know that, if $n$ is sufficiently large, there exist positions $k_1$ and $k_2$ with $1 \leq k_1 \leq n < k_2 \leq 2n$ such that $(k_1,k_2) \notin O_n^\neq$.
Moreover, because all events have different value, we know that $|O_n^=| = 0$ (counting out the pairs of the form $(k,k)$).
Now, define a new stream $S_n'$ the same as $S_n$ but replacing the values of $S_n[k_1]$ and $S_n[k_2]$ with some new value, e.g. $2n+1$.
Then, all the comparisons made while evaluating $(M_n,O_n)$ will still hold for $S_n'$, which means that $(M_n,O_n) \in \sem{\psi}(S'_n,i,j,\nu)$ by following the same evaluation for $S_n$.
But at the same time $k_1,k_2 \in M_n$, $\type(S'_n[k_1]) = R$, $\type(S'_n[k_2]) = T$ and $S'_n[k_1] = S'_n[k_2]$, which means that $M_n \notin \sem{\psi}(S'_n,i,j,\nu)$, reaching a contradiction.

We conclude that there cannot exist a formula $\psi$ in $\focel(\pset_=)$ equivalent to $\varphi$.

\subsection{Proof of Theorem~\ref{theo:binary-fo-in-so}}


To prove the theorem we provide a construction that, for any formula $\varphi \in \focel(\bpset)$, gives a formula $\psi \in \socel(\ext{\bpset})$ that is equivalent to $\varphi$.
For notation, we will write $\rho \subseteq \rho'$ to say that $\rho$ is a subformula of $\rho'$.
Moreover, for a formula $\psi = \psi' \FILTER P(\bar{x}) \subseteq \varphi$ and $x \in \bar{x}$, we denote $\varphi^P_x$ to be the subformula such that $\psi' \subseteq \varphi^P_x \subseteq \varphi$, $x \in \vdef(\varphi^P_x)$ and there is no other $\rho \subset \psi'$ that satisfies the above.
That is, $\varphi^P_x$ is the formula closest to the filter such that $x$ is defined in it.

Now, to simplify the proof we make some assumptions on $\varphi$.
We assume that $\varphi$ is safe, as defined in \cite{GRUpaper}, that is, for every subformula of the form $\varphi_1 \sq \varphi_2$ it holds that $\vdefplus(\varphi_1) \cap \vdefplus(\varphi_2) = \emptyset$.
We can make this assumption because in \cite{GRUpaper} they show that every $\focel$ formula can be rewritten into a safe one.
Without loss of generality, we assume that $\varphi$ has no unary predicate, as any predicate $P(x)$ can be easily simulated with a binary one using something like $P(x,x)$.
Now, the construction is the following.

First, consider any subformula of $\varphi$ of the form $\varphi' \FILTER P(x,y)$ such that neither $x$ nor $y$ are in $\vdefplus(\varphi')$.
We will move $P(x,y)$ up to a position at which at least one of its variables is defined.
For this, we use the notion of ``well-formedness'' defined in \cite{GRUpaper}.
There, they say that a formula is \emph{well-formed} if for every subformula of the form $\rho \FILTER P_1(x_1,\ldots,x_k)$ and every $x_i$, there is another subformula $\rho_{x_i}$ such that $x_i \in \vdefplus(\rho_{x_i})$.
In fact, they use a more strict notion called ``bound'', which says that $x$ must be bounded to $\rho_x$ in the sense that it must be always defined (e.g. cannot appear only at one side of an $\scor$).
However, we will not make use of that property.
We use the fact that $\varphi$ is well-formed, which means that there are formulas $\varphi_x$ and $\varphi_y$ such that $x \in \vdefplus(\varphi_x)$, $y \in \vdefplus(\varphi_y)$ and $\varphi' \subseteq \varphi_x \subseteq \varphi_y \subseteq \varphi$ (wlog, we assume that $\varphi_x \subseteq \varphi_y$).
Then, we can move up $P(x,y)$ by rewriting $\varphi_x$ as $\varphi_x^\top \FILTER P(x,y) \cor \varphi_x^\bot$, where $\varphi_x^\top$  and $\varphi_x^\bot$ are $\varphi_x$ replacing $P(x,y)$ with $\TRUE$ and $\FALSE$, respectively.
The idea is that $\varphi_x^\top \FILTER P(x,y)$ considers the cases where the condition $P(x,y)$ is needed and $\varphi_x^\bot$ adds the cases where it is not.
As an intuition of why this is true, one can easily see that $\sem{\varphi_x^\top \FILTER P(x,y)}(S,i,j\nu) \cup \sem{\varphi_x^\bot}(S,i,j,\nu) = \sem{\varphi_x}(S,i,j,\nu)$.
Now, let $\varphi_1$ be the result of doing this to every predicate of $\varphi$.
Then, $\varphi_1$ is such that for every $\varphi' \FILTER P(x,y) \subseteq \varphi_1$ it holds that $x \in \vdefplus(\varphi')$ or $y \in \vdefplus(\varphi')$.

The intuition for the second step is that, for every filter $P(x,y)$, if at any moment of the evaluation we assigned $x$ and $y$ to two events, then they must satisfy $P(x,y)$.
In order to achieve this, we want to rename each assignment with a new variable.
The problem is that, if we do this right away, then it is not clear if we can replace the variables in the filters.
For example, if we have the formula $(R \as x \cor T \as x) \FILTER P(x)$ and we rename both assignments with variables $x_1$ and $x_2$, then it is not clear which variable we need to use in the predicate $P$.
To avoid this issue, we first rewrite $\varphi_1$ into a form called ``disjunctive-normal form'', defined in \cite{GRUpaper}.
A formula $\varphi$ is in \emph{disjunctive-normal form} (or DNF) if $\varphi = (\varphi_1 \cor \ldots \cor \varphi_n)$, where for each $i \in \{1, \ldots, n\}$, it is the case that:
\begin{itemize}
	\item Every $\scor$ in $\varphi_i$ occurs in the scope of a $\ks$-operator.
	\item For every subformula of $\varphi_i$ of the form $(\varphi_i')\ks$, it is the case that $\varphi_i'$ is in DNF.
\end{itemize}
In \cite{GRUpaper} they show that every formula $\varphi$ in $\focel$ can be translated into DNF.
Now that the formula is safe and in DNF, we ensure that for every filter $P(x,y)$ the paths in the parse tree from the filter to the assignment of $x$ and $y$ never get inside an $\scor$.
This does not mean, however, that the assignments are not inside an $\scor$ in the whole formula.
For example, $\rho = (R \as x \FILTER P(x,y) \sq S \as y) \cor T \as y$ is a possible formula at this point, even though $y$ appears in two sides of an $\cor$.
Instead, what we ensure is that for every filter, there is exactly one assignment for each of its variables.
In the case of $\rho$, the filter $P(x,y)$ can reach only the $x$ in $T \as x$ and the $y$ in $S \as y$, but not the one in $T \as y$.
Now that for every filter there is exactly one assignment for each of its variables, we can then rename the variables safely.
For this, identify each assignment $R \as x$ of $\varphi_1'$ with a unique id $i$, and for every filter $P(x,y)$ let $i^P_x$ and $i^P_y$ be the ids of the assignments reachable from that filter.
Then, we rewrite $\varphi_1'$ using a new set of variables $\{x_1,x_2, \ldots \}$ in the following way:
\begin{itemize}
	\item Replace each assignment $R \as y$ with $R \as x_i$, where $i$ is the id of the assignment,
	\item Replace each filter $P(x,y)$ with $P(x_{i^P_x},x_{i^P_y})$.
\end{itemize}
Call $\varphi_2$ the resulting formula.
Because $\varphi_1'$ was safe and in DNF, then the renaming does not change the semantics.

The final step is to turn $\varphi_2$ into a $\socel$ formula.
After turning $\varphi$ into $\varphi_2$, we claim that now we can do it safely by pushing each predicate up until it reaches a point where all its variables are assigned.
Formally, considering labels $\{A_1, A_2, \ldots\}$, what we do is:
\begin{itemize}
	\item Replace each assignment $R \as x_i$ with $R \cin A_i$,
	\item For each subformula with a filter $P(x_i,x_j)$, remove de filter and instead add the filter $\ext{P}(A_i,A_j)$ at formula $\varphi^P_{x_j}$ (assuming $\varphi^P_{x_i} \subseteq \varphi^P_{x_j}$).
\end{itemize}
Then, we define $\psi$ as the resulting formula.
The intuition of why $\psi$ keeps the same semantics of $\varphi_2$ is the following.
At $\varphi_2$ we know that for every $\varphi' \FILTER P(x_i,x_j) \subseteq \varphi_2$, one variable (e.g. $x_i$) is assigned in, and only in $\varphi'$, while the other is assigned somewhere else in the formula, and only there.
Then, for every evaluation that at some point needs to assign $x_i$ to some events $e_i$, it must get inside of $\varphi'$ (because $x_i$ is only there), thus it must first satisfy the filter, i.e. $P(e_i,e_j)$ must hold, where $e_j$ is the current assignment of $x_j$.
Moreover, $x_j$ is only named once in $\varphi^P_{x_j}$, and since $x_j$ is defined in $\varphi^P_{x_j}$ (it cannot be inside a $\ks$), it holds that every assignment of $x_i$ (which is only one) and every assignment of $x_j$ must satisfy $P(x_i,x_j)$.
Since all the assignments of $x_i$ and $x_j$ were labelled with $A_i$ and $A_j$, respectively, this is exactly what $\ext{P}(A_i,A_j)$ in $\psi$ represents.
Thus, we claim that the resulting formula $\psi$ is equivalent to $\varphi$.

\subsection{Proof of Theorem~\ref{theo:fo-notin-so}}

To prove the theorem, we show that the formula: 
$$
\varphi \ = \ R \as x \sq ((S \as y \sq T \as z) \FILTER (x = y+z))\ks$$ 
in $\focel(\pset_+)$ is not expressible in $\socel(\ext{\pset}_+)$.
We begin by giving some definitions we will need next.
We define an \emph{evaluation tree} $T$ as an ordered unranked tree where each node $t$ has a complex event $C_t$ associated to it.
For each node $t$, we refer to the children of $t$ as $\children(t) = (t_1, t_2, \ldots t_k)$.
Notice that we do not bound the number of children of a node, and that there is an order between the children.
We often refer to $T$ by its root node, e.g. if $t$ is the root of $T$, then $C_T = C_t$ and $\children(T) = \children(t)$.
With the notion of evaluation tree, we extend the semantics of $\socel$ in the following way.
We say that $T$ belongs to the evaluation trees of $\varphi$
over $S$ starting at position $i$ and ending at $j$ (denoted by $T \in
\sem{\varphi}^\tree(S, i, j) $) if one of the following conditions holds:
\begin{itemize}
	\item $\varphi = R$, $T$ is a single node and $C_T \in \sem{\varphi}(S,i,j)$.
	\item $\varphi = \rho \cin A$, $\children(T) = (t)$, $C_t \in \sem{\rho}^\tree(S,i,j)$, $C_T(A) = \supp(C_t)$ and $C_T(X) = C_t(X)$ for every $X \neq A$.
	\item $\varphi = \rho[A \rightarrow B]$, $\children(T) = (t)$, $t \in \sem{\rho}^\tree(S,i,j)$, $C_T(A) = \emptyset$, $C_T(B) = C_t(A) \cup C_t(B)$, and $C_T(X) = C_t(X)$ for every $X \notin\{A,B\}$.
	\item $\varphi = \rho \FILTER P(X_1,\dots,X_n)$, $T \in \sem{\rho}(S,i,j)$ and $((C_T)_S(X_1),\dots,(C_T)_S(X_n)) \in P$.
	\item $\varphi = \rho_1 \cor \rho_2$ and $T \in \sem{\rho_1}^\tree(S,i,j) \cup \sem{\rho_1}^\tree(S,i,j)$
	\item $\varphi = \rho_1 \sq \rho_2$, $\children(T) = (t_1,t_2)$ and there exists $k \in \bbN$ such that $C_T = C_{t_1} \cdot C_{t_2}$, $t_1 \in \sem{\rho_1}^\tree(S,i,k)$ and $t_2 \in \sem{\rho_2}^\tree(S,k+1,j)$.
	\item $\varphi = \rho \ks$, $\children(T) = (t_1,t_2,\ldots, t_k)$ and $t_i \in \sem{\rho}^\tree(S,k_{i-1}+1,k_i)$, where $k_0 = i-1$ and $k_i = \max(C_{t_i})$.
\end{itemize}
The idea is the same as for the original semantics, and thus it is easy to see that:
\begin{itemize}
	\item If $T \in \sem{\varphi}^\tree(S,i,j)$ then $C_T \in \sem{\varphi}(S,i,j)$, and
	\item If $C \in \sem{\varphi}(S,i,j)$ then there exists some $T \in \sem{\varphi}^\tree(S,i,j)$ with $C_T = C$.
\end{itemize}
We use the evaluation tree because it gives us more information about how the complex event was evaluated than the complex event itself.

The following lemmas exhibit some interesting properties about the evaluation trees, which can be easily proven:

\begin{lemma}
	For every formula $\varphi$ in $\socel$ there exists some $N$ such that every $T \in \sem{\varphi}^\tree(S,i,j)$ is of depth at most $N$, for any stream $S$ and positions $i,j$.
\end{lemma}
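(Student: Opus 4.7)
The plan is to prove the lemma by structural induction on $\varphi$, defining a function $N(\varphi) \in \mathbb{N}$ that provides the required upper bound on the depth of any evaluation tree in $\sem{\varphi}^\tree(S,i,j)$.

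First I would formally fix the convention that the depth of a single node equals $0$ (or $1$, either works as long as the recursion is consistent), and then set up the recursive definition of $N(\varphi)$ by cases mirroring the grammar of $\socel$. Concretely: $N(R) = 0$; $N(\rho \cin A) = N(\rho[A\to B]) = N(\rho) + 1$; $N(\rho \FILTER P(\bar A)) = N(\rho)$ (since the extended semantics does not introduce a new child in this case); $N(\rho_1 \cor \rho_2) = \max(N(\rho_1), N(\rho_2))$; $N(\rho_1 \sq \rho_2) = \max(N(\rho_1), N(\rho_2)) + 1$; and $N(\rho \ks) = N(\rho) + 1$.

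Next I would verify each inductive case against the tree semantics given just above. The atomic case is immediate since the tree is a single node. The cases $\rho \cin A$, $\rho[A \to B]$, and $\rho_1 \sq \rho_2$ follow because the root has one (resp.\ two) children, each of which is, by induction hypothesis, of depth at most $N(\rho)$ (resp.\ $\max(N(\rho_1), N(\rho_2))$). The $\FILTER$ case is immediate because the semantics reuses the tree of $\rho$ without inserting a new root. The $\cor$ case is immediate as $T$ belongs to one of the two subtree families.

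The only case that could look delicate is the Kleene closure $\rho \ks$, since the root can have an unbounded number of children $(t_1, \dots, t_k)$ depending on $S$, $i$, and $j$. The key observation is that the depth of the resulting tree is $1 + \max_i \operatorname{depth}(t_i)$, not the sum or the number of children, so the bound $N(\rho) + 1$ still applies by the induction hypothesis regardless of $k$. Once this is noted, the induction closes and the constant $N := N(\varphi)$, which depends only on $\varphi$, bounds the depth uniformly over all $S$, $i$, $j$. I do not expect a real obstacle here: the claim is essentially a syntactic book-keeping statement about the tree semantics, and the Kleene case is the only one requiring a moment of care.
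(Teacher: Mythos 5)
Your proof is correct and is exactly the straightforward structural induction that the paper leaves implicit (it only remarks that the lemma "can be easily proven" and gives no argument). Your bound $N(\varphi)$ matches the tree semantics case by case, and you correctly identify the one point of care: in the $\ks$ case the children of the root are evaluation trees of $\rho$ itself, so the unbounded iteration shows up as unbounded branching rather than increased depth, and $N(\rho)+1$ suffices.
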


\begin{lemma}
	Consider a formula $\varphi$ in $\socel$, a stream $S$ and positions $i,j$.
	For every $T \in \sem{\varphi}^\tree(S,i,j)$ and $k \in \supp(C_T)$ there is exactly one leaf $t$ in $T$ such that $k \in \supp(C_t)$.
	Moreover, the only nodes $t'$ in $T$ with $k \in \supp(C_{t'})$ are the ones in the path between $t$ and $T$.
\end{lemma}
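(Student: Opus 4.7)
I would prove the lemma by induction on the structure of the formula $\varphi$, equivalently on the height of the evaluation tree $T$. Along the way, I would maintain the stronger invariant that for every node $t$ of $T$, the support $\supp(C_t)$ equals the disjoint union of the supports of the leaf descendants of $t$; in particular, for each internal node the support decomposes uniquely as the (disjoint) union of its children's supports. Once this invariant is in place, both assertions of the lemma follow by tracking, for a fixed $k \in \supp(C_T)$, the unique child subtree at each internal node whose support contains $k$.

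The base case is $\varphi = R$: here $T$ is a single node, so it is both the unique leaf and the root, and the path from the root to the leaf is just $\{T\}$. For the inductive step, the cases $\varphi = \rho \FILTER P(\bar{X})$ and $\varphi = \rho_1 \cor \rho_2$ are pass-through in the tree semantics, so both claims follow immediately from the IH applied to the relevant subformula. For the unary-root cases $\varphi = \rho \cin A$ and $\varphi = \rho[A \rightarrow B]$, we have $\children(T) = (t)$ and by the definitions $\supp(C_T) = \supp(C_t)$. The IH gives a unique leaf in the subtree rooted at $t$ whose support contains $k$; this leaf is still a leaf of $T$, and the path from it to $T$ is obtained by prepending $T$ to the path up to $t$, which retains the property that exactly these nodes have $k$ in their supports.

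The two nontrivial cases are concatenation and Kleene. For $\varphi = \rho_1 \sq \rho_2$, $\children(T) = (t_1, t_2)$ and $C_T = C_{t_1} \cdot C_{t_2}$, which is only defined when $\max(C_{t_1}) < \min(C_{t_2})$; therefore $\supp(C_{t_1})$ and $\supp(C_{t_2})$ are disjoint, and $k$ lies in exactly one of them, say $\supp(C_{t_1})$. Applying the IH to $t_1$ gives a unique leaf $t^*$ in its subtree with $k \in \supp(C_{t^*})$, and the only nodes with $k$ in their support within that subtree are those on the path from $t^*$ to $t_1$. By the disjointness invariant carried up from the IH applied to $t_2$, no node in the $t_2$-subtree has $k$ in its support, so the full path in $T$ is exactly the one from $t^*$ through $t_1$ up to $T$. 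The Kleene case $\varphi = \rho\ks$ is analogous: the children $t_1,\dots,t_m$ are evaluated on pairwise disjoint intervals $[k_{i-1}+1, k_i]$, hence their supports are pairwise disjoint, and $k$ selects exactly one child to recurse into.

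The main obstacle will be carefully verifying the disjointness invariant at each inductive step, since it is what makes both ``exactly one leaf'' and ``exactly the nodes on the path contain $k$'' work simultaneously. Once disjointness of sibling supports is secured at the $\sq$ and $\ks$ nodes (a direct consequence of the $\max < \min$ requirement of $\cdot$), the rest of the argument reduces to routine bookkeeping about which subtree $k$ descends into at each internal node.
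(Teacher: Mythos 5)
Your proof is correct, and it supplies an argument the paper actually omits: the lemma is stated in the appendix with only the remark that it ``can be easily proven,'' so there is no paper proof to compare against. Your structural induction with the invariant that $\supp(C_t)$ is the disjoint union of the supports of the leaf descendants of $t$ is exactly the intended argument, and the key observation---that sibling disjointness at $\sq$-nodes comes from the $\max(C_1)<\min(C_2)$ requirement of concatenation, and at $\ks$-nodes from the children being evaluated on the disjoint intervals $[k_{i-1}+1,k_i]$---is the right one. Two small points worth making explicit in a write-up: for the Kleene case the paper's tree semantics never states that $C_T$ equals $C_{t_1}\cdot\ldots\cdot C_{t_k}$, so you should note that this is implicit (otherwise $\supp(C_T)$ is not even determined by the children); and since $\sFILTER$ and $\scor$ are pass-through and do not add a node, the induction should be phrased on the structure of $\varphi$ rather than on the height of $T$, as the two measures are not actually equivalent.
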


Now we are ready to prove that $\varphi$ does not have an equivalent formula in $\socel(\ext{\pset}_+)$.
By contradiction, assume that there exists such formula, call it $\psi$.
Let $D$ be the maximum depth of the evaluation trees of $\psi$, and let $K$ be the number of times that the sum predicate $X = Y + Z$ appears in $\psi$.
Now, for an arbitrary $N$ consider the stream:
\[
S = \begin{array}{cccc}
\begin{array}{ccccc}
R & S & T & S & T \\
2N & 1 & 2N-1 & 3 & 2N-3
\end{array} & \cdots & \begin{array}{cc}
S & T \\
N-1 & N+1
\end{array} & \cdots
\end{array}
\]
Intuitively, we chose this stream because the only triples that satisfy $X = Y+Z$ are the ones where the set $X$ is associated to the only one event $R$ and the sets $Y$ and $Z$ are associated, one to only one event $S$, and the other to only the event $T$ that is after that $S$.
Clearly the match $M = \{1,2, \ldots, N,N+1\}$ is in $\sem{\varphi}(S,1,N+1)$.
Therefore, there must be some tree $T \in \sem{\psi}^\tree(S,1,N+1)$ such that $\supp(C_T) = M$.
Let $t$ be the leaf that contains the position $1$ (i.e. the only $R$-tuple), and let $t_1,t_2,\ldots,t_d$ be the nodes in the path from $t$ to $T$ ($d \leq D$).
We know that the sum predicate was applied at most $K$ times in the path between $t$ and $T$.
Moreover, for every occurrence of the sum predicate $A_i = B_i + C_i$ at some node $t_1,\ldots,t_d$, it can only be satisfied if $|C_{t_i}(A_i)| = |C_{t_i}(B_i)| = |C_{t_i}(C_i)| = 1$, and, in particular, if $C_{t_i}(A_i) = \{1\}$, $C_{t_i}(B_i) = \{r_1\}$ and $C_{t_i}(C_i) = \{r_2\}$ for some $r_1,r_2$ at distance 1.
We do not consider the case where some of the labels are mapped to $\emptyset$ because in that case the predicate is not filtering anything.
As stated before, it is easy to see that any other scenario does not satisfy the predicate.

Define $O$ as the set that contains all positions that were compared with $S[1]$, i.e. $O = \{l \mid \exists i. \ l \in B_i \cup C_i\}$.
Since there were at most $K$ occurrences of the sum predicate, we know that $|O|$ is at most $2K$.
Then, if we choose an $N$ big enough (e.g. $N = 2K+2$), we can find two positions $i,j$ that are in $M$ but not in $O$.
Moreover, because of the structure of the stream, we can find $i$ and $j$ such that $j = i+1$.
Intuitively, this means that neither $S[i]$ nor $S[j]$ were compared with $S[1]$.
Furthermore, because no other combination of positions satisfy the sum predicate, we know that $S[i]$ and $S[j]$ were not compared with any other event.
Then, we can define a new stream $S'$ the same as $S$ but replacing the values of $S[i]$ and $S[j]$ with any other value, say $0$, and the evaluation tree $T$ would still be in $\sem{\psi}^\tree(S',1,N+1)$, thus $C_T \in \sem{\psi}(S',1,N+1)$ and $\supp(C_T) = M \in \sem{\varphi}(S',1,N+1)$.
But since the sum of the values $S[i]$ and $S[j]$ is not $N$, we know that $M$ cannot be in $\sem{\varphi}(S',1,N+1)$, reaching a contradiction.


\section{Proofs of Section~\ref{sec:unary-core-expr}}

\subsection{Proof of Proposition~\ref{prop:ucea:efficient}}

In \cite{GRUpaper} there is an algorithm to evaluate match automata, which fulfils the conditions of the proposition.
By making some minor modifications to that algorithm we define an analogous algorithm to evaluate $\ucea$ efficiently.

In Algorithm~\ref{alg:ucea-eval}, procedure \textit{Eval} evaluates a $\ucea$ $\cA$ over a stream $S$, i.e., stores the information of its execution and enumerates the complex events after every new event arrives.
To keep the algorithm simple, we assumed that $\cA$ is I/O deterministic.

\begin{algorithm}\scriptsize
	\caption{Evaluate $\cA = (Q, \delta, q_0, F)$ over a stream $S$}\label{alg:ucea-eval}
	\begin{algorithmic}[1]
		\Procedure{Eval[$\cA$]}{$S$}
		\ForAll{$q \in Q \setminus \{q_0\}$}
		\State $last_q \gets first_q \gets {\tt null}$\hspace*{6em}
		\rlap{\smash{$\left.\begin{array}{@{}c@{}}\\{}\\{}\\{}\end{array}\color{red}\right\}
				\color{red}\begin{tabular}{l}All lists begin empty, except the one of $q_0$, \\ which begins with a $\bot$ and represents the \\ starting point of the complex event.\end{tabular}$}}
		\EndFor
		\State $last_{q_0} \gets first_{q_0} \gets \bot$
		\While{$e \gets {\tt yield}_S$}
		\ForAll{$q \in Q$}
		\State $last'_q \gets first'_q \gets {\tt null}$\hspace*{4.5em}
		\rlap{\smash{$\left.\begin{array}{@{}c@{}}\\{}\\{}\end{array}\color{red}\right\}
				\color{red}\begin{tabular}{l}These variables will represent \\ the lists for the next iteration, \\ and are first initialized empty.\end{tabular}$}}
		\EndFor
		\ForAll{$\{p \mid first_p \neq {\tt null}\}$}
		\ForAll{$\{(L,q) \mid  e \models \alpha \land q = \delta(p,\alpha,L)\}$}
		\If{$L \neq \emptyset$}
		\State ${\tt MoveLabelling}(p,e.{\tt time},L,q)$
		\Else\hspace*{16em}
		\rlap{\smash{$\left.\begin{array}{@{}c@{}}\\{}\\{}\\{}\\{}\\{}\\{}\end{array}\color{red}\right\}
				\color{red}\begin{tabular}{l}This part assembles the lists \\ for the next iteration.\end{tabular}$}}
		\State ${\tt MoveNotLabelling}(p,e.{\tt time},q)$
		\EndIf
		\EndFor
		\EndFor
		\ForAll {$q \in Q$}
		\State $first_q \gets first'_q$\hspace*{6.9em}
		\rlap{\smash{$\left.\begin{array}{@{}c@{}}\\{}\\{}\\{}\end{array}\color{red}\right\}
				\color{red}\begin{tabular}{l}This part updates the lists.\end{tabular}$}}
		\State $last_q \gets last'_q$
		\EndFor
		\State ${\tt Enumerate}(e.{\tt time})$
		\EndWhile
		\EndProcedure
		\Procedure{{\tt MoveLabelling}}{$p,t,L,q$}
		\State $n \gets {\tt Node}(t,L,first_p,last_p,first'_q)$
		\State $first'_q \gets n$\hspace*{12em}
		\rlap{\smash{$\left.\begin{array}{@{}c@{}}\\{}\\{}\\{}\\{}\\{}\\{}\\{}\end{array}\color{red}\right\}
				\color{red}\begin{tabular}{l}This procedure adds the current position \\ and its labels at the top to the list of q.\end{tabular}$}}
		\If{$last'_q = {\tt null}$}
		\State $last'_q \gets first'_q$
		\EndIf
		\EndProcedure
		\Procedure{{\tt MoveNotLabelling}}{$p,t,q$}
		\If{$last'_q = {\tt null}$}
		\State $first'_q \gets first_p$
		\State $last'_q \gets last_p$\hspace*{9.5em}
		\rlap{\smash{$\left.\begin{array}{@{}c@{}}\\{}\\{}\\{}\\{}\\{}\\{}\\{}\\{}\\{}\end{array}\color{red}\right\}
				\color{red}\begin{tabular}{l}This procedure appends the previous list of p \\ at the end of the list of q.\end{tabular}$}}
		\Else
		\State $last'_q.{\tt next} \gets first_p$
		\State $last'_q \gets last_p$
		\EndIf
		\EndProcedure
	\end{algorithmic}
\end{algorithm}
\begin{algorithm}\scriptsize
	\caption{Enumerate complex events at time $time$}\label{alg:ucea-enum}
	\begin{algorithmic}[1]
		\Procedure{{\tt Enumerate}}{$time$}
		\ForAll{$q \in F \cap \{q' \mid first_{q'} \neq {\tt null}\}$}
		\State $n \gets first_q$
		\While{$n \neq {\tt null}$}
		\State ${\tt EnumAll}(n, ())$\hspace*{10.7em}
		\rlap{\smash{$\left.\begin{array}{@{}c@{}}\\{}\\{}\\{}\\{}\\{}\\{}\\{}\\{}\end{array}\color{red}\right\}
				\color{red}\begin{tabular}{l}This part iterates over all the nodes \\ that are on the lists of final states, \\ and applies the {\tt EnumAll} procedure.\end{tabular}$}}
		\State $n \gets n.{\tt next}$
		\EndWhile
		\EndFor
		\EndProcedure
		\Procedure{{\tt EnumAll}}{$node, ce$}
		\If{$node = \bot$}
		\State ${\tt SendResult}(ce)$
		\Else
		\State $ce \gets ce.{\tt add}(node.{\tt time},node.{\tt labels})$
		\State $node' \gets node.{\tt top}$
		\While{$node' \neq node.{\tt bot}$}\hspace*{6em}
		\rlap{\smash{$\left.\begin{array}{@{}c@{}}\\{}\\{}\\{}\\{}\\{}\\{}\\{}\\{}\\{}\\{}\end{array}\color{red}\right\}
				\color{red}\begin{tabular}{l}This part builds all the complex events recursively. \\ Each one is represented by a path beginning in a \\ node in the final state and moving to a previous \\ node until the (initial) node with $\bot$ appears. \\ The only purpose of the last {\tt EnumAll} is to consider \\ the last node on the lists, which is not included \\ in the while.\end{tabular}$}}
		\State ${\tt EnumAll}(node',ce)$
		\State $node' \gets node'.{\tt next}$
		\EndWhile
		\State ${\tt EnumAll}(node',ce)$
		\EndIf
		\EndProcedure
	\end{algorithmic}
\end{algorithm}


In Algorithm~\ref{alg:ucea-eval}, the basic structure to store the complex events' positions is the \textit{node}.
Each node contains five attributes: \textit{time}, \textit{labels}, \textit{top}, \textit{bot}, \textit{next}.
The first one represents the time at which the node was created.
In contrast with the original algorithm, here we had to add the \textit{labels} attribute to store the labels for each position.
Together, the time and labels conform the data used to compute the complex events.
The remaining three are pointers to other nodes, which are better explained later.
The access points to the data structure are the variables $first_q$ and $last_q$.
Consider any iteration $i$.
The main idea of the stored data structure is the following: for every state $q$ there is a list of nodes which contain the complex events' information of all the runs that are in state $q$ while reading the $i$-th event.
The first node of the list is $first_q$, and each node points to the following one with it's attribute $next$.
For notation, let $list_q$ be such list.
For each node $n$ on the list, the attributes $top$ and $bot$ are used to access the previous positions of the complex event, and are used in the following way.
Both of them are in the same previous list, and the former appears first, i.e., $bot$ can be reached by moving through the list from $top$ (using the $next$ attribute).
Let $prev_n = n_1, n_2, \ldots, n_k$ be the nodes between $top$ and $bot$, that is, $n_1 = n.{\tt top}$, $n_k = n.{\tt bot}$ and $n_{i+1} = n_i.{\tt next}$.
Then, each $n_i.{\tt time}$ contains the position that came before $n.{\tt time}$ for some complex event.
That way, all complex events can be computed by recursively moving through all the nodes of $prev_n$.
Finally, considering all the nodes of $list_q$, one can compute all complex events that end at state $q$, which is the main idea of the procedure \textit{EnumAll}.

The updating procedure works in the following way.
The variables $first'_q$ and $last'_q$ are used to store the values of $first_q$ and $last_q$ for the next iteration, i.e., they define the states' lists for the next iteration.
First, if there is a transition $(q,\alpha,L,p)$ such that $e \models \alpha$ and $L \neq \emptyset$, then a new node $n$ is added to the list of $p$.
This node contains the position $e.{\tt time}$, the labels that were assigned to it and pointers to the first and last nodes of the list of $q$ in the previous iteration; these values are stored in the attributes $n.{\tt time}$, $n.{\tt top}$ and $n.{\tt bot}$, respectively.
Also, if there is a transition $(q,\alpha,L,p)$ such that $e \models \alpha$ and $L = \emptyset$, then all the previous list of $q$ is added to the list of $p$.
One way to see this is that all the runs that ended at state $q$ on the previous iteration are now extended with this transition, thus ending now at state $p$.
These two updates are performed by procedures \textit{MoveLabelling} and \textit{MoveNotLabelling}.

The purpose of Algorithm ~\ref{alg:ucea-enum} is to build all the complex events recursively.
Each complex event is represented by a path beginning in a node $n$ in the final state and moving to a previous node $n' \in prev_n$ until the (initial) node with $\bot$ appears.
The only purpose of the last {\tt EnumAll} is to consider the last node on the lists, which is not included in the while.

Consider any $\cA$ and $S$. For every iteration $i$, Algorithm~\ref{alg:ucea-eval} enumerates $\sem{\cA}_i(S)$.
The statement is derived directly from the proof of \cite{GRUpaper}.
Moreover, there the reader can find a proof that the algorithm fulfils conditions  of Proposition~\ref{prop:ucea:efficient}.

\subsection{Proof of Proposition~\ref{prop:unarycel-cea}}

The proposition is deduced directly from the proof of Theorem~\ref{thm:equiv:socel:ucea}.
In the proof, the reader can find a construction that, for any formula $\varphi$ in $\socel(\ext{\cP})$, builds an equivalent $\ucea$ $\cA$ by induction over $\varphi$.

\subsection{Proof of Theorem~\ref{theo:CEP_MAs}}

We begin by considering the first direction, i.e., for every formula $\varphi$ in $\socel(\ext{\upset})$, there exists a $\ucea(\upset)$ $\cA$ such that $\sem{\varphi}_n(S)=\ssem{\cA}_n(S)$ for every stream $S$ and $n \in \bbN$.
To prove this, we use the same construction as the one in Theorem~\ref{thm:equiv:socel:ucea}.
The only difference is for the case of the assignment.
In that case, if $\varphi = R$, then $\cA_\varphi$ is defined as depicted in figure~\ref{fig:R_uceas}, i.e. $\cA_\varphi = (\{q_1,q_2\},\Delta_\varphi,\{q_1\},\{q_2\})$ with $\Delta_\varphi = \{(q_1,\tuples(R),\{R\},q_2)\}$.
For the rest of the cases, i.e. $\cin A$, $[A \rightarrow B]$, $\sq$, $\cor$, $\ks$,$\FILTER$, the construction is the same as in Theorem~\ref{thm:equiv:socel:ucea}.

\begin{figure}
	\begin{center}
		\begin{tikzpicture}[->,>=stealth, semithick, auto, initial text= {}, initial distance= {3mm}, accepting distance= {4mm}, node distance=4cm, semithick]
		\tikzstyle{every state}=[draw=black,text=black,inner sep=0pt, minimum size=8mm]
		\node[initial,state]	(1) 				{$q_1$};
		\node[accepting,state]	(2) [right of=1]	{$q_2$};
		\path
		(1)
		edge 				node {$\tuples(R) \mid \{R\}$} (2);
		\end{tikzpicture}
		\caption{A $\ucea$ for $R$ with the $*$-semantics.}\label{fig:R_uceas}
	\end{center}
\end{figure}
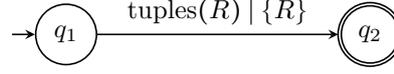

Now, we prove the second direction, that is, for every $\ucea(\upset)$ $\cA$, there exists a formula $\varphi$ in $\socel(\ext{\upset})$ such that $\ssem{\cA}_n(S)=\sem{\varphi}_n(S)$ for every stream $S$ and $n \in \bbN$.
For this, we give a construction that, for any $\cA = (Q, \Delta, I, F)$, defines a formula $\varphi_\cA$.
Consider the set $Q = \{q_1, q_2, \dots, q_n\}$, and assume that $I = \{q_1\}$ and $F = \{q_n\}$.
Here we use the same idea of Theorem~\ref{thm:equiv:socel:ucea}: to define, for every pair of states $(q_i, q_j)$, a formula $\varphi_{ij}$ that represents the complex events of the $*$-runs from $q_i$ to $q_j$.
Furthermore, we define $\varphi_{ij}^k$ the same way but with the restriction that the $*$-runs only pass through states $q_1, \ldots, q_k$.

We define $\varphi_{ij}^k$ recursively in the following way.
The base case $k = 0$ is defined as:
\[
\varphi_{ij}^0 = \left\{
\renewcommand{\arraystretch}{1.3}
\begin{array}{ll}
\rho^{P_1}_{L_1} \cor \rho^{P_2}_{L_2} \cor \ldots \cor \rho^{P_k}_{L_k} & \text{if $(q_i,P_1, L_1,q_j), \ldots , (q_i,P_k, L_k,q_j) \in \Delta$ and $L\neq \emptyset$}\\
\FALSE & \text{otherwise}
\end{array}
\right.
\]
where $\rho^P_L$ is the formula defined in Theorem~\ref{thm:equiv:socel:ucea} that accepts all complex events with a single event that satisfies $P$ and assigns the labels $L$.
Its definition is $\rho^P_L := ( \ldots (\rho_P \cin A_2) \ldots ) \cin A_l$, where $L = \{A_1, \ldots, A_l\}$ and $\rho_P = (R_1 \FILTER P) [R_1 \rightarrow A_1] \cor \cdots \cor (R_r \FILTER P) [R_r \rightarrow A_1]$ for $\cR = \{R_1, \ldots, R_r\}$.
Note that at each $R_i \FILTER P$ we need to remove the predicates of the form $\tuples(R)$, which is done as expected by replacing each one with either $\TRUE$ if $R = R_i$ or $\FALSE$ if $R \neq R_i$.
Next, the recursion is defined as:
\begin{equation*}
\varphi_{ij}^k \; = \; \varphi_{ij}^{k-1} \cor (\varphi_{ik}^{k-1} \sq \varphi_{kj}^{k-1}) \cor (\varphi_{ik}^{k-1} \sq \varphi_{kk}^{k-1}\ks \sq \varphi_{kj}^{k-1})
\end{equation*}
Notice here that the formula differs from the one in Theorem~\ref{thm:equiv:socel:ucea} in that there we use the operator $\sstrictsq$, while here we use the operator $;$.
Finally, the final formula $\varphi_\cA$ is the result of considering $\varphi_{1n}$.
Notice that in contrast with Theorem~\ref{thm:equiv:socel:ucea} here we do not need to apply the $\pi$ operator, since the removal of the type labels $R_i$ was done directly on the assignments.
Moreover, we also do not need to use the $\sSTART$ operator since the $*$-semantics does not allow the automaton to identify the first event of the stream.

\subsection{Proof of Proposition~\ref{prop:MA_MAs}}

Consider any $\ucea(\upset)$ $\cA = (Q, \Delta, I, F)$ that has the $*$-property.
Now, define the $\ucea(\upset)$ $\cA' = (Q, \Delta', I, F)$ such that $\Delta' = \{(p,\alpha,L,q) \mid (p,\alpha,L,q) \in \Delta \land L \neq \emptyset\}$.
Let $S$ be any stream.
We now prove that $\sem{\cA}_n(S) = \ssem{\cA'}_n(S)$.
First, consider a complex event $C \in \sem{\cA}_n(S)$.
This means that there is an accepting run of $\cA$:
$$
\rho: q_0 \ \trans{\alpha_1 / L_1} \ q_1 \  \trans{\alpha_2 / L_2} \ \cdots \ \trans{\alpha_n / L_n} \ q_n
$$
Such that $C_\rho = C$.
Let $\{i_1,i_2, \ldots, i_k\}$ be the support of $C$, and consider the stream $S_C$ as the stream formed by the events $S[i_1], \ldots, S[i_k]$.
Then, because $\cA$ defines a function with $*$-property, there has to be an accepting run of $\cA$ over $S_C$.
$$
\rho': q'_0 \ \trans{\alpha'_1 / L_{i_1}} \ q'_1 \  \trans{\alpha'_2 / L_{i_2}} \ \cdots \ \trans{\alpha'_k / L_{i_k}} \ q'_k
$$
Because of the construction, the analogous $*$-run of $\cA'$ over $S_C$:
$$
\sigma': (q'_0,0) \ \trans{\alpha'_1 / L_{i_1}} \ (q'_1,1) \  \trans{\alpha'_2 / L_{i_2}} \ \cdots \ \trans{\alpha'_k / L_{i_k}} \ (q'_k,k)
$$
is an accepting $*$-run.
Moreover, one can unfold $S_C$ back to the original stream $S$ and the $*$-run:
$$
\sigma: (q'_0,0) \ \trans{\alpha'_1 / L_{i_1}} \ (q'_1,i_1) \  \trans{\alpha'_2 / L_{i_2}} \ \cdots \ \trans{\alpha'_k / L_{i_k}} \ (q'_k,i_k)
$$
is an accepting $*$-run of $\cA'$ over $S$, therefore $C_\sigma = C \in \sem{\cA'}_n(S)$.

The proof for the converse case is practically the same.
Assume that $C \in \ssem{\cA'}_n(S)$, which means that the $*$-run $\sigma$ of $\cA'$ over $S$ exists.
Moreover, the $*$-run $\sigma'$ of $\cA'$ over $S_C$ also exists, thus by the construction of $\cA'$, the run $\rho'$ of $\cA$ must exist.
Because $\cA$ defines a function with $*$-property, the accepting run $\rho$ of $\cA$ over $S$ has to exist.
We conclude that $C_\rho = C \in \sem{\cA}_n(S)$.

\section{Proofs of Section~\ref{sec:capturing-ma}}

\subsection{Proof of Theorem~\ref{thm:equiv:socel:ucea}}

\paragraph{Construction of UCEA}

Let $\upset$ be a set of unary predicates and let $\varphi$ be a formula in $\socelfull(\ext{\upset})$. We start by showing how to construct a $\ucea$ $\cA_\varphi$ over $\upset$ that is equivalent to $\varphi$. We proceed by induction, assuming that for every formula $\psi$ shorter than $\phi$ there is a $\ucea$ $\cA_\psi$ that is equivalent to~$\psi$:

\begin{itemize}
	\item If $\varphi = R$, then $\cA_\varphi$ is defined as depicted in figure~\ref{fig:R_ucea}, i.e. $\cA_\varphi = (\{q_1,q_2\},\Delta_\varphi,\{q_1\},\{q_2\})$ with $\Delta_\varphi = \{(q_1,\TRUE,\emptyset,q_1),(q_1,\tuples(R),\{R\},q_2)\}$
	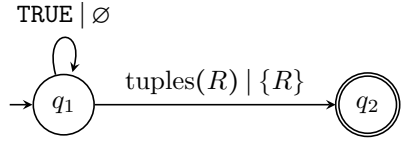
\begin{figure}
		\begin{center}
			\begin{tikzpicture}[->,>=stealth, semithick, auto, initial text= {}, initial distance= {3mm}, accepting distance= {4mm}, node distance=4cm, semithick]
			\tikzstyle{every state}=[draw=black,text=black,inner sep=0pt, minimum size=8mm]
			\node[initial,state]	(1) 				{$q_1$};
			\node[accepting,state]	(2) [right of=1]	{$q_2$};
			\path
			(1)
			edge 				node {$\tuples(R) \mid \{R\}$} (2)
			edge [loop above] 	node {$\TRUE \mid \emptyset$} (1);
			\end{tikzpicture}
			\caption{A $\ucea$ for the atomic formula $R$.}\label{fig:R_ucea}
		\end{center}
	\end{figure}
	\item If $\varphi = \psi \cin A$, then $\cA_\varphi = (Q_\psi, \Delta_\varphi,I_\psi,F_\psi)$ where $\Delta_\varphi$ is the result of adding label $A$ to all non-empty transitions of $\Delta_\psi$.
	Formally, $\Delta_\varphi = \{(p,P,L,q)\in\Delta_\psi \mid L = \emptyset\}\cup\{(p,P,L,q)\mid \exists L'\neq\emptyset \text{ such that } (p,P,L',q)\in \Delta_\psi \land L=L'\cup\{L\}\}$.
	
	\item If $\varphi = \pi_L(\psi)$ for some $L \subseteq \lset$, then $\cA_\varphi = (Q_\psi, \Delta_\varphi,I_\psi,F_\psi)$ where $\Delta_\varphi$ is the result of intersecting each set transition of $\Delta_\psi$ with $L$.
	Formally, that is $\Delta_\varphi = \{(p,P,L \cap L' ,q) \mid (p,P,L',q) \in \Delta_\psi\}$.
	
	\item If $\varphi = \psi[A\rightarrow B]$ for some $A,B \subseteq \lset$, then $\cA_\varphi = (Q_\psi, \Delta_\varphi,I_\psi,F_\psi)$ where $\Delta_\varphi$ is the result of replacing every label $A$ with $B$.
	Formally, if transition $(p,P,L,q)$ is in $\Delta_\varphi$ and $A \in L$, then $(p,P,(L\setminus\{B\})\cup\{A\},q)$ is in $\Delta_\psi$, and if $(p,P,L,q)$ is in $\Delta_\varphi$ and $A \notin L$, then $(p,P,L,q)$ is in $\Delta_\psi$.
	
	\item If $\varphi = \psi \FILTER \ext{P}(A)$ for some unary FO predicate $P$ and $A \in \lset$, then $\cA_\varphi = (Q_\psi, \Delta_\varphi,I_\psi,F_\psi)$ where $\Delta_\varphi$ is defined as
	$\{(p,P',L,q)\in\Delta_\psi \mid A \notin L\}\cup\{(p,P \wedge P',L,q)\mid (p,P',L,q)\in \Delta_\psi \land A \in L\}$. The intuition behind this is that since $\ext{P}$ is the universal extension of $P$, all tuples that are labeld by $A$ must satisfy $P$.
	
	\item If $\varphi = \psi_1 \sq \psi_2$, then $\cA_\varphi = (Q_{\psi_1} \cup Q_{\psi_2}, \Delta_\varphi, I_{\psi_1}, F_{\psi_2})$ where $\Delta_\varphi = \Delta_{\psi_1} \cup \Delta_{\psi_2} \cup \{(p,P,L,q) \mid q \in I_{\psi_2} \land \exists q' \in F_{\psi_1}.(p,P,L,q') \in \Delta_{\psi_1}\}$.
	
	\item If $\varphi = \psi_1 \strictsq \psi_2$, then we do the following.
	We add a new dummy state $q$, which will make the connection between the first part and the last one.
	In order to obtain the $\strictsq$ semantics, we will restrict $q$ to only arrive and depart non-empty transitions.
	We define $\cA_\varphi = (Q_\varphi,\Delta_\varphi,I_\varphi,F_\varphi)$ as follows.
	First, the set of states is $Q_\varphi = Q_{\psi_1} \cup Q_{\psi_2} \cup \{q\}$, where $q$ is a new dummy state.
	Then, the transition relation is $\Delta_{\varphi} = \Delta_{\psi_1} \cup \Delta_{\psi_2} \cup \{(q_1,P,L,q) \mid L \neq \emptyset \land \exists q' \in F_{\psi_1}. \ ((q_1,P,L,q') \in \Delta_{\psi_1})\} \cup \{(q,P,L,q_2) \mid L \neq \emptyset \land \exists q' \in I_{\psi_2}. \ ((q',P,L,q_2) \in \Delta_{\psi_1})\}$.
	Finally, the sets of initial and final states are $I_\varphi = I_{\psi_1}$ and $F_\varphi = F_{\psi_2}$.
	The idea of why this works is that at some point it has to go from $\cA_{\psi_1}$ to $\cA_{\psi_2}$, and the only way to do it is through $q$.
	Then, because $q$ only receives and departs non-empty transitions,  we get the desired result.
	
	\item If $\varphi = \psi \ks$, then $\cA_\varphi = (Q_\psi, \Delta_\varphi, I_\psi, F_\psi)$ where $\Delta_\varphi = \Delta_\psi \cup \{(p,P,L,q) \mid q \in I_\psi \land \exists q' \in F_\psi.(p,P,L,q') \in \Delta_\psi\}$.
	
	\item If $\varphi = \psi \strictks$, then we can use an idea similar to the one we used for the operator $:$.
	We add a new dummy state $q$, which will make the connection between one iteration and the next one.
	In order to obtain the $\strictks$ semantics, we will restrict $q$ to only arrive and depart non-empty transitions.
	We do this as follows: we define $\cA_\varphi = (Q_\psi \cup \{q\}, \Delta_\varphi, I_\psi, F_\psi)$.
	The transition relation is $\Delta_\varphi = \Delta_{\psi_1} \cup \Delta_{\psi_2} \cup \{(q_1,P,L,q) \mid L \neq \emptyset \land \exists q' \in F_\psi. \ ((q_1,P,L,q') \in \Delta_\psi)\} \cup \{(q,P,L,q_2) \mid L \neq \emptyset \land \exists q' \in I_\psi. \ ((q',P,L,q_2) \in \Delta_\psi)\}$.
	The idea of why this works is the same one for the $:$-case: at some point it has to go from one iteration to another, and the only way to do it is through $q$.
	Then, because $q$ only receives and departs non-empty transitions,  we get the desired result.

	\item If $\varphi = \START(\psi)$, then we need to force the first transition to contain at least one label.
	Similar to the previous cases, we add a new dummy state $q$ which will work as our initial state, and we will restrict it to only depart non-empty transitions.
	Formally, $\cA_\varphi = (Q_\psi \cup \{q\}, \Delta_\varphi, \{q\},F_\psi)$, where $\Delta_\varphi = \Delta_psi \cup \{(q,P,L,p) \mid \exists q' \in I_\psi. ((q',P,L,p) \in \Delta_{\psi})\}$.

	\item If $\varphi = \psi_1 \cor \psi_2$, then $\cA_\varphi$ is the automata union between $\cA_{\psi_1}$ and $\cA_{\psi_2}$ as one would expect: $\cA_\varphi = (Q_{\psi_1} \cup Q_{\psi_2}, \Delta_{\psi_1} \cup \Delta_{\psi_2}, I_{\psi_1} \cup I_{\psi_2}, F_{\psi_1} \cup F_{\psi_2})$.
	
\end{itemize}
Notice that the size of the resulting automaton $\cA_\varphi$ is linear in the size of $\varphi$.
Moreover, unlike the construction in \cite{GRUpaper}, where it needed some preprocessing on the formula $\varphi$ (in particular, to push the predicates down), here we do not need any preprocessing because the construction for the case $\sFILTER$ is straightforward.

\subsubsection{From UCEA to unary SO-CEL}

Now we proceed to show the opposite direction. This means that given a UCEA we need to define an equivalent unary SO-CEL formula. Let $\cA = (Q, \Delta, I, F)$ be a UCEA, with $Q = \{q_1, \ldots, q_n\}$. Without loss of generality, assume that there is only one initial state and one final state, i.e., $I = \{q_1\}$ and $F = \{q_n\}$. Moreover, we only consider non-zero executions, that is, an accepting run of a UCEA must be of length at least $1$. Notice that this limits automata to not being able to run over the empty stream, but since we only care about large (potentially infinite) streams, this case is not of interest.

In the sequel we define a formula $\varphi_\cA$ such that $\sem{\cA}(S) = \sem{\varphi_\cA}(S)$ for every $S$. The main idea is based on the construction from finite automata over words which defines, for every pair of states $q_i, q_j$, a $\socel$ formula $\varphi_{ij}$ that represents the complex events defined by the runs from $q_i$ to $q_j$. Furthermore, we define $\varphi_{ij}^k$ the same way but with the restriction that the runs only pass through states $q_1, \ldots, q_k$.
It is clear that $\varphi_{ij}^{|Q|} = \varphi_{ij}$.

To simplify the construction, we give some definitions.
First, we define the $\FALSE$ formula as a formula that is never satisfied.
One way to define it is $\FALSE = (R \FILTER \bot)$, but we will use $\FALSE$ to keep the proof simple.
Also, if $P$ is a formula in $\upset_\plus$ and $R$ is a relation, we define $P[R]$ as the formula that results after replacing in $P$ every occurrence of $\tuples(R)$ with $\TRUE$ and every $\tuples(R')'$ with $\FALSE$ for all $R' \neq R$.
Finally, if $L$ is the set of labels used in $\cA$, we assume that $L \cap \cR = \emptyset$, therefore no label is mistakenly projected.
This is a reasonable assumption, since one can simply add a duplicate $R'$ to every $R \in \cR$, use $R'$ instead of $R$ in $\cA$ to create $\varphi_\cA$, and then rename $R'$ with $R$ over $\varphi_\cA$.

Now we define $\varphi_{ij}^k$ recursively in the following way.
The base case $k = 0$ is defined as:
\[
\varphi_{ij}^0 = \left\{
\renewcommand{\arraystretch}{1.3}
\begin{array}{ll}
\rho^{P_1}_{L_1} \cor \rho^{P_2}_{L_2} \cor \ldots \cor \rho^{P_k}_{L_k} & \text{if $(q_i,P_1, L_1,q_j), \ldots , (q_i,P_k, L_k,q_j) \in \Delta$}\\
\FALSE & \text{otherwise}
\end{array}
\right.
\]
where $\rho^P_L$ represents the CEPL formula that accepts all matches with a single event that satisfies $P$ and assigns the labels $L$.
We define this formula as $\rho^P_L := ( \ldots (\rho_P \cin A_1) \ldots ) \cin A_l$, where $L = \{A_1, \ldots, A_l\}$ and $\rho_P = (R_1 \FILTER P[R_1]) \cor \cdots \cor (R_r \FILTER P[R_r])$ for $\cR = \{R_1, \ldots, R_r\}$.

Next, the recursion is defined as:
\begin{equation}
\varphi_{ij}^k \; = \; \varphi_{ij}^{k-1} \cor (\varphi_{ik}^{k-1} \strictsq \varphi_{kj}^{k-1}) \cor (\varphi_{ik}^{k-1} \strictsq \varphi_{kk}^{k-1}\ks \strictsq \varphi_{kj}^{k-1}) \label{eq:formula}
\end{equation}
Finally, the final formula $\varphi_\cA$ is the result of considering $\varphi_{1n}$ and projecting to retrieve only the labels of $\cA$.
Formally, we define it as $\varphi_\cA := \pi_L(\!\!\!\START( \varphi_{1n} ) ) $, where $L$ is the set of labels in $\cA$.
The $\START$ only forces the match to begin immediately at position $0$, and the projection is needed so that the result does not contain the default labels assigned by the formula.
We assume that $L \cap \cR = \emptyset$, therefore no label is mistakenly projected.

Finally, it is straightforward to prove the correctness of the construction by induction over the number of states.

\subsection{Proof of Proposition~\ref{prop:strict-unary-equivalence}}

Consider a formula $\varphi$ in $\socel\cup\{\strictsq,\strictks\}(\upset)$.
We first prove that there is a formula $\psi$ in $\socel\cup\{\strictsq,\STRICT\}(\upset)$ which is equivalent to $\varphi$, and then the proof follows directly from Theorem~\ref{theo:strict-proper-containment}.

Consider any formula $\varphi'$ in $\socel\cup\{\strictsq,\STRICT\}(\upset)$.
We prove by induction that there exists a formula $\psi'$ in $\socel\cup\{\strictsq,\STRICT\}(\upset)$ which is equivalent to $\varphi' \strictks$.
For the sake of simplicity, we first push all the labellings down to the assignments.
For any formula $\varphi$ and labels $A,B$ we write $\varphi^{A\rightarrow B}$ to refer to the formula $\varphi$ after replacing every occurrence of $A$ by $B$. 
Now, we push the labellings in the following way.
For every subformula of $\varphi'$ with the form $\rho \cin A$, we replace using the following equivalences:
\begin{itemize}
	\item If $\rho = \rho_1 \ \OP \ \rho_2$, with $\OP \in \{\sq, \strictsq, \cor\}$, then $\rho \cin A \equiv (\rho_1 \cin A) \ \OP \ (\rho_2 \cin A)$,
	\item If $\rho = \rho_1 \ks$, then $\rho \cin A \equiv (\rho_1 \cin A) \ks$,
	\item If $\rho = \STRICT(\rho_1)$, then $\rho \cin A \equiv \STRICT(\rho_1 \cin A)$,
	\item If $\rho = \rho_1 [B_1 \rightarrow B_2]$, then $\rho \cin A \equiv \rho_1^{B_1\rightarrow B'} \cin A [B' \rightarrow B_2]$, where $B'$ is a new label,
	\item If $\rho = \rho_1 \FILTER P(B)$, then $\rho \cin A \equiv \rho_1^{B\rightarrow B'} \cin A \FILTER P(B')[B' \rightarrow B]$, where $B'$ is a new label.
\end{itemize}

Now that all the labellings of $\varphi'$ are at the bottom-most level, we proceed to prove by induction that there exists a formula $\psi'$ in $\socel\cup\{\strictsq,\STRICT\}(\upset)$ which is equivalent to $\varphi' \strictks$.
Consider the following cases:
\begin{itemize}
	\item For the base case, if $\varphi' = R$, then $\psi' = \STRICT(R \ks)$ is equivalent to $\varphi' \strictks$.
	\item If $\varphi' = \varphi_1 \cin A$, note that $(\varphi_1 \cin A) \strictks \equiv (\varphi_1 \strictks) \cin A$.
	Then, by induction hypothesis there is a formula $\sigma$ in $\socel\cup\{\strictsq,\STRICT\}(\upset)$ equivalent to $\varphi_1 \strictks$, therefore $\psi' = \sigma \cin A$ is equivalent to $\varphi' \strictks$.
	The case for $\varphi' = \varphi_1 [A \rightarrow B]$ is exactly the same.
	\item If $\varphi' = \varphi_1 \FILTER P(A)$, note that $(\varphi_1 \FILTER P(A)) \strictks \equiv (\varphi_1 \strictks) \FILTER P(A)$.
	Then, by induction hypothesis there is a formula $\sigma$ in $\socel\cup\{\strictsq,\STRICT\}(\upset)$ equivalent to $\varphi_1 \strictks$, therefore $\psi' = \sigma \FILTER P(A)$ is equivalent to $\varphi' \strictks$.
	\item If $\varphi' = \STRICT(\varphi_1)$, then $\psi' = \STRICT(\varphi_1 \ks)$ is equivalent to $\varphi' \strictks$.
	\item If $\varphi' = \varphi_1 \sq \varphi_2$, then $\psi' = \varphi_1 \sq (\varphi_2 \cor ((\varphi_2 \strictsq \varphi_1) \ks \sq \varphi_2))$ is equivalent to $\varphi' \strictks$.
	\item If $\varphi' = \varphi_1 \ks$, then $\psi' = \varphi_1 \ks$ is equivalent to $\varphi' \strictks$.
\end{itemize}
We do not consider the $\strictsq$-case since we know that they can be removed by using $\STRICT$ instead.

The last and more complex operator is the $\scor$, for which we have to consider $\varphi' = \varphi_1 \cor \varphi_2$, with all possible cases for $\varphi_1$ and $\varphi_2$.
The simplest scenario is where both $\varphi_1$ and $\varphi_2$ have either the form $R$, $R \cin A$ or $\STRICT(\psi)$ for some $\psi$, at which case we can simply write $\varphi'\strictks$ as $\STRICT(\varphi' \ks)$ (here we use that all labellings are at the bottom-most level, thus we can avoid that case in the sequel).

Now we consider the cases where some of them does not have this form (w.l.o.g. assume is $\varphi_2$).
Consider the case $\varphi_2 = \rho_1 \sq \rho_2$.
Here we use the following equivalence:

\[
\begin{array}{rcll}
(\varphi_1 \cor (\rho_1 \sq \rho_2)) \strictks &\equiv& \varphi_1 \strictks \cor (\rho_1 \sq \rho_2) \strictks \cor&(1)\\ 
&&(\varphi_1 \strictks \strictsq (\rho_1 \sq \rho_2) \strictks) \strictks \cor & (2)\\
&&\varphi_1 \strictks \strictsq ((\rho_1 \sq \rho_2) \strictks \strictsq \varphi_1 \strictks) \strictks \cor & (3)\\
&&((\rho_1 \sq \rho_2)\strictks \strictsq \varphi_1 \strictks)\strictks \cor& (4)\\
&&(\rho_1 \sq \rho_2)\strictks \strictsq (\varphi_1 \strictks \strictsq (\rho_1 \sq \rho_2))\strictks& (5)
\end{array}
\]
Here, part $(1)$ has no problem since the $\strictks$-operator is applied over subformulas of the original one, thus by induction hypothesis they can be written without $\strictks$.
Moreover, with some basic transformations in part $(2)$ (replacing $(\rho_1 \sq \rho_2)\strictks$ by $\rho_1 \sq (\rho_2 \cor ((\rho_2 \strictsq \rho_1)\ks \sq \rho_2))$) one can show that it is equivalent to the formula $(\sigma_1 \sq \sigma_2) \strictks$, where $\sigma_1 = \varphi_1 \strictks \strictsq \rho_1$ and $\sigma_2 = \rho_2 \cor ((\rho_2 \strictsq \rho_1) \ks \sq \rho_2)$.
Then, we can replace $(\sigma_1 \sq \sigma_2)$ with $\sigma_1 \sq (\sigma_2 \cor ((\sigma_2 \strictsq \sigma_1)\ks \sq \sigma_2))$, and the resulting formula will contain only one $\strictks$ in the form $\varphi_1 \strictks$, which by induction hypothesis can also be removed.
Similarly, parts $(3)$, $(4)$ and $(5)$ can be rewritten this way, therefore for the case of $\varphi_2 = \rho_1 \sq \rho_2$ the induction statement remains true.

Now consider the case $\varphi_2 = \rho\ks$.
Notice that, by expanding the definition of the $\ks$-operator ($\rho \ks \equiv \rho \cor \rho \sq \rho \ks$), $\varphi'$ can then be written as $(\varphi_1 \cor \rho) \cor (\rho \sq \rho \ks)$.
Then, if we redefine $\varphi_1 := (\varphi_1 \cor \rho)$ and $\varphi_2 ;= (\rho \sq \rho \ks)$, clearly $\varphi'$ would have the form $\varphi_1 \cor (\rho_1 \sq \rho_2)$.
Therefore, we can apply the previous case and the resulting formula will still satisfy the induction statement, thus it remains true in the case $\varphi_2 = \rho \ks$.

Consider now the case $\varphi_2 = \rho \FILTER P(A)$.
We use the equivalence 
$$(\varphi_1 \cor \rho \FILTER P(A))\strictks \equiv (\varphi_1 \cor \rho^{A\rightarrow A'})\strictks \FILTER P(A') [A' \rightarrow A]$$
By induction hypothesis we know that there is a formula $\sigma$ equivalent to $(\varphi_1 \cor \rho^{A\rightarrow A'})\strictks$, thus $\varphi'\strictks \equiv \sigma \FILTER P(A') [A' \rightarrow A]$.

The only case that is left is $\varphi_2 = \rho [A \rightarrow B]$, for which we can use the equivalence
$$(\varphi_1 \cor \rho[A \rightarrow B])\strictks \equiv (\varphi_1 \cor \rho^{A\rightarrow A'})\strictks [A' \rightarrow B]$$
By induction hypothesis we know that there is a formula $\sigma$ equivalent to $(\varphi_1 \cor \rho^{A\rightarrow A'})\strictks$, thus $\varphi'\strictks \equiv \sigma [A' \rightarrow B]$.

Then, we can replace every subformula $\varphi'$ of $\varphi$ with its equivalent formula $\psi'$ in a bottom-up fashion to ensure that the subformulas of $\varphi'$ satisfy the condition of being in $\socel\cup\{\strictsq,\STRICT\}(\upset)$.
Finally, the remaining formula $\psi$ does not contain $\strictks$ and thus is in $\socel\cup\{\strictsq,\STRICT\}(\upset)$.

\subsection{Proof of Proposition~\ref{prop:strict-contained-in-contiguous}}

Consider a formula $\varphi$ in $\socel\cup\{\STRICT\}(\pset)$.
We first prove that for every formula $\varphi'$ in $\socel\cup\{\STRICT\}(\pset)$ there is a formula $\psi'$ in $\socel\cup\{\strictsq,\strictks\}(\pset)$ that is equivalent to $\STRICT(\varphi')$, for which we do induction over $\varphi'$.
The base case is $\varphi' = R$, which clearly satisfies the above considering $\psi' = R$.
For the inductive step, consider the following cases.
\begin{itemize}
	\item If $\varphi' = \rho \cin A$, then $\STRICT(\varphi') \equiv \STRICT(\rho) \cin A$.
	By induction hypothesis, there is a formula $\sigma$ in $\socel\cup\{\strictsq,\strictks\}(\pset)$ equivalent to $\STRICT(\rho)$.
	Thus, $\psi' = \sigma \cin A$ is equivalent to $\STRICT(\varphi')$.
	
	\item If $\varphi' = \rho [A \rightarrow B]$, then $\STRICT(\varphi') \equiv \STRICT(\rho) [A \rightarrow B]$.
	By induction hypothesis, there is a formula $\sigma$ in $\socel\cup\{\strictsq,\strictks\}(\pset)$ equivalent to $\STRICT(\rho)$.
	Thus, $\psi' = \sigma [A \rightarrow B]$ is equivalent to $\STRICT(\varphi')$.
	
	\item If $\varphi' = \rho_1 \sq \rho_2$, then $\STRICT(\varphi') \equiv \STRICT(\rho_1) \strictsq \STRICT(\rho_2)$.
	By induction hypothesis, both $\STRICT(\rho_1)$ and $\STRICT(\rho_2)$ have equivalent formulas $\sigma_1$ and $\sigma_2$, respectively, in $\socel\cup\{\strictsq,\strictks\}(\pset)$.
	Thus, $\psi' = \sigma_1 \strictsq \sigma_2$ is equivalent to $\STRICT(\varphi')$.
	\item If $\varphi' = \rho \FILTER P$, then $\STRICT(\varphi') \equiv \STRICT(\rho) \FILTER P$.
	By induction hypothesis, $\STRICT(\rho)$ has an equivalent formula $\sigma$ in $\socel\cup\{\strictsq,\strictks\}(\pset)$.
	Thus, $\psi' = \sigma \FILTER P$ is equivalent to $\STRICT(\varphi')$.
	\item If $\varphi' = \rho_1 \cor \rho_2$, then $\STRICT(\varphi') \equiv \STRICT(\rho_1) \cor \STRICT(\rho_2)$.
	By induction hypothesis, both $\STRICT(\rho_1)$ and $\STRICT(\rho_2)$ have equivalent formulas $\sigma_1$ and $\sigma_2$, respectively, in $\socel\cup\{\strictsq,\strictks\}(\pset)$.
	Thus, $\psi' = \sigma_1 \cor \sigma_2$ is equivalent to $\STRICT(\varphi')$.
	\item If $\varphi' = \rho \ks$, then $\STRICT(\varphi') \equiv \STRICT(\rho) \strictks$.
	By induction hypothesis, $\STRICT(\rho)$ has an equivalent formula $\sigma$ in $\socel\cup\{\strictsq,\strictks\}(\pset)$.
	Thus, $\psi' = \sigma \strictks$ is equivalent to $\STRICT(\varphi')$.
\end{itemize}
After this the only thing left is to replace every subformula $\varphi'$ of $\varphi$ with its equivalent $\psi'$, and the resulting formula will be in $\socel\cup\{\strictsq,\strictks\}(\pset)$ and will be equivalent to $\varphi$, thus proving the lemma.

\subsection{Proof of Theorem~\ref{theo:strict-proper-containment}}

\subsubsection{\texorpdfstring{$\socel\cup\{:\} \subseteq \socel\cup\{\STRICT\}$}{Lg}}

For the first part we prove that for every formula $\varphi$ in $\socel\cup\{\strictsq\}(\pset)$ there is a formula $\psi$ in $\socel \cup \{\STRICT\}(\pset)$ such that $\varphi\equiv\psi$.
For notation, for any formula $\rho$ and labels $A,B$ we write $\rho^{A\rightarrow B}$ to refer to the formula $\rho$ after replacing every occurrence of $A$ by $B$.

Consider a formula $\varphi$ in $\socel\cup\{\strictsq\}(\pset)$.
We are going to make use of a rather useful property of $\socel\{\strictsq\}(\pset)$, and of $\socel$ in general, that is the fact that one can always push the labelling down to the assignments (or an assignment with filters).
This is because any formula of the form $\varphi = \varphi' \cin A$ can be rewritten as a new formula $\psi$ that labels $A$ one level lower than $\varphi$.
This is done in the following way:
\begin{itemize}
	\item If $\varphi' = \rho_1 ~\OP~ \rho_2$ with $\OP \in \{\sq, \strictsq, \cor\}$, then $\varphi \equiv (\rho_1 \cin A) ~\OP~ (\rho_2 \cin A)$.
	\item If $\varphi' = \rho ~\OP$ with $\OP \in \{\ks, \strictks\}$, then $\varphi \equiv (\rho \cin A) ~\OP$.
	\item If $\varphi' = \rho \FILTER P$, then $\varphi \equiv (\rho \cin A' \FILTER P)[A' \rightarrow A]$, where $A'$ is a new label.
	\item If $\varphi' = \rho [B_1 \rightarrow B_2]$, then $\varphi \equiv \rho^{B_1 \rightarrow A'} \cin A [A' \rightarrow B_2]$, where $A'$ is a new label.
\end{itemize}
By using this equivalences, we can push down all labellings.

Now we prove that for every formula $\varphi' = \varphi_1 \strictsq \varphi_2$ with $\varphi_1$ and $\varphi_2$ in $\socel\cup\{\STRICT\}(\pset)$ there exists a formula $\psi'$ in $\socel\cup\{\STRICT\}(\pset)$ equivalent to $\varphi'$.
Here we assume that $\varphi'$ is such that all its labels are applied at the lower level.
Then the proof follows by doing induction over the length of $\varphi'$.
The base case is $\varphi' = R \strictsq T$ for some $R$, $T$.
Clearly $\varphi'$ is equal to $\psi' = \STRICT(R \sq T)$.
A similar argument works if $\varphi_1$ and $\varphi_2$ have the form $(R \cin A_1 \ldots \cin A_k)[B_1 \rightarrow B_1'] \ldots [B_l \rightarrow B_l']$.
For this case, $\varphi'$ would be equal to $\psi' = \STRICT(\varphi_1 \strictsq \varphi_2)$.
For the inductive step consider the following cases:
\begin{itemize}
	\item If $\varphi_1 = \rho [A \rightarrow B]$, then $\varphi' \equiv (\rho \strictsq (\varphi_2^{A\rightarrow A'}))[A\rightarrow B][A' \rightarrow A]$.
	By induction hypothesis, we know that there is a formula $\sigma$ in $\socel\cup\{\STRICT\}(\pset)$ equivalent to $\rho \strictsq (\varphi_2^{A\rightarrow A'})$.
	Thus, $\psi' = \sigma[A\rightarrow B][A' \rightarrow A]$ is equivalent to $\varphi'$.
	
	\item If $\varphi_1 = \rho_1 \sq \rho_2$, then $\varphi' \equiv \rho_1 \sq (\rho_2 \strictsq \varphi_2)$ and $\rho_2 \strictsq \varphi_2$ is smaller than $\varphi_1 : \varphi_2$.
	By induction hypothesis, $(\rho_2 \strictsq \varphi_2)$ has an equivalent formula $\sigma$ in $\socel\cup\{\STRICT\}(\pset)$.
	Thus, $\psi' = \rho_1 \sq \sigma$ is equivalent to $\varphi'$.
	
	\item If $\varphi_1 = \rho \FILTER P(A_1, \ldots , A_k)$, then $\varphi' \equiv ((\rho \strictsq \varphi_2^\rightarrow) \FILTER P(A_1, \ldots, A_k))^\leftarrow$, where $\psi^\rightarrow = (\psi^{A_1 \rightarrow A_1'}) \ldots ^{A_k \rightarrow A_k'}$ and $\psi^\leftarrow = \psi[A_1' \rightarrow A_1] \ldots [A_k' \rightarrow A_k]$ for some new dummy labels $A_1', \ldots, A_k'$.
	By induction hypothesis, $(\rho \strictsq \varphi_2^\rightarrow)$ has an equivalent formula $\sigma$ in $\socel\cup\{\STRICT\}(\pset)$.
	Thus, $\psi' = (\sigma \FILTER P(A_1, \ldots, A_k))^\leftarrow$ is equivalent to $\varphi'$.
	
	\item If $\varphi_1 = \rho_1 \cor \rho_2$, then $\varphi' \equiv (\rho_1 \strictsq \varphi_2) \cor (\rho_2 \strictsq \varphi_2)$.
	By induction hypothesis, both $(\rho_1 \strictsq \varphi_2)$ and $(\rho_2 \strictsq \varphi_2)$ have equivalent formulas $\sigma_1$ and $\sigma_2$, respectively, in $\socel\cup\{\STRICT\}(\pset)$.
	Thus, $\psi' = \sigma_1 \cor \sigma_2$ is equivalent to $\varphi'$.
	\item If $\varphi_1 = \rho \ks$, then $\varphi' \equiv (\rho \strictsq \varphi_2) \cor (\rho \ks \sq (\rho \strictsq \varphi_2))$.
	By induction hypothesis, $(\rho \strictsq \varphi_2)$ has an equivalent formula $\sigma$ in $\socel\cup\{\STRICT\}(\pset)$.
	Thus, $\psi' = \sigma \cor (\rho \ks \sq \sigma)$ is equivalent to~$\varphi'$.
\end{itemize}
The cases regarding the structure of $\varphi_2$ instead of $\varphi_1$ are analogous to the previous ones.

Then, every subformula $\varphi'$ of $\varphi$ is replaced by its equivalent $\psi'$ in a bottom-up fashion to ensure that the subformulas of $\varphi'$ are indeed in $\socel\cup\{\STRICT\}(\pset)$.
Finally, the resulting formula $\psi$ is in $\socel\cup\{\STRICT\}(\pset)$ and is equivalent to $\varphi$.

\subsubsection{\texorpdfstring{$\socel\cup\{\strictks\} \nsubseteq \socel\cup\{\STRICT\}$}{Lg}}

Now, for the second part we prove that there is a set $\pset$ containing a single binary SO predicate and a formula $\varphi\in\socel\cup\{\strictks\}(\pset)$ that is not equivalent to any formula in $\socel\cup\{\STRICT\}(\pset)$.
In particular, consider $\pset = \{\ext{P_=}\}$, where $P_=(x,y) := (x.a = y.a)$, and consider the formula:
$$
\varphi \; = \; ((A \sq E) \FILTER \ext{P_=}(A,B))\strictks
$$
in $\socel\cup\{\strictks\}(\pset)$.
We prove that there is no formula $\psi$ in $\socel\cup\{\STRICT\}(\pset)$ equivalent to $\varphi$.
For this we first give a few somehow useful definitions. Consider a stream $S$, a complex event $C$, two positions $i,j \in \supp(C)$ with $i < j$ and a constant $k \geq 1$.
Then, the result of pumping the fragment $[i,j]$ of $(S,C)$ $k$-times is a tuple $(S',C')$ where $S'$ and $C'$ are a stream and complex event, respectively, defined as follows.
Consider the factorization $C_1 \cdot C_2 \cdot C_3$ of $C$, where $C_1$ contains all positions in $C$ that are lower than $i$, $C_2$ contains all positions of $C$ between $i$ and $j$ (including them) and $C_3$ contains all positions of $C$ higher than $j$.
Likewise, consider the factorization $S_1 \cdot S_2 \cdot S_3$ of $S$ in the same way.
Now, if $C_2$ does not induce a contiguous interval (that is, if there is some $l$ such that $i < l < j$ and $l \notin \supp(C_2)$), then define $S'$ as $S_1 \cdot P_0 \cdot S_2 \cdot P_1 \cdot S_2 \cdot \ldots \cdot S_2 \cdot P_k \cdot S_3$, where $S_2$ is repeated $k$ times and each $P_i$ is an arbitrary finite stream.
On the other hand, if $C_2$ induces a contiguous interval, define $S'$ the same way but without the $P_i$, i.e., $S' = S_1 \cdot S_2 \cdot S_2 \cdot \ldots \cdot S_2 \cdot S_3$.
Similarly, define $C'$ as $C_1 \cdot C_2^1 \cdot C_2^2 \cdot \ldots \cdot C_2^k \cdot C_3'$, where each $C_2^i$ is the same complex event $C_2$ but with its values moved to fit the $i$-th occurrence of $S_2$ in $S'$.
For example, $C_2^2$ results after adding $|S_2|$ to all values of $C_2(A)$ for every label $A$ if it induces a contiguous interval, and adding $|P_0| + |S_2| + |P_1|$ else.
$C_3'$ is the same as $C_3$ but moved to fit $S_3$.
Notice that if $C$ induces a contiguous interval, then $C'$ also does.
Moreover, notice that no new events were added to the complex event, i.e. $C_S(A) = C'_{S'}(A)$ for every label $A$.

A formula $\rho$ in $\socel$ is said to be \textit{pumpable} if there exists a constant $N \in \bbN$ such that for every stream $S$, positions $p_1,p_2$ and complex event $C \in \sem{\rho}(S,p_1,p_2)$ with $|\supp(C)| > N$ there exist two positions $i,j \in \supp(C)$ with $i < j$ such that for every $k \geq 1$ it holds that $C' \in \sem{\rho}(S',p_1,p_2')$, where $(S',C')$ is the results of pumping the fragment $[i,j]$ of $(S,C)$ $k$-times and $p_2'$ is the position at which $S[p_2]$ ended at.
In the following lemma we show the utility of this property.

\begin{lemma}
	Every formula $\varphi$ in $\socel\cup\{\STRICT\}(\pset)$ is pumpable.
\end{lemma}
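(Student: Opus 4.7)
The proof proceeds by structural induction on $\varphi$, associating with each subformula a pumping constant $N_\varphi\in\bbN$. The base case $\varphi=R$ holds trivially with $N_R:=1$, since every witness in $\sem{R}$ has singleton support so the hypothesis $|\supp(C)|>N_R$ is never met. The cases $\rho\cin A$, $\rho[A\rightarrow B]$, $\rho\FILTER P(\bar A)$, and $\rho_1\cor\rho_2$ are straightforward inheritance from the induction hypothesis (with the maximum of the two constants in the disjunction case). In particular, the filter case crucially relies on the property, highlighted at the end of the pumping definition, that $C_S(\bar A)=C'_{S'}(\bar A)$: the collections of tuples assigned to each label are unchanged by pumping, so any SO-predicate satisfied by $C$ is still satisfied by $C'$.

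For $\varphi=\rho_1\sq\rho_2$ set $N_\varphi:=N_{\rho_1}+N_{\rho_2}+1$ and split any oversized witness as $C=C_1\cdot C_2$ with $C_l\in\sem{\rho_l}$; at least one factor exceeds its own threshold, so the IH applied to that factor yields positions $a,b$ whose pumping stays inside that factor while the other factor is merely shifted. For $\varphi=\rho\ks$, take $N_\varphi:=N_\rho+1$ and decompose $C=C_1\cdots C_m$ where each $C_l\in\sem{\rho}$. If some $|\supp(C_l)|>N_\rho$, invoke the IH on $\rho$ within that iteration and pump locally. Otherwise $m\geq 2$; one picks $a\in\supp(C_1)$ and $b\in\supp(C_2)$ so that pumping the fragment $[a,b]$ duplicates the block spanning $C_1\cdot C_2$, yielding further valid iterations in $\sem{\rho\ks}$ since each duplicated stream segment carries identical tuples.

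The delicate case is $\varphi=\STRICT(\rho)$. Set $N_\varphi:=N_\rho$. A witness $C\in\sem{\STRICT(\rho)}(S,p_1,p_2)$ satisfies $C\in\sem{\rho}(S,p_1,p_2)$ and has contiguous support. The IH on $\rho$ supplies positions $a,b\in\supp(C)$ with $a<b$ such that pumping the fragment $[a,b]$ produces $(S',C')$ with $C'\in\sem{\rho}(S',p_1,p_2')$. Because $\supp(C)$ is an interval, the sub--complex event occurring in the middle of the pumping decomposition is itself interval-inducing, so the definition invokes the no-filler variant $S'=S_1\cdot S_2\cdots S_2\cdot S_3$. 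Consequently $\supp(C')$ is again a contiguous interval, and combining this with the IH guarantee gives $C'\in\sem{\STRICT(\rho)}(S',p_1,p_2')$, as required.

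The main obstacle is the interaction of Kleene iteration with nested $\STRICT$ subformulas: pumping inside a $\rho\ks$-iteration could in principle inject tuples violating the interval requirement of a deeper $\STRICT(\psi)$ occurring in $\rho$. This is resolved by the fact that the IH on $\rho$ already picks $a,b$ compatibly, so the pumping definition invokes the no-filler variant exactly on those fragments demanded by strictness, while the outer Kleene star imposes no interval constraint since its unfolding $\rho\ks\equiv\rho\cor\rho\sq\rho\ks$ admits arbitrary gaps between iterations. A secondary technicality is ensuring $N_\varphi$ is large enough in the Kleene case when every iteration has single-position support; this is handled by pumping across two consecutive iterations via the filler variant of the pumping definition.
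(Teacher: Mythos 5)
Your proof is correct and follows the same overall strategy as the paper's: structural induction on $\varphi$, with the filter case resting on the preservation of the assigned tuple sets ($C_S(A)=C'_{S'}(A)$) under pumping, and the $\STRICT$ case resting on the observation that an interval-inducing pumped fragment triggers the no-filler variant and hence yields an interval-inducing result. The one place where you genuinely diverge is the Kleene case. The paper avoids any case analysis there: it simply pumps the fragment $[\min(C),\max(C)]$, i.e.\ the entire matched region, observes that each of the $k$ copies is a full match of $\psi\ks$ over its copy of the segment, and concludes via $\psi\ks\ks\equiv\psi\ks$ --- notably without ever invoking the induction hypothesis for that case. You instead decompose $C$ into iterations $C_1\cdots C_m$ and split into two sub-cases (pump inside an oversized iteration via the IH, or pump the block spanning the first two iterations when all iterations are small). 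Both arguments are sound; the paper's is shorter and more uniform, while yours makes the ``pump across an iteration boundary'' mechanism explicit, which is the intuition you then reuse correctly in your closing discussion of nested $\STRICT$ under iteration. Two cosmetic points: in the cross-iteration sub-case you should say explicitly that $a=\min(C_1)$ and $b=\max(C_2)$ (an arbitrary $a\in\supp(C_1)$, $b\in\supp(C_2)$ would cut iterations in half), and your constants $N_{\rho_1}+N_{\rho_2}+1$ and $N_\rho+1$ differ harmlessly from the paper's $N_1+N_2$ and $N_1$.
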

\begin{proof}
	Consider a formula $\varphi$ in $\socel\cup\{\STRICT\}(\pset)$.
	We prove the lemma by induction over the length of $\varphi$.
	First, consider the base case $R$.
	Then, by defining $N = 1$ we know that for every stream $S$ there is no complex event $C \in \sem{\varphi}(S)$ with $|\supp(C)| > N$, so the lemma holds.
	
	Now, for the inductive step consider first the case $\varphi = \psi_1 \FILTER P(X_1,\ldots,X_n)$.
	By induction hypothesis, we know that the lemma holds for $\psi_1$, thus let $N_1$ be its corresponding constant.
	Let $N$ be equal to $N_1$.
	Consider any stream $S$, positions $p_1,p_2$ and complex event $C \in \sem{\varphi}(S, p_1,p_2)$ with $|\supp(C)| > N$.
	By definition $C \in \sem{\psi_1}(S, p_1,p_2)$ and $(C_S(X_1),\ldots,C_S(X_n)) \in P$.
	By induction hypothesis, $\psi_1$ is pumpable, thus there exist positions $i,j \in \supp(C)$ with $i < j$ such that the fragment $[i,j]$ can be pumped.
	Moreover, consider that the result of pumping the fragment $[i,j]$  $k$ times is $(S',C')$, for an arbitrary $k$.
	Then, it holds that $C' \in \sem{\psi_1}(S'p_1,p_2')$.
	Also, because in the pumping it holds that $C_S(A) = C'_{S'}(A)$ for every $A$, then $(C'_{S'}(X_1),\ldots,C'_{S'}(X_n)) \in P$.
	Therefore, $C' \in \sem{\varphi}(S'p_1,p_2')$, thus $\varphi$ is pumpable.
	
	Consider now the case $\varphi = \psi_1 \cor \psi_2$.
	By induction hypothesis, we know that the property holds for $\psi_1$ and $\psi_2$, thus let $N_1$ and $N_2$ be the corresponding constants, respectively.
	Then, we define the constant $N$ as the maximum between $N_1$ and $N_2$.
	Consider any stream $S$, positions $p_1,p_2$ and complex event $C \in \sem{\varphi}(S,p_1,p_2)$ with $|\supp(C)| > N$.
	Either $C \in \sem{\psi_1}(S,p_1,p_2)$ or $C \in \sem{\psi_2}(S,p_1,p_2)$, so w.l.o.g. consider the former case.
	By induction hypothesis, $\psi_1$ is pumpable, thus there exist positions $i,j \in \supp(C)$ with $i < j$ such that the fragment $[i,j]$ can be pumped and the result $(C',S')$ satisfies $C' \in \sem{\psi_1}(S',p_1,p_2')$.
	This means that $C' \in \sem{\varphi}(S',p_1,p_2')$, therefore, $\varphi$ is pumpable.
	
	Now, consider the case $\varphi = \psi_1 \sq \psi_2$.
	By induction hypothesis, we know that the property holds for $\psi_1$ and $\psi_2$, thus let $N_1$ and $N_2$ be the corresponding constants, respectively.
	Then, we define the constant $N = N_1 + N_2$.
	Consider any stream $S$, positions $p_1,p_2$ and complex event $C \in \sem{\varphi}(S,p_1,p_2)$ with $|\supp(C)| > N$.
	This means that there exist complex events $C_1$ and $C_2$ with $C = C_1 \cdot C_2$ such that $C_1 \in \sem{\psi_1}(S,p_1,p')$ and $C_2 \in \sem{\psi_2}(S,p'+1,p_2)$, where $p' = \max(M_1)$.
	Moreover, either $|\supp(C_1)| > N_1$ or $|\supp(C_2)| > N_2$, so w.l.o.g. assume the former case.
	By induction hypothesis, $\psi_1$ is pumpable, thus there exist positions $i,j \in \supp(C_1)$ with $i < j$ such that the fragment $[i,j]$ can be pumped and the result $(C_1',S')$ satisfies $C_1' \in \sem{\psi_1}(S',p_1,p'+r)$, assuming that the pumping added $r$ new events.
	Define the complex event $C' = C_1' \cdot C_2'$, where $C_2'$ is the same as $C_2$ but adding $r$ to each position (so that $(C_2)_S = (C_2')_{S'}$).
	Then $C_1' \in \sem{\psi_1}(S',p_1,p'+r)$ and $C_2' \in \sem{\psi_2}(S,p'+r+1,p_2+r)$, thus $C' \in \sem{\varphi}(S',p_1,p_2+r)$, therefore, $\varphi$ is pumpable.
	
	Consider then the case $\varphi = \psi_1 \ks$.
	By induction hypothesis, we know that the lemma holds for $\psi_1$, thus let $N_1$ be its corresponding constant.
	Let the constant $N$ be equal to $N_1$.
	Consider any stream $S$, positions $p_1,p_2$ and complex event $C \in \sem{\varphi}(S,p_1,p_2)$ with $|\supp(C)| > N$.
	Then, consider $i = \min(C)$ and $j = \max(C)$, consider any $k \geq 1$ and let $(S',C')$ be the result of pumping the fragment $[i,j]$ of $(S,C)$ $k$ times.
	We prove now that $C' \in \sem{\varphi}(S',p_1,p_2')$ by induction over $k$.
	If $k = 1$ then, as defined in the definition of pumping, $S'$ has the form $S_1 \cdot P_0 \cdot S_2 \cdot P_1 \cdot S_3$, and $C'$ is the same as $C$ but adding $r$ to each position, where $r = |P_0|$.
	Clearly it holds that $C' \in \sem{\varphi}(S',p_1,p_2')$, since the modifications did not affect the complex event part of $S$.
	Now, consider that $k > 1$.
	Then, $S'$ has the form $S_1 \cdot P_0 \cdot S_2 \cdot P_1 \cdot S_2 \cdot \ldots \cdot S_2 \cdot P_k \cdot S_3$.
	Similarly, $C'$ is defined as $C_1 \cdot C_2^1 \cdot C_2^2 \cdot \ldots \cdot C_2^k \cdot M_3'$, where $C_1 = C'_3 = \emptyset$ and each $C_2^i$ is the same complex event $C$ but with its positions moved to fit the $i$-th occurrence of $S_2$ in $S'$.
	Consider that $r = |S_1 \cdot P_0 \cdot S_2|$.
	By induction hypothesis, we can say that $C_2' \in \sem{\varphi}(S',r+1,p_2')$ where $C_2' = C_2^2 \cdot \ldots \cdot C_2^k$ (notice that we consider it from position $r+1$ because there is no lower position in the complex event).
	Also, it is easy to see that this implies $C_2' \in \sem{\varphi \ks}(S',r,p_2')$, which is something we will need next.
	Moreover, it holds that $C_2^1 \in \sem{\varphi}(S',p_1,r)$, because it represents the same complex event as the original one $C$.
	Then, because $C' = C_2^1 \cdot C_2'$, it follows that $C' \in \sem{\varphi \sq \varphi \ks}(S',p_1,p_2')$ which also implies that $C' \in \sem{\varphi \ks}(S',p_1,p_2')$.
	Since $\varphi \ks = \psi_1 \ks \ks \equiv \psi_1 \ks = \varphi$, it holds that $C' \in \sem{\varphi}(S',p_1,p_2')$.
	
	Now, consider the case $\varphi = \STRICT(\psi_1)$.
	By induction hypothesis, we know that the lemma holds for $\psi_1$, thus let $N_1$ be its corresponding constant.
	Let the constant $N$ be equal to $N_1$.
	Consider any stream $S$, positions $p_1,p_2$ and complex event $C \in \sem{\varphi}(S,p_1,p_2)$ with $|\supp(C)| > N$.
	Then, by definition $C \in \sem{\psi_1}(S,p_1,p_2)$, and by induction hypothesis there exist positions $i,j \in \supp(C)$ such that the fragment $[i,j]$ can be pumped and the result $(S',C')$ satisfies $C' \in \sem{\psi_1}(S,p_1,p_2')$.
	Notice that $C$ induces a contiguous interval because of the definition of $\STRICT$, therefore $C'$ also induces a contiguous interval, thus $C' \in \sem{\varphi}(S,p_1,p_2')$.
	
	Finally, consider the case $\varphi = \psi_1[A \rightarrow B]$.
	By induction hypothesis, we know that the lemma holds for $\psi_1$, thus let $N_1$ be its corresponding constant.
	Let the constant $N$ be equal to $N_1$.
	Consider any stream $S$, positions $p_1,p_2$ and complex event $C \in \sem{\varphi}(S,p_1,p_2)$ with $|\supp(C)| > N$.
	Then, by definition there exists $D \in \sem{\psi_1}(S,p_1,p_2)$ such that $C(B) = D(A) \cup D(B)$ and $C(A)= \emptyset$.
	By induction hypothesis there exist positions $i,j \in \supp(D)$ such that the fragment $[i,j]$ can be pumped and the result $(S',D')$ satisfies $D' \in \sem{\psi_1}(S,p_1,p_2')$.
	Since $\supp(C) = \supp(D)$, one can see that pumping $[i,j]$ in $(S,C)$ is the same as pumping $[i,j]$ in $(S,D)$ with the only difference that the results $(S',C')$ and $(S,D')$ satisfy $C'(B) = D'(A) \cup D'(B)$ and $C'(A) = \emptyset$.
	Then it follows that $C' \in \sem{\varphi}(S,p_1,p_2')$.
	A similar argument works for the case $\varphi = \psi_1 \cin A$.
\end{proof}
Now, we show that there is no formula $\psi$ in $\socel\cup\{\STRICT\}(\pset)$ equivalent to $\varphi$ by proving that such formula is not pumpable.
By contradiction, assume that $\psi$ exists, and let $N$ be its constant.
Consider then the stream:
\[
S = \begin{array}{cccc}
\begin{array}{cccccc}
A & L & E & A & L & E \\
1 & 1 & 1 & 2 & 2 & 2
\end{array} & \cdots & \begin{array}{ccc}
A & L & E \\
N & N & N
\end{array} & \cdots
\end{array}
\]
Where the first and second lines are the type and $a$ attribute of each event, respectively, and consider the complex event $C$ with $C(A) = \{1, 4, 7, \ldots , 3N - 2\}$ and $C(E) = \{3, 6, 9, 3N\}$.
Now, let $i,j \in \supp(C)$ be any two positions of the complex event, which define the partition $C_1 \cdot C_2 \cdot C_3$, and name $t_1 = S[i]$ and $t_2 = S[j]$.
We will use $k = 2$, i.e., repeat section $S[i,j]$ two times, and use the 1-tuple stream $U(0)$ as the arbitrary streams $P_0$ and $P_1$ to get the resulting stream $S'$ and the corresponding complex event $C'$.
We will analyse the following possible cases: $\type(t_1) = \type(t_2)$; $\type(t_1) = A$ and $\type(t_2) = E$; $\type(t_1) = E$ and $\type(t_2) = A$.
In the first case the resulting $C'$ is a complex event with two consecutive tuples of the same type, which contradicts the original formula $\varphi$.
In the second case $C_2$ is not a contiguous interval so the complex event $C'$ would fail to ensure that the $A$ tuple following $t_2$ is placed right after it (because of the tuple $U(0)$ between), thus contradicting the $\strictks$ property of $\varphi$.
In the third case it is clear that the last $A$ in the first repetition of $[i,j]$ and the first $E$ in the second repetition (i.e., $S[j]$ and $S[j+2]$) do not satisfy the $\sFILTER$ condition because $S[j].a > S[j+2].a$.
Finally, the formula $\psi$ cannot exist.


\end{document}